\documentclass{article}
\pdfpagewidth=8.5in
\pdfpageheight=11in
\usepackage{ijcai_arxiv}

\usepackage{times}
\usepackage{soul}
\usepackage{url}
\usepackage[hidelinks]{hyperref}
\usepackage[utf8]{inputenc}
\usepackage[small]{caption}
\usepackage{booktabs}
\usepackage[switch]{lineno}


\usepackage{algorithm}
\usepackage{algorithmicx}%
\usepackage{algpseudocode}%

\usepackage{comment}

\usepackage{graphicx}%
\usepackage{multirow}%
\usepackage{amsmath,amsfonts,amssymb}
\usepackage{amsthm}
\usepackage{mathrsfs}
\usepackage[title]{appendix}%
\usepackage{xcolor}%
\usepackage{textcomp}%
\usepackage{manyfoot}%
\usepackage{booktabs}%

\usepackage{graphicx}

\newtheorem{theorem}{Theorem}
\newtheorem{lemma}[theorem]{Lemma}

\theoremstyle{plain}

\theoremstyle{definition}
\newtheorem{definition}[theorem]{Definition}

\theoremstyle{remark}

\definecolor{darkcyan}{rgb}{0.0, 0.55, 0.55}
\newcommand{\obj}{\texttt{Obj}}
\newcommand{\opt}{\mathsf{OPT}}
\newcommand{\avg}{\mathsf{AVG}}

\renewcommand{\kappa}{\mathcal{K}}
\renewcommand{\epsilon}{\varepsilon}
\newcommand{\col}{\texttt{col}}
\newcommand{\cost}{\texttt{cost}}
\renewcommand{\S}{\mathcal{S}}
\newcommand{\dist}{\texttt{dist}}


\pdfinfo{
/TemplateVersion (IJCAI.2025.0)
}

\title{Improved Rank Aggregation under Fairness Constraint\footnote{This project received funding from an MoE AcRF Tier 2 grant (MOE-T2EP20221-0009), an MoE AcRF Tier 1 grant (T1 251RES2303), and a Google South \& South-East Asia Research Award.}}
\author{
Diptarka Chakraborty$^1$\and
Himika Das$^2$\and
Sanjana Dey$^3$\And
Alvin Hong Yao Yan$^1$
\\
\affiliations
$^1$National University of Singapore\\
$^2$TU Wien\\
$^3$UMONS\\
}

\begin{document}

\maketitle

\begin{abstract}
Aggregating multiple input rankings into a consensus ranking is essential in various fields such as social choice theory, hiring, college admissions, web search, and databases. A major challenge is that the optimal consensus ranking might be biased against individual candidates or groups, especially those from marginalized communities. This concern has led to recent studies focusing on fairness in rank aggregation. The goal is to ensure that candidates from different groups are fairly represented in the top-$k$ positions of the aggregated ranking.

We study this fair rank aggregation problem by considering the Kendall tau as the underlying metric. While we know of a polynomial-time approximation scheme (PTAS) for the classical rank aggregation problem, the corresponding fair variant only possesses a quite straightforward 3-approximation algorithm due to Wei et al., SIGMOD'22, and Chakraborty et al., NeurIPS'22, which finds closest fair ranking for each input ranking and then simply outputs the best one. 

In this paper, we first provide a novel algorithm that achieves $(2+\epsilon)$-approximation (for any $\epsilon > 0$), significantly improving over the 3-approximation bound. Next, we provide a $2.881$-approximation fair rank aggregation algorithm that works irrespective of the fairness notion, given one can find a closest fair ranking, beating the 3-approximation bound. We complement our theoretical guarantee by performing extensive experiments on various real-world datasets to establish the effectiveness of our algorithm further by comparing it with the performance of state-of-the-art algorithms.
\end{abstract}

\section{Introduction}
Ranking a list of alternatives to prioritize desirable outcomes among a set of candidates is ubiquitous across various applications, such as hiring, admissions, awarding scholarships, and approving loans. When multiple voters provide preference orders or rankings on candidates, which may conflict, the task of producing a single consensus ranking is the classical \emph{rank aggregation} problem. This problem is central to many fields, from social choice theory~\cite{brandt2016} to information retrieval~\cite{harman1992}. Its origins trace back to the 18th century~\cite{borda1781memoire,condorcet1785essai}, and it has been extensively studied from a computational standpoint over the past few decades~\cite{dwork2001,fagin2003,gleich2011,azari2013}. When formulated as an optimization problem, one of the most popular versions seeks to find a consensus ranking that minimizes the sum of distances to the input rankings~\cite{kemeny1959,young1988,young1978,dwork2001,ailon2008}.

In this paper, we address the rank aggregation problem with an additional fairness constraint on the final consensus ranking. Ranking algorithms are commonly used to select the top candidates for various opportunities and services, such as admissions or scholarships in the education system, job hiring, or the allocation of medical care during emergencies like pandemics. In today's context, it is essential for any ranking algorithm to produce a fair ranking to ensure equitable selection and to avoid the risk of reinforcing extreme ideologies or stereotypes about marginalized communities based on sensitive attributes such as gender, race, or caste~\cite{costello2016views,kay2015unequal,bolukbasi2016man}. For example, systems like job and education reservations in India~\cite{borooah2010social} or affirmative action-based university admissions in the USA~\cite{deshpande2005affirmative} have been implemented to address under-representation and discrimination.

We consider the notion of proportionate fairness, also known as $p$-fairness~\cite{baruah1996proportionate}, which ensures that each of the protected classes within the population is fairly represented in the top ``most relevant" (top-$k$) positions of the final consensus ranking. The study of proportionate fairness in the context of rank aggregation was first explored in~\cite{wei2022rank} and~\cite{chakraborty2022}. In this paper, we use the following definition of \emph{fair ranking} from~\cite{chakraborty2022}.\footnote{Note, the definition used in~\cite{wei2022rank}, though similar, is slightly restrictive.}

\begin{definition}[Fair Ranking]
    \label{def:fair-ranking}
    Consider a partition of $d$ candidates into $g$ groups $G_1,\cdots, G_g$. For each group $G_i$ ($i \in [g]$), let us consider two parameters $\alpha_i,\beta_i \in [0,1]$. For $\bar{\alpha} = \left( \alpha_1,\cdots,\alpha_g \right)$, $\bar{\beta} = \left( \beta_1,\cdots,\beta_g \right)$, and $k \in [d]$, a ranking $\pi$ (on $d$ candidates) is said to be \emph{$(\bar{\alpha},\bar{\beta})$-$k$-fair} if for each $G_i$:
    \begin{itemize}
        \item \textit{Minority Protection}: The top-$k$ positions $\pi(1),\ldots, \pi(k)$ contain at least $\lfloor \alpha_i \cdot k \rfloor$ candidates from $G_i$, and 
        \item \textit{Restricted Dominance}: The top-$k$ positions $\pi(1),\ldots, \pi(k)$ contain at most $\lceil \beta_i \cdot k \rceil$ candidates from $G_i$.
    \end{itemize}
\end{definition}
It is important to note that other notions of fair ranking, such as top-$k$ statistical parity, have been considered previously~\cite{kuhlman2020rank}. However, this approach is quite restrictive and does not satisfy the criteria for proportionate fairness. For a concrete example demonstrating why proportionate fairness is a much stronger concept than statistical fairness, see~\cite{wei2022rank}.

Given a set of $n$ rankings provided by voters on $d$ candidates, the \emph{fair rank aggregation} problem asks to find a fair consensus ranking that minimizes the sum of distances to the input rankings. Various distance measures have been considered in the literature to capture the dissimilarity between pairs of rankings, with the \emph{Kendall tau distance} -- which counts the number of pairwise disagreements between two rankings -- being one of the most popular. \cite{wei2022rank} and~\cite{chakraborty2022} proposed the following simple algorithm: Find a closest fair ranking for each input ranking and then output the one with the minimum sum of distances. \cite{wei2022rank} and~\cite{chakraborty2022} provided a 2-approximation and exact algorithm respectively for the \emph{closest fair ranking} problem. A straightforward application of the triangle inequality shows that this simple strategy only achieves a $3$-approximation for the fair rank aggregation. Moreover, since the final output ranking is close to one of the input rankings, it is essentially influenced by the preference order of a single voter. To date, there is no improvement over this $3$-factor approximation guarantee. It is also worth noting that without any fairness constraint, the classical rank aggregation problem (known to be \texttt{NP}-hard~\cite{bartholdi1989voting,Dwork2002RankAR}) has an $(1+\epsilon)$-approximation algorithm for any $\epsilon > 0$~\cite{mathieu2009rank}.

\subsection{Our contribution} The main contribution of our paper is the development of a new algorithm for the fair rank aggregation problem under proportional fairness. Our algorithm achieves a $(2+\epsilon)$-approximation, for any $\epsilon >0$. To demonstrate our result, we design a two-stage procedure. First, we introduce a new problem of partitioning a colored graph in a \emph{colorful} manner while minimizing the cost of ``backward" edges across the cut. We develop a novel algorithm to solve this problem exactly when the input graph is a weighted tournament that satisfies certain natural properties on edge weights. This problem can be thought of as a variant of the \emph{constraint cut} problem on a special graph class, which is, in general, NP-hard. Different cut problems find applications in other fairness questions (e.g.,~\cite{dinitz2022fair}), and thus, our result on the variant of the constraint cut problem could be of independent interest. Next, we construct a weighted tournament graph from the fair rank aggregation instance. We then apply the solution of an optimal colorful partitioning of that tournament and use the known PTAS for the rank aggregation problem on both partitions separately to produce a fair ranking over the entire set of candidates. Finally, we argue that the output ranking attains $(2+\epsilon)$-approximation for any $\epsilon >0$.

We implement our algorithm and compare it against baselines on multiple standard datasets by varying different parameters. Our results show that the output of our algorithm achieves a significantly better objective value (i.e., the sum of distances) compared to state-of-the-art algorithms for fair rank aggregation. Furthermore, although our theoretical analysis guarantees only a $(2+\epsilon)$-approximation for our proposed algorithm, it consistently performs much better in practice -- the output is almost always very close to an optimal solution.

Our next contribution is a generic fair rank aggregation algorithm that achieves a $2.881$-approximation. We emphasize that our algorithm works \emph{irrespective} of the fairness notion under consideration as long as there is an efficient procedure to compute a closest fair ranking for any input ranking (even an approximately close fair ranking procedure suffices, albeit with a worse approximation factor for the aggregation problem). Thus, our algorithm provides an approximation guarantee not only with respect to a specific type of fairness but also concerning any plausible definition of fairness. As an immediate corollary, we achieve $2.881$-approximation for the fair rank aggregation under a stronger fairness notion like \emph{block fairness} introduced by~\cite{chakraborty2022}. Further, our generic algorithm works even if the group information -- which candidate belongs to which group -- is not known (e.g., as in \emph{robust fairness}~\cite{kliachkin2024fairness}). Ours is the first generic approximation algorithm that breaks below the straightforward 3-factor bound obtained by the naive use of the triangle inequality. We present our generic (deterministic) algorithm in Section~\ref{sec:improvedfra}, whose running time can be improved significantly using random sampling and \emph{coreset} construction; however, such a randomized procedure requires a much more intricate analysis as detailed in Section~\ref{sec:faster-generic-algo}.

\subsection{Other related works}
The rank aggregation problem without any fairness constraint has also been studied under different other metrics, including Spearman footrule~\cite{Dwork2002RankAR}, Ulam~\cite{chakraborty2021approximating,chakraborty2023clustering}. \cite{chakraborty2022} considered the rank aggregation problem with the fairness constraint under both Spearman footrule and Ulam metric, and showed a $3$-approximation guarantee. For the Ulam metric, they in fact provided a $(3-\delta)$-approximation result, for some constant $\delta \le 2^{-30} $, albeit only for a constant number of groups.

Apart from the rank aggregation problem, other ranking problems have also been studied under fairness. E.g.,~\cite{celis2018ranking} explored the problem of finding the closest proportional fair ranking to a given ranking for metrics such as Bradley-Terry, DCG, and Spearman footrule. Ensuring \emph{robust fairness} in rankings has also been studied~\cite{kliachkin2024fairness}, where given an input ranking, the goal is to find a close ranking that is more fair, even when the protected attributes are not known. We emphasize that such an algorithm may not necessarily be able to find a good aggregate ranking.

The rank aggregation problem is essentially the 1-clustering (1-median) problem, where the input is a set of rankings. The past few years have witnessed a surge in research on fair clustering~\cite{huang2019coresets,chen2019proportionally,bera2019fair,backurs2019scalable}. However, we must note that the notion of fairness in the general clustering context differs from that in the rank aggregation.

\section{Preliminaries}
\label{sec:prelims}
\paragraph{Notations.} For any $n \in \mathbb{N}$, let $[n]$ denote the set $\{1, 2, \cdots , n\}$. We refer to the set of all rankings (or permutations) over $[d]$ by $\S_d$. We consider any permutation $\pi \in \S_d$ as a sequence of numbers $\pi(1), \pi(2), \ldots , \pi(d)$ where the rank of $\pi(i)$ is $i$. For any two elements $a, b \in [d]$ and a permutation $\pi \in \S_d$, we use the notation $a \prec_{\pi} b$ to denote that the rank of $a$ is less than that of $b$ in $\pi$. 

\paragraph{Distance metric and fair rank aggregation.} In this paper, we consider the Kendall tau distance to measure the dissimilarity between any two rankings or permutations.

\begin{definition}[Kendall tau distance]
Given two permutations $\pi_1, \pi_2 \in \S_d$, the \emph{Kendall tau distance} between them, denoted by $\kappa(\pi_1, \pi_2)$, is the number of pairwise disagreements between $\pi_1$ and $\pi_2$, i.e.,
    \[\kappa(\pi_1, \pi_2) := \left| \left\{(a, b) \in [d] \times [d] \mid \, a \prec_{\pi_1} b \text{\hspace{2mm}but\hspace{2mm}} b \prec_{\pi_2} a\right \}\right |\]
\end{definition}

Next, we define the fair rank aggregation problem.

\begin{definition} (Fair Rank Aggregation) Given a set $S $ of rankings over $d$ candidates that are partitioned into $g$ groups $G_1,\cdots, G_g$, $\bar{\alpha} = \left( \alpha_1,\cdots,\alpha_g \right) \in [0,1]^g$, $\bar{\beta} = \left( \beta_1,\cdots,\beta_g \right) \in [0,1]^g$, and $k \in [d]$, the \emph{fair rank aggregation} problem asks to find a $(\bar{\alpha},\bar{ \beta})$-$k$-fair ranking $\sigma \in \S_d$ that minimizes the objective function $\obj(S, \sigma) := \sum_{\pi \in S} \kappa(\pi, \sigma)$.
\end{definition}

Note that in the above definition, the minimization is over the set of all $(\bar{\alpha},\bar{ \beta})$-$k$-fair rankings in $\S_d$. When the set $S$ is clear from context, for brevity, we simply refer to the objective value as $\obj(\sigma)$. Let $\sigma^*$ be an optimal fair aggregated rank, and $\opt(S) := \obj(S, \sigma^*)$. We call a $(\bar{\alpha},\bar{ \beta})$-$k$ fair ranking $\tilde{\sigma}$ a \emph{$c$-approximate fair aggregate ranking} (for some $c \ge 1$) for the set $S$ iff $\obj(S, \tilde{\sigma}) \le c \cdot \opt(S)$.

\paragraph{Weighted tournament.}
A weighted tournament $T = (V,A)$ is a directed graph where for every pair of vertices $u,v \in V$, both the edges $(u,v)$ and $(v,u)$ are present with some non-negative weight. It is well-known that the rank aggregation problem can be cast as \emph{feedback arc set} problem on a weighted tournament (see~\cite{ailon2008}), where the corresponding weighted tournament $T = (V,A)$ with weight function $w : A\to \mathbb{R}$ satisfies the following:

\begin{itemize}
    \item \textit{Probability Constraints:}
    \begin{equation}
        \label{eq:prob} \forall i, j \in V, \; w(i, j) + w(j, i) = 1, 
    \end{equation}
    
    \item \textit{Triangle Inequality:} 
    \begin{equation}
        \label{eq:triangle} \forall i, j, k \in V, \; w(i, j) \leq w(i, k) + w(k, j).
    \end{equation}
\end{itemize}

For a weighted tournament $T = (V, A)$ with weight function $w : A \to \mathbb{R}$, for any \(v \in V\), the \emph{in-neighborhood} of \(v\) is defined as \(N(v) := \{u \in V \mid (u,v) \in A\}\), and the \emph{weighted in-degree} of \(v\) is defined as \(\delta(v) := \sum_{u \in N(v)} w(u, v)\).

\section{Colorful Bi-partition on Tournaments} \label{sec:color-partition}

In this section, we first introduce the \emph{colorful bi-partition} problem defined on a directed weighted graph with colored vertices. Then, we provide an algorithm to solve that problem when the input graph is a tournament that satisfies both the probability constraint and the triangle inequality. In the next section, we discuss how to use the solution of the colorful bi-partition problem on tournaments to get an approximation algorithm for the fair rank aggregation problem.

Consider a weighted directed graph $G=(V,A)$ with a weight function $w:A \to \mathbb{R}$ defined on arcs/edges and a color function $\col: V \to [g]$ (for some integer $g \ge 1$) defined on vertices, and $\bar{\alpha} = \left( \alpha_1,\cdots,\alpha_g \right) \in [0,1]^g$, $\bar{\beta} = \left( \beta_1,\cdots,\beta_g \right) \in [0,1]^g$. We call a subset $S \subseteq V$ \emph{$(\bar{\alpha},\bar{\beta})$-colorful} if for each color $i \in [g]$, $S$ contains at least $\lfloor \alpha_i \cdot |S| \rfloor$ and at most $ \lceil \beta_i \cdot |S| \rceil $ many vertices of color $i$, i.e., 
\[
\forall {i \in [g]},\; \lfloor \alpha_i \cdot |S| \rfloor \le \left| S \cap \col^{-1}(i) \right| \le \lceil \beta_i \cdot |S| \rceil .
\]

\begin{definition}[Colorful Bi-partition Problem]
    \label{def:colorful-partition}
    Given a weighted colored directed graph $G=(V,A)$ with $w:A \to \mathbb{R}$, $\col: V \to [g]$ (for some integer $g \ge 1$), $\bar{\alpha} \in [0,1]^g$, $\bar{\beta} \in [0,1]^g$, and an integer $k$, the \emph{colorful bi-partition} problem asks to find a partitioning of $V$ into (disjoint) sets $L$ and $V \setminus L$ such that 
    (i) $|L| = k$, and
    (ii) $L$ is $(\bar{\alpha},\bar{\beta})$-colorful;
while minimizing the \emph{cost of the partition} $(L,V\setminus L)$ defined as the total weights of the arcs going from $V \setminus L$ to $L$, i.e., $\cost(L, V\setminus L):=\sum_{(y, x) \in A : x \in L, y \in V\setminus L} w(y,x)$.
\end{definition}
Note, in the above problem, we want only $L$ to be colorful, so $V\setminus L$ need not be colorful. Since specifying the set $L$ suffices to identify the partition $(L, V \setminus L)$, for brevity, we use $\cost(L)$ to denote $\cost(L , V \setminus L)$.

Next, we provide a (deterministic) algorithm to solve the colorful bi-partition problem on tournaments, satisfying both the probability constraint (Equation~\ref{eq:prob}) and the triangle inequality constraint (Equation~\ref{eq:triangle}).

\begin{theorem}
    \label{thm:bi-partition}
    There is an algorithm that, given a weighted colored tournament $T=(V,A)$ with $w:A \to \mathbb{R}$, $\col: V \to [g]$ (for some integer $g \ge 1$), satisfying both the probability and the triangle inequality constraints, and an integer $k$, $\bar{\alpha} \in [0,1]^g$, $\bar{\beta} \in [0,1]^g$, finds an optimal colorful bi-partition in time $O(|A| + |V|\log|V|)$.
\end{theorem}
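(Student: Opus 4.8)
The plan is to reduce the colorful bi-partition problem on such a tournament to a simple constrained selection problem by exploiting the probability constraint. The crux is the observation that, once $|L|=k$ is fixed, the cost $\cost(L)$ depends on $L$ only through the weighted in-degrees of its vertices, up to an additive constant. First I would rewrite $\cost(L) = \sum_{x \in L} \sum_{y \in V \setminus L} w(y,x)$, which is valid precisely because $T$ is a tournament, so every $y \neq x$ is an in-neighbor of $x$. Writing the full weighted in-degree $\delta(x) = \sum_{y \neq x} w(y,x)$ and subtracting off the contribution coming from inside $L$, one obtains
\[
\cost(L) \;=\; \sum_{x \in L} \delta(x) \;-\; \sum_{\{x,y\} \subseteq L} \big( w(x,y) + w(y,x) \big).
\]
By the probability constraint (Equation~\ref{eq:prob}), each unordered pair inside $L$ contributes exactly $1$, so the subtracted term equals $\binom{k}{2}$, a constant independent of the choice of $L$. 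Hence minimizing $\cost(L)$ over all valid partitions is equivalent to minimizing $\sum_{x \in L} \delta(x)$. Notably, this reduction uses only the probability constraint, not the triangle inequality.

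The residual task is thus: select exactly $k$ vertices, with per-color counts $c_i := |L \cap \col^{-1}(i)|$ satisfying $\lfloor \alpha_i k \rfloor \le c_i \le \lceil \beta_i k \rceil$ and $\sum_i c_i = k$, so as to minimize $\sum_{x \in L} \delta(x)$. For any fixed count vector, the optimal choice within a color is clearly to take its $c_i$ vertices of smallest in-degree, so only the counts remain to be decided. Since the marginal cost of increasing $c_i$ (i.e., pulling in the next-smallest-in-degree vertex of color $i$) is nondecreasing, the objective is separable and convex in the counts, and we are minimizing it subject to box constraints together with the single budget constraint $\sum_i c_i = k$. This is the classical separable convex resource-allocation problem: I would initialize $c_i = \lfloor \alpha_i k \rfloor$ (which is feasible exactly when the instance is), and then greedily assign the remaining $k - \sum_i \lfloor \alpha_i k \rfloor$ units to whichever color offers the cheapest available marginal increment, never exceeding its cap $\lceil \beta_i k \rceil$. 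Optimality of this greedy follows from the standard exchange argument for separable convex resource allocation.

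For the running time, computing all $\delta(x)$ takes $O(|A|)$ by a single pass over the arcs, and sorting the vertices of each color by in-degree costs $O(|V| \log |V|)$ overall. The greedy allocation can then be realized by collecting, for each color $i$, its optional vertices (those ranked $\lfloor \alpha_i k \rfloor + 1$ through $\lceil \beta_i k \rceil$ by in-degree) and selecting the $k - \sum_i \lfloor \alpha_i k \rfloor$ globally cheapest among them; because in-degree increases with rank within a color, this automatically respects the per-color caps and corresponds to a valid count vector, costing another $O(|V| \log |V|)$. Feasibility (namely $\sum_i \lfloor \alpha_i k \rfloor \le k \le \sum_i \lceil \beta_i k \rceil$ and each color possessing enough vertices) is checkable within the same bound, giving the claimed $O(|A| + |V| \log |V|)$. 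I expect the cost-to-in-degree reduction to be the one genuinely nontrivial step; once it is in hand, correctness and the time bound both reduce to standard selection and resource-allocation arguments.
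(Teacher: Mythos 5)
Your proposal is correct, and the algorithm you arrive at is essentially the paper's Algorithm~\ref{alg:min_bi_partition} (mandatory $\lfloor \alpha_i k\rfloor$ cheapest per color, then fill greedily by in-degree subject to the $\lceil \beta_i k\rceil$ caps), but your correctness argument takes a genuinely different and more direct route. The paper proves optimality by an exchange argument: Lemma~\ref{lemma:optimal-bipartition-swap} shows that swapping $x \in L^*$ for $y \in V\setminus L^*$ with $\delta(y) \le \delta(x)$ cannot increase the cost, and the theorem is then obtained by a chain of such swaps transforming an optimal colorful bi-partition into the algorithm's output. Your identity $\cost(L) = \sum_{x \in L}\delta(x) - \binom{k}{2}$, valid for any $k$-subset $L$ under the probability constraint alone, subsumes that lemma outright --- a single swap changes the objective by exactly $\delta(y) - \delta(x)$, which is precisely the quantity the paper's lemma computes at the end of its proof --- and it converts the whole problem into minimizing $\sum_{x\in L}\delta(x)$ under the cardinality and per-color box constraints, at which point optimality of the greedy is the standard separable-convex resource-allocation argument (your observation that the optional in-degrees within each color are the nondecreasing marginal increments is the one step needed to equate ``take the globally cheapest optional vertices'' with the marginal greedy, and you do make it). What your route buys is a cleaner global explanation of why sorting by in-degree is the right criterion, and it makes explicit that only the probability constraint, not the triangle inequality, is used --- a fact the paper's proof also reflects but never states. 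The exchange argument buys nothing extra here; it is a more local path to the same conclusion.
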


\begin{algorithm}[ht]
\caption{\textsc{Colorful Bi-Partition}}
\label{alg:min_bi_partition}
\begin{algorithmic}[1]
\Procedure{ColBipartition}{$T=(V,A), \bar{\alpha}, \bar{\beta}, k$}
    \State Initialize an empty set $L$
    \For{$i \gets 1$ to $g$}
        \State Sort vertices of color $i$, i.e., in $\col^{-1}(i)$, in non-decreasing order by their weighted in-degrees and process them in that sorted order
        \State $count_i \gets \lfloor \alpha_i \cdot  k \rfloor$
        \For{each vertex $v \in  \col^{-1}(i)$}
            \If{$count_i > 0$}
                \State Add vertex $v$ to set $L$
                \State $count_i \gets count_i - 1$
            \EndIf
        \EndFor
    \EndFor
    \State Sort remaining vertices ($V \setminus L$) collectively in non-decreasing order by their weighted in-degrees and process them in that sorted order 
    \For{each $v \in V\setminus L$} 
        \If{$|L| \leq k$}
           \State $i \gets \col(v)$
           \If{ $|L \cap \col^{-1}(i)| \leq \lceil\beta_i \cdot k \rceil$}
               \State Add vertex $v$ to set $L$
           \EndIf
        \EndIf
    \EndFor
    \\
    \Return The partition $(L, V\setminus L)$
\EndProcedure
\end{algorithmic}
\end{algorithm}

\paragraph{Description of the algorithm. }
Our colorful bi-partition algorithm (Algorithm~\ref{alg:min_bi_partition}) for tournament $T$ works as follows: 
\begin{description}
    \item 1. \textbf{Sorting vertices of each color:} For each color $i \in [g]$, arrange the vertices in non-decreasing order based on their weighted in-degrees (Line 4).
    \item 2. \textbf{Initial selection in $L$:} To select the vertices in $L$: Take the first $\lfloor \alpha_i \cdot k \rfloor $ vertices according to the sorted order of the vertices of color $i$, $\forall i \in [g]$ (Lines 5 -- 11).
    \item 3. \textbf{Filling up the set $L$:} Order all the remaining vertices collectively in the non-decreasing order of their weighted in-degrees (Line 13), and continue adding elements to $L$ as per this sorted order. If adding an element causes the number of elements from any color $i$ to exceed the $\lceil \beta_i \cdot k \rceil$ bound, skip the element and proceed to the next in the collective ordering (Lines 14 -- 21).
    \item 4. \textbf{The final bi-partition:} Once $L$ is filled, the remaining $V \setminus L$ forms the other set of the bi-partition (Line 22).
\end{description}

\paragraph{Running time of Algorithm~\ref{alg:min_bi_partition}. } The running time of the algorithm depends on two main factors: the total in-degree calculation of each vertex and the sorting of the vertices according to their respective in-degrees. The rest of the steps in the algorithm look at the vertices at least once and at most twice to get the bi-partition. To calculate the weighted in-degree of each vertex, we need to calculate the sum of the weights of the edges incoming to that vertex. That takes a total time of $O(|A|)$ as we need to scan all the edges in the graph. We need $O(|V|\log|V|)$ time to sort the vertices by their weighted in-degrees. The rest of the algorithm needs a linear scan to get a bi-partition, which takes $O(|V|)$ time. In our tournament $T$, we have $|V| = d$ and $|A| = d(d-1)$. Thus, the running time is $O(d^2)$.

\paragraph{Approximation guarantee. }
Let us now introduce a few notations that are useful for the analysis. Consider a graph $T = (V,A)$. For two sets \(X,Y \subseteq V\), let \(A(X,Y)\) denote the set of arcs from \(X\) to \(Y\) in $T$, i.e., 
\[
A(X,Y) : = \left\{ (x,y) \in A \mid x \in X, y \in Y \right\}.
\]
When $X$ is the singleton set $\{x\}$, for notational convenience, we use $A(x,Y)$ to denote $A(X,Y)$. For any subset of arcs $A' \subseteq A$, we use $w(A')$ to denote the sum of weights of the arcs in $A'$, i.e., $w(A') : = \sum_{(x,y)\in A'} w(x,y)$. Note, for any partition $(P,V \setminus P)$, $\cost(P) = w(A(V \setminus P,P))$.
\begin{center}
    \begin{figure}[h]
        \centering
        \includegraphics[width = 0.8\textwidth]{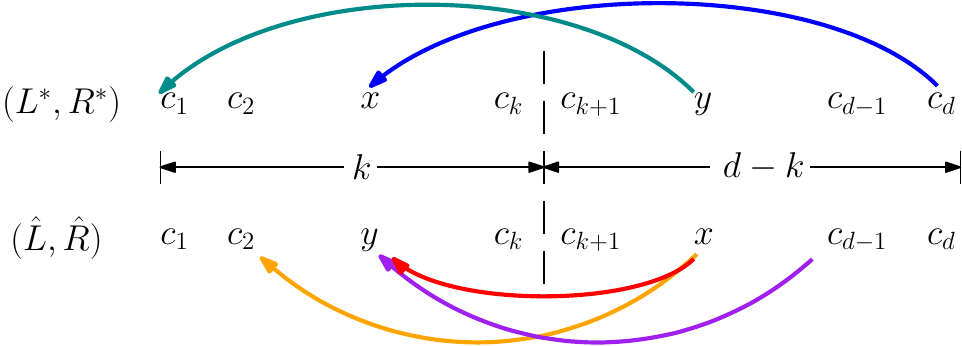}
        \caption{Vertices $\{c_1, c_2, \ldots, c_d\}$ are sorted by their weighted in-degrees. For the bi-partition $(L^*, R^*)$, one of the edges in $A(y, L^*)$ is shown in \textcolor{darkcyan}{cyan} and one of the edges in $A(R^*, x)$ is shown in \textcolor{blue}{blue}. For the bi-partition $(\hat{L}, \hat{R})$, the edge $(x, y)$ is shown in \textcolor{red}{red}. Also, one of the edges in $A(x, L^*)$ is shown in \textcolor{orange}{orange} and one of the edges in $A(R^*, y)$ is shown in \textcolor{violet}{violet}.}
        \label{fig:fairPartition}
    \end{figure}
\end{center}

We first argue that if we have a vertex in $V \setminus L$ with weighted in-degree smaller than or equal to that of some vertex in $L$, swapping them cannot lead to a new bi-partition with greater cost.

\begin{lemma}
\label{lemma:optimal-bipartition-swap}
    Let $(L^*, R^*)$ be any (not necessarily colorful) bi-partition. Suppose there exists $x \in L^*$ and $y \in R^*$ such that $\delta(x) \ge \delta(y)$. Then for $\hat{L}:=(L^* \setminus \{x\}) \cup \{y\}$, $\cost(\hat{L}) \le \cost(L^*)$.
\end{lemma}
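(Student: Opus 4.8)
The plan is to compute the two costs directly and show that their difference equals $\delta(y) - \delta(x)$, which is $\le 0$ by hypothesis. Only the probability constraint (Equation~\ref{eq:prob}) is needed for this; the triangle inequality plays no role in the argument.

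First I would set up the common and changing parts of the two partitions. Writing $L' := L^* \setminus \{x\}$ and $R' := R^* \setminus \{y\}$, the vertex set decomposes as the disjoint union $V = L' \sqcup R' \sqcup \{x,y\}$, with $L^* = L' \cup \{x\}$, $R^* = R' \cup \{y\}$, $\hat{L} = L' \cup \{y\}$, and $\hat{R} := V \setminus \hat{L} = R' \cup \{x\}$. Using $\cost(P) = w(A(V \setminus P, P))$, I would expand each cost by splitting the source set ($R^*$, resp. $\hat{R}$) and the target set ($L^*$, resp. $\hat{L}$) along this decomposition. The term $w(A(R', L'))$ appears in both expansions and cancels, leaving
\[
\cost(\hat{L}) - \cost(L^*) = \big(w(A(R',y)) - w(A(R',x))\big) + \big(w(A(x,L')) - w(A(y,L'))\big) + \big(w(x,y) - w(y,x)\big).
\]

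The one maneuver needed is to reconcile the middle bracket, which counts arcs \emph{leaving} $x$ and $y$, with the in-degrees $\delta(x),\delta(y)$, which count arcs \emph{entering} them. Here the probability constraint $w(u,v)+w(v,u)=1$ does the work: for each $u \in L'$ it gives $w(x,u)-w(y,u) = w(u,y)-w(u,x)$, so summing over $L'$ turns $w(A(x,L')) - w(A(y,L'))$ into $w(A(L',y)) - w(A(L',x))$. Substituting this and regrouping all terms according to whether they enter $y$ or enter $x$, I would recognize the first group as $w(A(R',y)) + w(A(L',y)) + w(x,y) = \delta(y)$ (since the in-neighbors of $y$ are exactly $R' \cup L' \cup \{x\} = V \setminus \{y\}$) and the second group as $\delta(x)$ analogously. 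This yields the clean identity $\cost(\hat{L}) - \cost(L^*) = \delta(y) - \delta(x)$, and the hypothesis $\delta(x) \ge \delta(y)$ finishes the proof.

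The argument is essentially bookkeeping, so the only real pitfall is tracking precisely which arc-sets differ between the two partitions and applying the probability constraint with the correct orientation. There is no genuine obstacle here; in particular, the triangle inequality is not invoked, which suggests it is needed only later to establish the \emph{colorfulness} of the partition produced by Algorithm~\ref{alg:min_bi_partition}.
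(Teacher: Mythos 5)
Your proposal is correct and follows essentially the same route as the paper: expand both costs over the decomposition $V = L' \sqcup R' \sqcup \{x,y\}$, cancel the common term $w(A(R',L'))$, and use the probability constraint to convert the out-arc terms $w(A(x,L')) - w(A(y,L'))$ into in-arc terms, yielding the identity $\cost(\hat{L}) - \cost(L^*) = \delta(y) - \delta(x)$. The only cosmetic difference is that you apply the probability constraint termwise over $u \in L'$ whereas the paper sums it to $|L'|$ first; your observation that the triangle inequality is not needed here is also consistent with the paper's argument.
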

\begin{proof}[Proof of Lemma~\ref{lemma:optimal-bipartition-swap}]

    Let us define two sets $L' := L^* \setminus \{x\}$ and $R' := R^* \setminus \{y\}$. (Note, $x, y \not \in L' \cup R'$.)
    
    Observe, by the probability constraints (Equation~\ref{eq:prob}) with respect to vertices $x$ and $y$ with the set $L'$, we get:
    \begin{equation}\label{eq:forxL}
        w(A(L', x)) + w(A(x, L')) = |L'|
    \end{equation}
    \begin{equation}\label{eq:foryL}
        w(A(L', y)) + w(A(y, L')) = |L'|
    \end{equation}

    Then by definition, we can write
    \begin{align}\label{eq:befSwap}
        w(A(R^*, L^*)) & = w(A(y, L')) + w(A(R', x)) + w(A(R', L')) + w(y, x).
    \end{align}

    Let us now swap $x$ and $y$ to form a new bi-partition. Let $\hat{L} = L' \cup \{y\}$, and $\hat{R} = R' \cup \{x\}$. Again, by definition,
    \begin{align}\label{eq:aftSwap}
        w(A(\hat{R}, \hat{L})) & = w(A(x,L')) + w(A(R', y)) + w(A(R', L')) + w(x, y).
    \end{align}
    
    Using Equation~\ref{eq:foryL} and Equation~\ref{eq:befSwap}, we get:
    \begin{align}\label{eq:1}
        w(A(R^*, L^*)) & = |L'| - w(A(L', y)) + w(A(R', x)) + w(A(R', L')) + w(y, x).
    \end{align} 
    Similarly, using Equation~\ref{eq:forxL} and Equation~\ref{eq:aftSwap}, we get:
    \begin{align}\label{eq:2}
         w(A(\hat{R}, \hat{L})) & = |L'| - w(A(L', x)) + w(A(R', y)) + w(A(R', L'))   + w(x, y).
    \end{align}
    
    Thus, by subtracting Equation~\ref{eq:1} from Equation~\ref{eq:2}, we get:
    \begin{align*}
        w(A(\hat{R}, \hat{L})) - w(A(R^*, L^*)) & = - w(A(L', x)) + w(A(R', y)) + w(x, y) + w(A(L', y)) - w(A(R', x)) - w(y, x)\\
        & = \left( w(A(R', y)) + w(A(L', y)) + w(x, y) \right) - \left( w(A(R', x)) + w(A(L', x)) + w(y, x) \right) \\
        & = \delta(y) - \delta(x) \qquad \qquad \text{(By definition of weighted in-degrees)}\\
        & \le 0.
    \end{align*}
   Thus, we have
    $w(A(\hat{R}, \hat{L})) \le w(A(R^*, L^*))$, which concludes the proof of the lemma.
\end{proof}
Now, by assuming the above lemma, we prove the main result of this section.

\begin{proof}[Proof of Theorem~\ref{thm:bi-partition}]
    Let $(L^*, R^*)$ be an (arbitrary) optimal colorful bi-partition. Recall that the bi-partition output by Algorithm~\ref{alg:min_bi_partition} is $(L,R)$. We first show that there exists a colorful bi-partition $\hat{L}$ (if not $L^*$) such that for all $i \in [g]$, $|\col^{-1}(i) \cap \hat{L}| = |\col^{-1}(i) \cap L|$ and $\cost(\hat{L}) \le \cost(L^*)$.

    Consider some $i \in [g]$ such that $|\col^{-1}(i) \cap L^*| > |\col^{-1}(i) \cap L|$ and some $j \in [g]$ such that $|\col^{-1}(j) \cap L^*| < |\col^{-1}(j) \cap L|$. Let $x = \arg\max_{u \in \col^{-1}(i) \cap (L^* \setminus L)} \delta(u) $, and $y = \arg\min_{v \in \col^{-1}(j) \cap (L \setminus L^*)} \delta(v)$. We now argue that $\delta(y) \le \delta(x)$. Suppose $y$ is added to $L$ while executing Line 18 of Algorithm~\ref{alg:min_bi_partition}. Then, as Algorithm~\ref{alg:min_bi_partition} iterates through the vertices in non-decreasing order of weighted in-degrees in Lines 14 -- 21, it must have added $y$ to $L$, and never encountered $x$ before terminating. Therefore, it must be that $\delta(y) \le \delta(x)$. Suppose instead that $y$ is added to $L$ while executing Line 8. As $L^*$ is colorful, we have $\lfloor \alpha_i \cdot k \rfloor \le |\col^{-1}(j) \cap L^*| < |\col^{-1}(j) \cap L|$. Therefore there must exist some $y' \in \col^{-1}(j)$ such that $\delta(y) \le \delta(y')$, and $y'$ is added to $L$ in our algorithm while executing Line 18. Then we conclude that $\delta(y) \le \delta(y') \le \delta(x)$ (as in Lines 14 -- 21, it must have added $y'$ to $L$ and never encountered $x$ before terminating). Now, consider the partition $L' = (L^* \setminus \{x\}) \cup \{y\}$. Observe, $(L',V\setminus L')$ is also a colorful bi-partition. By Lemma~\ref{lemma:optimal-bipartition-swap}, $\cost(L') \le \cost(L^*)$.

    By using the above argument repeatedly, we obtain a colorful bi-partition $(\hat{L}, V\setminus \hat{L})$ such that $\forall_{i \in [g]}$, $|\col^{-1}(i) \cap \hat{L}| = |\col^{-1}(i) \cap L|$ and $\cost(\hat{L}) \le \cost(L^*)$.

    We next prove that $\cost(L) \le \cost(\hat{L})$. Suppose $L \neq \hat{L}$. Then consider an $i \in [g]$ such that $\col^{-1}(i) \cap \hat{L} \neq \col^{-1}(i) \cap L$. By construction, our algorithm always adds vertices of color $i$ to $L$ by order of non-decreasing weighted in-degree. This implies that there exists a $v \in \col^{-1}(i) \cap (\hat{L} \setminus L)$ such that $\delta(v) \ge \min_{y \in \col^{-1}(i) \cap (L \setminus \hat{L})} \delta (y)$; otherwise, $v$ would have been added to $L$. Then for $\hat{L'} = (\hat{L} \setminus \{v\}) \cup \{y\}$, by Lemma~\ref{lemma:optimal-bipartition-swap}, $\cost(\hat{L'}) \le \cost(\hat{L})$. By repeating this swapping argument, we obtain the bi-partition $(L,R)$. As each swap can only reduce the cost, we have that $\cost(L) \le \cost(\hat{L}) \le \cost(L^*)$, showing that the output of Algorithm~\ref{alg:min_bi_partition} is an optimal colorful bi-partition.
\end{proof}

\section{Approximating the Fair Aggregate Ranking}
\label{sec:fair-rank}
In this section, we show our main result by designing an algorithm that finds $(2+\epsilon)$-approximate fair aggregated rank for any $\epsilon > 0$. In particular, we prove the following theorem.

\begin{theorem}
    \label{thm:main}
    For any $\epsilon > 0$, there exists a $(2+\epsilon)$-approximation algorithm for the fair rank aggregation problem that runs in time $O(d^3 \log d + nd^2)$, where $O(\cdot)$ hides the dependency on $1/\epsilon$.
\end{theorem}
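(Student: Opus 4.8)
The plan is to reduce the problem to the colorful bi-partition of Section~\ref{sec:color-partition} composed with the known PTAS for unconstrained rank aggregation~\cite{mathieu2009rank}. First I would build a weighted tournament $T=(V,A)$ on the $d$ candidates by setting $w(i,j):=\frac1n\,|\{\pi\in S: i\prec_\pi j\}|$ and coloring each vertex by its group. The probability constraint (Equation~\ref{eq:prob}) is immediate, and the triangle inequality (Equation~\ref{eq:triangle}) follows by averaging over voters the pointwise inequality $[\,i\prec_\pi j\,]\le[\,i\prec_\pi \ell\,]+[\,\ell\prec_\pi j\,]$. With this encoding a ranking is $(\bar\alpha,\bar\beta)$-$k$-fair exactly when its set of top-$k$ candidates is $(\bar\alpha,\bar\beta)$-colorful, so Theorem~\ref{thm:bi-partition} applies.

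The crucial structural fact I would establish is a decomposition of the objective. For any ranking $\sigma$ placing a size-$k$ set $L$ in its top $k$ positions and $R:=V\setminus L$ below, splitting each pairwise disagreement according to whether its endpoints lie both in $L$, both in $R$, or one in each gives
\[
\obj(S,\sigma)=n\cdot\cost(L)+\obj(S_L,\sigma|_L)+\obj(S_R,\sigma|_R),
\]
where $S_L,S_R$ are the input rankings restricted to $L$ and $R$; the cross term equals $n\,w(A(R,L))=n\cdot\cost(L)$ since every cross pair has its $L$-endpoint above its $R$-endpoint. This motivates the algorithm: compute an optimal colorful bi-partition $(L,R)$ via Theorem~\ref{thm:bi-partition}, run the PTAS of~\cite{mathieu2009rank} on $S_L$ and on $S_R$ to obtain $(1+\epsilon)$-approximate orderings $\sigma_L,\sigma_R$, and output $\sigma_L$ followed by $\sigma_R$, which is fair because $L$ is colorful.

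For the guarantee, let $\sigma^*$ be an optimal fair ranking with top-$k$ set $L^*$. Applying the decomposition to $\sigma^*$ and dropping the two non-negative within-part terms yields $\opt(S)\ge n\cdot\cost(L^*)$, and since our bi-partition minimizes cost among colorful sets, $n\cdot\cost(L)\le n\cdot\cost(L^*)\le\opt(S)$. Separately I would show $\opt(S_L)+\opt(S_R)\le\opt(S)$ for this (in fact any) partition, where $\opt(S_L),\opt(S_R)$ denote \emph{unconstrained} aggregation optima: taking the unconstrained aggregate optimum $\sigma^{\mathrm{unc}}$, we have $\obj(S,\sigma^{\mathrm{unc}})\le\opt(S)$ since fairness only shrinks the feasible set, and its full cost dominates the sum of its within-$L$ and within-$R$ costs (discarding the cross term), each of which is at least the corresponding optimum. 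Combining, the output cost is at most $n\cdot\cost(L)+(1+\epsilon)\big(\opt(S_L)+\opt(S_R)\big)\le\opt(S)+(1+\epsilon)\opt(S)=(2+\epsilon)\opt(S)$.

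Finally, tallying pair counts and forming the restricted instances costs $O(nd^2)$, the bi-partition costs $O(d^2)$ by Theorem~\ref{thm:bi-partition}, and the two PTAS invocations account for the $O(d^3\log d)$ term with the $1/\epsilon$ dependence absorbed into the hidden constant. The main conceptual point — and the step I would be most careful to verify — is that the bi-partition optimizes \emph{only} the cross-cut cost $\cost(L)$ and is entirely blind to the within-part aggregation costs; the argument nonetheless succeeds because the bound $\opt(S_L)+\opt(S_R)\le\opt(S)$ decouples the within-part costs from the cut, allowing the cut cost and the within-part costs to be charged to $\opt(S)$ as two independent contributions that sum to the factor $2+\epsilon$.
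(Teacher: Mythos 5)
Your proposal is correct and takes essentially the same route as the paper: the same tournament construction, the same optimal colorful bi-partition via Theorem~\ref{thm:bi-partition}, the same PTAS-on-each-part concatenation, and the same decomposition of the objective into a cross term charged to $\cost(L^*)\le\opt(S)$ and within-part terms charged to $(1+\epsilon)\opt(S)$. The only cosmetic difference is that you bound $\opt(S_L)+\opt(S_R)\le\opt(S)$ by passing through the unconstrained aggregate optimum, whereas the paper restricts the fair optimum $\pi^*$ to $L$ and $R$ directly; both one-line arguments yield the same bound.
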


To show the above result, we follow a two-step procedure.
\begin{itemize}
    \item Step I: Create a weighted colored tournament $T = (V,A)$ on $d$ vertices, which is an instance of the colorful bi-partition problem, and then use Algorithm~\ref{alg:min_bi_partition} to get an optimal bi-partition $(L, V \setminus L)$.
    \item Step II: Use $(L, V \setminus L)$ and apply the known PTAS for the rank aggregation problem without fairness constraint on the portions of $L$ and $V \setminus L$ separately to get an approximate fair aggregated rank.
\end{itemize}

 We start with proving the following theorem, which, together with a known PTAS for the rank aggregation problem (without the fairness constraint), establishes Theorem~\ref{thm:main}.

\begin{theorem}
\label{thm:fair-approx-ratio}
    Suppose there is a $t_1(d, n)$-time $c_1$-approximation algorithm $\mathcal{A}_1$ for some $c_1 \ge 1$ for the rank aggregation problem, and a $t_2(d)$-time $c_2$-approximation algorithm $\mathcal{A}_2$ for some $c_2\ge 1$ for the colorful bi-partition problem on tournaments satisfying both the probability and the triangle inequality constraints. Then there exists a $(c_1 + c_2)$-approximation algorithm for the fair rank aggregation problem with running time $O(nd^2 + t_1(d, n) + t_2(d))$. 
\end{theorem}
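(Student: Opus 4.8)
The plan is to realize the two-step template and charge the cost of the output against $\opt(S)$ through a three-way decomposition of the Kendall tau objective. For Step~I, I would build the weighted tournament $T=(V,A)$ with one vertex per candidate and weights $w(i,j) = \tfrac{1}{n}\left|\{\pi \in S : i \prec_\pi j\}\right|$ equal to the fraction of input rankings placing $i$ before $j$ (here $n=|S|$). This is the standard reduction of rank aggregation to feedback arc set, and it is routine to check that $w$ satisfies both the probability constraint (Equation~\ref{eq:prob}) and the triangle inequality (Equation~\ref{eq:triangle}), so $T$ is a legitimate input to $\mathcal{A}_2$. Coloring each vertex by its group, a $k$-subset is $(\bar\alpha,\bar\beta)$-colorful exactly when it meets the minority-protection and restricted-dominance bounds, so a colorful bi-partition $(L,V\setminus L)$ with $|L|=k$ is precisely a feasible top-$k$ block for a fair ranking. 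I would therefore let $\mathcal{A}_2$ return such an $L$ and, in Step~II, run $\mathcal{A}_1$ separately on the sub-instances induced by $L$ and by $R:=V\setminus L$ (the input rankings restricted to each block), outputting the concatenation $\tilde\sigma$ that places $L$, ordered by $\mathcal{A}_1$, in the top $k$ positions and $R$, ordered by $\mathcal{A}_1$, below.

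The observation that drives the analysis is that for any ranking $\sigma$ whose top-$k$ positions form the set $L$ (in any internal order) and whose remaining positions form $R$ (in any internal order),
\begin{equation*}
\obj(S,\sigma) = \obj_L(\sigma|_L) + \obj_R(\sigma|_R) + n\cdot\cost(L),
\end{equation*}
where $\obj_L$ and $\obj_R$ count only the disagreements among pairs lying entirely inside $L$ and inside $R$, and the cross term equals $n\cdot\cost(L)$ because every pair $(x,y)$ with $x\in L,\,y\in R$ is ordered $x$ before $y$ by $\sigma$, contributing one disagreement for each input ranking with $y\prec_\pi x$. The crucial feature is that the cross term depends only on the \emph{set} $L$ and not on the internal orderings, which cleanly decouples ``choosing the set'' (the bi-partition) from ``ordering each block'' (two independent unconstrained aggregations). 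Applying this decomposition to $\tilde\sigma$ gives $\obj(S,\tilde\sigma) = \obj_L(\tau_L) + \obj_R(\tau_R) + n\cdot\cost(L)$, where $\tau_L,\tau_R$ are the orderings returned by $\mathcal{A}_1$.

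It remains to bound each of the three pieces by $\opt(S)$. Writing $\opt_{\mathrm{RA}}(\cdot)$ for the optimal unconstrained aggregation value on a candidate subset, the guarantee of $\mathcal{A}_1$ gives $\obj_L(\tau_L)\le c_1\,\opt_{\mathrm{RA}}(L)$ and $\obj_R(\tau_R)\le c_1\,\opt_{\mathrm{RA}}(R)$. I would then show $\opt_{\mathrm{RA}}(L)+\opt_{\mathrm{RA}}(R)\le\opt(S)$ by restricting an optimal fair ranking $\sigma^*$ to the partition $(L,R)$: partitioning the unordered candidate pairs by which block contains each endpoint yields $\obj(S,\sigma^*)=\obj_L(\sigma^*|_L)+\obj_R(\sigma^*|_R)+(\text{non-negative cross term})$, so $\opt_{\mathrm{RA}}(L)+\opt_{\mathrm{RA}}(R)\le\obj_L(\sigma^*|_L)+\obj_R(\sigma^*|_R)\le\opt(S)$. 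For the cut term, $L^*=\{\sigma^*(1),\dots,\sigma^*(k)\}$ is itself a feasible colorful bi-partition, so the $c_2$-approximation of $\mathcal{A}_2$ gives $\cost(L)\le c_2\,\cost(L^*)$, while applying the decomposition to $\sigma^*$ with respect to its own top-$k$ partition gives $n\cdot\cost(L^*)\le\opt(S)$; hence $n\cdot\cost(L)\le c_2\,\opt(S)$. Summing the three bounds yields $\obj(S,\tilde\sigma)\le(c_1+c_2)\,\opt(S)$, and $\tilde\sigma$ is fair because $L$ is colorful. The running time is $O(nd^2)$ to assemble $T$, plus $t_2(d)$ for $\mathcal{A}_2$ and $O(t_1(d,n))$ for the two calls to $\mathcal{A}_1$, matching the claim. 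I expect the main conceptual step to be the inequality $\opt_{\mathrm{RA}}(L)+\opt_{\mathrm{RA}}(R)\le\opt(S)$ for the algorithm's set $L$, which need not coincide with $L^*$: it is precisely this ``restrict the optimum to an arbitrary partition'' argument that lets the independently chosen bi-partition and the independently ordered blocks both be charged against the same optimum; the remainder is bookkeeping with the decomposition and verifying that $T$ meets the two tournament constraints.
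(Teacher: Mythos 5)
Your proposal is correct and follows essentially the same route as the paper: the same tournament construction, the same three-way decomposition of the Kendall tau objective into within-$L$, within-$R$, and cross terms, the same charging of the cross term against the feasible bi-partition $L^*$ induced by $\sigma^*$, and the same restriction argument showing $\opt_{\mathrm{RA}}(L)+\opt_{\mathrm{RA}}(R)\le\opt(S)$. The only (cosmetic) difference is that you correctly track the factor of $n$ in the identity relating the cross term to $\cost(L)$ (the paper writes $\cost(L)=\obj(S,\pi^p)_{L\times R}$ without it), which does not affect the approximation bound since the same normalization applies to $\cost(L^*)$.
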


\paragraph{Description of the algorithm. }
Let us start by defining a few useful notations. For any set of elements $I \subseteq [d]$, let $\pi_I$ represent the restriction of $\pi$ to the elements in $I$. That is, delete all elements that are not contained in $I$ from $\pi$. E.g., for $\pi = (2, 6, 3, 5, 1, 4)$ and $I = \{1, 2, 3\}$, $\pi_I = (2, 3, 1)$.

Suppose we are given a set $S$ of rankings over $[d]$, of size $n$. We construct a weighted tournament graph $T$ from the rank aggregation instance by setting $V = [d]$. We set the $\col$ function such that for all $v \in V$, $\col(v) = i$ if $v \in G_i$. For every pair of elements $a, b$, let $n_{ab} = |\{ \pi \in S \mid a \prec_{\pi} b\}|$. Set the weight of the edge $(a, b)$ to be $w(a,b) = n_{ab} / n$. Observe that the edge weights of $T$ obey the probability constraint and the triangle inequality constraint. Then we run the algorithm $\mathcal{A}_2$ with the graph $T$ and parameters $\bar{\alpha}, \bar{\beta}, k$ and obtain a partitioning $L$ and $V \setminus L$. For brevity, let $R:= V \setminus L$.

We then restrict the input rankings to $L$ and $R$; let $S_L = \{\pi_L \mid \pi \in S\}$ and $S_R = \{\pi_R \mid \pi \in S\}$. We apply the rank aggregation algorithm $\mathcal{A}_1$ on $S_L$ and $S_R$ separately, to obtain $\pi^p_{L}$ and $\pi^p_{R}$ respectively. Construct $\pi^p$ by concatenating $\pi^p_{L}$ with $\pi^p_{R}$ and return $\pi^p$ as the output aggregate ranking. 

We give a formal description of the algorithm in~\ref{alg:fair-aggregation}.

\begin{algorithm}[!h]
\caption{\textsc{Fair Rank Aggregation}}
\label{alg:fair-aggregation}
\begin{algorithmic}[1]
\Procedure{FairRankAggregation}{$S$}
    \State Initialize tournament graph $T = (V, A)$
    \State Set $V = [d]$. 
    \For{$a \in [d], b \in [d]$}
        \State $w(a, b) = |\{ \pi \in S \mid a \prec_{\pi} b\}| / |S|$
    \EndFor
    \State Call $\mathcal{A}_1$ with $T$ and parameters $\bar{\alpha}, \bar{\beta}, k$ to obtain set $L$; let $R := V \setminus L$
    \State Compute set $S_L = \{\pi_L \mid \pi \in S\}$
    \State Compute set $S_R = \{\pi_R \mid \pi \in S\}$
    \State Call $\mathcal{A}_2$ on $S_L$ to obtain a ranking $\pi^p_L$
    \State Call $\mathcal{A}_2$ on $S_R$ to obtain a ranking $\pi^p_R$
    \State $\pi^p \gets $ concatenation of $\pi^p_L$ and $\pi^p_R$
    \\
    \Return $\pi^p$
\EndProcedure
\end{algorithmic}
\end{algorithm}

\paragraph{Running time of Algorithm~\ref{alg:fair-aggregation}. }
Observe that the running time to construct $T$ is $O(nd^2)$, as there are $O(d^2)$ pairs of elements, and each input ranking must be inspected. Constructing $S_L$ and $S_R$ can be done in time $O(nd)$. We run algorithm $\mathcal{A}_1$ twice, and $\mathcal{A}_2$ once. Therefore, it is clear that the running time is $O(nd^2 + t_1(d, n) + t_2(d))$.

\paragraph{Approximation guarantee. }

Recall the Kendall-Tau distance between two rankings is equal to the number of pairwise disagreements. Observe that given a partition of $[d]$ into $L$ and $R$, the Kendall-tau distance between any two rankings can be decomposed as follows \[ \kappa(\pi, \sigma) = \kappa_{L \times L}(\pi, \sigma) + \kappa_{R \times R}(\pi, \sigma)+ \kappa_{L \times R}(\pi, \sigma)\]

where for any $X \subseteq [d], Y \subseteq [d]$, \[ \kappa_{X \times Y}(\pi, \sigma) := |\{(a, b) \in X \times Y \mid a \prec_{\pi} b \text{\hspace{2mm}but\hspace{2mm}} b \prec_{\sigma} a\}| \]

Similarly, we decompose the objective value of any ranking $\pi$ to $S$ as follows
\begin{align*}
    & \obj(S, \pi) = \obj(S, \pi)_{L \times L} + \obj(S, \pi)_{R \times R} + \obj(S, \pi)_{L \times R}
\end{align*}

where for any $X \subseteq [d], Y \subseteq [d]$, \[ \obj(S, \pi)_{X \times Y} := \sum_{\pi_i \in S} \kappa_{X \times Y}(\pi, \pi_i). \]

\begin{proof}[Proof of Theorem~\ref{thm:fair-approx-ratio}]

    First observe, by construction, $\pi^p$ is a $(\bar{\alpha}, \bar{\beta})$-$k$-fair ranking. This is because only the elements of $L$ are placed in the top-$k$ positions of $\pi^p$, and the set $L$ is $(\bar{\alpha},\bar{\beta})$-colorful. Next, we analyze the approximation ratio. For analysis, let $\pi^*$ be an (arbitrary) optimal $(\bar{\alpha}, \bar{\beta})$-$k$-fair aggregate ranking. We denote the set of elements placed in its top-$k$ positions as $L^*$ and the set of remaining elements as $R^*$.

    As $L$ and $R$ form a partition over the set $[d]$, 
    \begin{align}
    \label{eqn:obj-decomposition}
        & \obj(\pi^p) = \obj(\pi^p)_{L \times L} + \obj(\pi^p)_{R \times R} + \obj(\pi^p)_{L \times R} .
    \end{align}
    
    We first consider the term $\obj(\pi^p)_{L \times R}$. Recall that the set $L$ is formed using the $c_2$-approximation algorithm $\mathcal{A}_2$ for the colorful bi-partition problem on $T$. Observe that by the construction, the weight of an edge $(a, b)$ in $T$ is equal to the contribution to the cost of ordering $b$ before $a$ in an aggregate ranking. More specifically, $\cost(L) = \obj(S, \pi^p)_{L \times R}$. As $(L^* , R^*)$ constitute a feasible colorful bi-partition of the tournament $T$, we deduce that $\obj(\pi^p)_{L \times R} \leq c_2 \cdot \obj(\pi^*)_{L^* \times R^*}$. Further, it is straightforward to see that $\obj(\pi^*)_{L^* \times R^*} \le \obj(\pi^*)$, as the remaining pairs can only increase the objective cost. Therefore, 
    
    \begin{equation}
    \label{eqn:crossing-cost-bound}
        \obj(S, \pi^p)_{L \times R} \le c_2 \cdot \obj(S, \pi^*) .
    \end{equation}

    We next analyze the contribution of the term $\obj(S, \pi^p)_{L \times L}$. Recall $\pi^p_{L}$ is obtained by using a $c_1$-approximation algorithm $\mathcal{A}_1$ (for the rank aggregation problem without fairness constraint) on the rankings $S_L$. Note, $\obj(S, \pi^p)_{L \times L} = \obj(S_L, \pi^p_L)$. Now, let $\sigma^*_L$ be an (arbitrary) optimal aggregate ranking over the candidates in $L$ for the rankings $S_L$. We must have that $\obj(S_L, \sigma^*_L) \le \obj(S, \pi^*)_{L \times L}$, as all rankings in $S_L$ maintain identical pairwise ordering for elements in $L$ as their corresponding full rankings in $S$. Therefore, 
    \begin{align*}
        \obj(S, \pi^p)_{L \times L} & \le c_1 \cdot \obj(S_L, \sigma^*_L) \nonumber \\
        & \le c_1 \cdot \obj(S, \pi^*)_{L \times L} . 
    \end{align*}
    
    The same argument also holds when we consider $\pi^p_R$. Since $L$ and $R$ are disjoint, we have that $\obj(S, \pi^*)_{L\times L} + \obj(S, \pi^*)_{R \times R} \le \obj(S, \pi^*)$. So, we conclude that
    \begin{align}
    \label{eqn:inside-partition-bound}
        \obj(\pi^p)_{L \times L} + \obj(\pi^p)_{R \times R} 
        & \le c_1 \cdot \obj(\pi^*)_{L \times L} + c_1 \cdot \obj(\pi^*)_{R \times R} \nonumber \\
        & \le c_1 \cdot \obj(\pi^*) .
    \end{align}

    By combining these two bounds, we upper bound the objective cost of $\pi^p$. From Equation~\ref{eqn:obj-decomposition},

    \begin{align*}
        \obj(\pi^p) & = \obj(\pi^p)_{L \times R} + \obj(\pi^p)_{L \times L} + \obj(\pi^p)_{R \times R} \\
        & \le c_2 \cdot \obj(\pi^*) + \obj(\pi^p)_{L \times L} + \obj(\pi^p)_{R \times R} && \text{(by Equation~\ref{eqn:crossing-cost-bound})} \\
        & \le c_2 \cdot \obj(\pi^*) + c_1 \cdot \obj(\pi^*) && \text{(by Equation~\ref{eqn:inside-partition-bound})} \\
        & = (c_1 + c_2) \cdot \opt(S) .
    \end{align*}
\end{proof}

\begin{theorem}\cite{mathieu2009rank}
\label{thm:schudy}
There is a randomized algorithm for the rank aggregation problem that, given any $\epsilon > 0$ and $n$ rankings on $d$ candidates, outputs a ranking with the cost at most $(1+\epsilon)\opt$ in time $O(\frac{1}{\epsilon}d^3\log d) + d2^{\tilde{O}(\epsilon^{-6})} + O(nd^2)$ with high probability.
\end{theorem}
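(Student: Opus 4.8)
Since Theorem~\ref{thm:schudy} is quoted from~\cite{mathieu2009rank}, I will sketch how I would reconstruct the underlying PTAS rather than re-derive it from nothing. The starting point is the reduction already set up in the excerpt: rank aggregation is weighted feedback arc set on the tournament $T$ with the probability and triangle-inequality constraints, and the feedback-arc-set cost of an ordering $\sigma$ equals $\obj(S,\sigma)/n$. Hence it suffices to give, on a $d$-vertex weighted tournament obeying those two constraints, an ordering whose backward weight is within $(1+\epsilon)$ of optimal; the $O(nd^2)$ term in the running time is exactly the cost of building $T$ by counting the $n_{ab}$. The plan is to combine an \emph{additive} approximation scheme with a divide-and-conquer that upgrades the additive guarantee to a multiplicative one.

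First I would establish an additive scheme: for any $\delta>0$, an ordering of backward weight at most $\opt + \delta d^2$ computable in time $d\cdot 2^{\tilde{O}(\mathrm{poly}(1/\delta))}$. I would obtain this by exhaustive sampling in the style of additive PTASs for dense ordering problems: draw a random sample of $O(\mathrm{poly}(1/\delta))$ vertices, guess their relative order together with coarse target positions, and place each remaining vertex at the location suggested by its weighted comparisons against the sample. A Chernoff/averaging argument shows that the estimated position of a typical vertex is accurate up to roughly $\pm\,\delta d$, which accumulates to additive error $O(\delta d^2)$. Taken alone, this already yields a $(1+\epsilon)$-approximation whenever $\opt=\Omega(d^2)$, i.e. on ``dense'' instances; the $d\cdot 2^{\tilde{O}(\epsilon^{-6})}$ term in the stated running time is precisely the cost of running this sampling scheme at the base of the recursion.

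The substantive difficulty is the \emph{sparse} regime, where $\opt\ll d^2$ and the additive error $\delta d^2$ swamps $\opt$. To handle it I would first compute a constant-factor approximate ordering (for instance by sorting vertices by weighted in-degree, or by randomized pivoting) and then make it locally optimal under single-vertex moves. Local optimality together with the triangle inequality gives the key structural fact: in any optimal ordering each vertex sits close to its position in the current ordering, so if we cut the current ordering at some point, the backward weight contributed by pairs that straddle the cut is essentially the same for our solution as for $\opt$. Using this, I would recursively split the ordering into two contiguous halves, treat the cross-half cost as (up to a $(1+\epsilon)$ factor) forced, and recurse on each half; the recursion bottoms out at a subproblem as soon as its local optimum is a constant fraction of (subproblem size)$^2$, where the additive scheme converts to a genuine multiplicative $(1+\epsilon)$ bound. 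Summing the per-level errors over the $O(\log d)$ recursion depth and charging each against the corresponding local optimum yields total cost $(1+\epsilon)\,\opt$, while the sorting, local search, and recursion account for the $\tfrac{1}{\epsilon}d^3\log d$ term.

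The main obstacle is the analysis of the divide step, not the sampling. Concretely, one must prove that cutting the current ordering does not misclassify pairs relative to $\opt$ — that the cross-partition backward weight our algorithm pays is within a $(1+\epsilon)$ factor of what $\opt$ pays on those same pairs. This is exactly where single-vertex local optimality and the triangle inequality do the work, and the delicate part is controlling how these small per-cut discrepancies accumulate across all $O(\log d)$ levels so that the final bound remains multiplicative rather than additive. Everything else (the reduction, the sampling concentration, and the constant-factor warm start) is comparatively routine.
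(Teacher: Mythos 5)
This theorem is imported by citation: the paper gives no proof of it, using \cite{mathieu2009rank} as a black box inside Theorem~\ref{thm:main}. Your sketch is a faithful high-level reconstruction of how that cited PTAS actually works — the reduction to weighted feedback arc set on a tournament satisfying the probability and triangle-inequality constraints, the sampling-based additive $\delta d^2$ scheme for the dense regime (accounting for the $d\,2^{\tilde{O}(\epsilon^{-6})}$ term), and the constant-factor warm start followed by single-vertex-move local optimality and divide-and-conquer for the sparse regime (accounting for the $\frac{1}{\epsilon}d^3\log d$ term), with $O(nd^2)$ for building the tournament. You correctly identify that the genuinely hard content lies in the structural lemmas relating locally optimal orderings to $\opt$ across the recursive splits, which you name but do not prove; since the paper itself treats the result as external, there is nothing in it to compare that part against.
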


Now, we are ready to prove our main result (Theorem~\ref{thm:main}).

\begin{proof}[Proof of Theorem~\ref{thm:main}]
     From Theorem~\ref{thm:bi-partition}, we have an algorithm for the colorful bi-partition problem with approximation factor $c_2 = 1$ and running time $O(d^2)$. From Theorem~\ref{thm:schudy} we have an algorithm for rank aggregation with approximation factor $c_1 = 1 + \epsilon$ for any $\epsilon > 0$ and running time $O(\frac{1}{\epsilon}d^3\log d) + d2^{\tilde{O}(\epsilon^{-6})} + O(nd^2)$. Theorem~\ref{thm:main} now follows directly from Theorem~\ref{thm:fair-approx-ratio}.  
     
    We remark that we can derandomize our algorithm by allowing an extra $d^{\tilde{O}(\epsilon^{-12})}$ additive factor in the running time, due to the current best (deterministic) PTAS for the rank aggregation problem~\cite{mathieu2009rank}.
\end{proof}

\section{Improved Fair Rank Aggregation using Closest Fair Ranking}
\label{sec:improvedfra}

In this section, we describe a generic algorithm for the fair rank aggregation under the Kendall-tau metric that works \emph{irrespective} of the definition of fairness under consideration. The only thing we need to have is an efficient procedure to solve the closest fair ranking problem. 

Given a ranking $\pi \in \S_d$, the \emph{closest fair ranking problem} asks to find a fair ranking $\sigma \in \S_d$ that minimizes the Kendall-tau distance $\kappa(\pi, \sigma)$. For a host of fairness notions for which we are already aware of efficient closest fair ranking algorithms, our generic algorithm immediately provides a $2.881$-approximation to the corresponding fair rank aggregation problem, breaking below the only known straightforward 3-approximation guarantee.

\begin{theorem}
\label{thm:below3approx}
    Suppose there is a $t(d)$-time algorithm $\mathcal{A}$ that solves the closest fair ranking problem. Then, there exists a $2.881$-approximation algorithm for the fair rank aggregation problem with running time $O(n^3 d^3 \log d + n^3 t(d) + n^4 d \log d)$.
\end{theorem}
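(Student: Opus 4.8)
\textbf{Proof proposal for Theorem~\ref{thm:below3approx}.}

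The plan is to design a generic algorithm that, like the naive 3-approximation, leverages the closest fair ranking subroutine $\mathcal{A}$, but instead of simply returning the best among the $n$ closest fair rankings to the individual inputs, it aggregates information across inputs before snapping to a fair ranking. The key obstacle in the naive approach is that snapping a single input to its closest fair ranking incurs two triangle-inequality hops --- one from the optimal fair aggregate $\sigma^*$ to an input $\pi$, and one from $\pi$ back to a fair ranking --- which together inflate the bound to $3\cdot\opt$. To break below $3$, I would first compute (an approximation to) the \emph{unconstrained} aggregate ranking $\hat{\pi}$ using the PTAS of Theorem~\ref{thm:schudy}; since $\hat{\pi}$ minimizes $\sum_{\pi \in S}\kappa(\pi,\hat{\pi})$ up to $(1+\epsilon)$, it is a genuinely central point, and $\obj(S,\hat{\pi}) \le (1+\epsilon)\obj(S,\sigma^*)$ because $\sigma^*$ is one feasible (fair) candidate. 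Then I would feed $\hat{\pi}$ (and possibly each input $\pi_i$) to $\mathcal{A}$ to obtain fair rankings and return the best objective among all these candidates.

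First I would establish the core distance bookkeeping. Writing $\avg := \frac{1}{n}\obj(S,\sigma^*) = \frac{1}{n}\sum_i \kappa(\pi_i,\sigma^*)$ for the average distance from inputs to the optimal fair solution, I would bound the objective of any candidate fair ranking $\sigma = \mathcal{A}(\rho)$ (the closest fair ranking to some reference $\rho$) via the triangle inequality: for each input $\pi_i$, $\kappa(\pi_i,\sigma) \le \kappa(\pi_i,\rho) + \kappa(\rho,\sigma)$, and since $\sigma$ is closest-fair to $\rho$, $\kappa(\rho,\sigma) \le \kappa(\rho,\sigma^*)$ (because $\sigma^*$ is a feasible fair ranking). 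Summing over $i$ gives $\obj(S,\sigma) \le \obj(S,\rho) + n\cdot\kappa(\rho,\sigma^*)$. The improvement over the naive bound comes from choosing $\rho$ so that \emph{both} $\obj(S,\rho)$ and $\kappa(\rho,\sigma^*)$ are small simultaneously --- which is exactly what the central unconstrained aggregate $\hat{\pi}$ achieves, whereas a single input $\pi_i$ is central in one term but arbitrary in the other.

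The technical heart of the argument will be a careful case analysis (or an averaging/potential argument) that balances two regimes. In one regime $\kappa(\hat{\pi},\sigma^*)$ is small relative to $\opt$, so snapping $\hat{\pi}$ to fairness is nearly free and the bound approaches $2\cdot\opt$. In the opposite regime $\kappa(\hat{\pi},\sigma^*)$ is large, which (since $\hat{\pi}$ is within $(1+\epsilon)$ of optimal unconstrained cost) forces the inputs themselves to be spread out, so that the \emph{best} among the input-snapped candidates $\min_i \obj(S,\mathcal{A}(\pi_i))$ must be small. I would take a suitable convex combination (governed by a parameter whose optimal value yields the constant $2.881$) of these two bounds and return whichever candidate family wins; the $2.881$ factor arises from equating the worst-case guarantees of the two regimes. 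The main obstacle is precisely pinning down this worst-case tradeoff: I expect to introduce a normalized parameter such as $\gamma := \kappa(\hat{\pi},\sigma^*)/\opt$, express both candidate bounds as functions of $\gamma$ (using $\obj(S,\hat{\pi})\le(1+\epsilon)\opt$ and a lower-bound estimate on how spread the inputs must be when $\gamma$ is large), and optimize over the algorithm's choice to minimize the maximum over $\gamma$, which I anticipate to yield the root of a small polynomial approximately equal to $2.881$. Finally I would verify the stated running time: each call to $\mathcal{A}$ costs $t(d)$, the PTAS costs the $O(n d^2 + d^3\log d)$-type terms of Theorem~\ref{thm:schudy}, and the remaining factors of $n^3$ and $n^4 d\log d$ account for repeating the procedure over $O(n)$ reference rankings and recomputing pairwise Kendall-tau objectives, which I would tally at the end rather than optimize here.
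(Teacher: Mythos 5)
There is a genuine gap here, and it sits exactly where you place "the technical heart of the argument." Your first regime is fine: the bound $\obj(S,\mathcal{A}(\hat{\pi})) \le \obj(S,\hat{\pi}) + n\,\kappa(\hat{\pi},\sigma^*) \le (1+\epsilon)\opt + n\,\kappa(\hat{\pi},\sigma^*)$ is correct, and when $\kappa(\hat{\pi},\sigma^*)$ is small relative to $\avg=\opt/n$ you indeed approach $2\opt$. The problem is the second regime. Your claimed mechanism --- that a large $\kappa(\hat{\pi},\sigma^*)$ "forces the inputs themselves to be spread out, so that the best among the input-snapped candidates must be small" --- is not only unproven but points in the wrong direction. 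What makes an input-snapped candidate $\sigma_i$ beat $3\opt$ is captured by the identity $\kappa(\pi_i,\pi_j) = |I_i| + |I_j| - 2|I_i \cap I_j|$ (where $I_i$ is the set of pairs on which $\pi_i$ and $\sigma^*$ disagree), which gives $\obj(S,\sigma_i) \le \opt + 2n|I_i| - 2\sum_j |I_i\cap I_j|$: you win precisely when the inputs' error sets \emph{overlap heavily}, i.e., when the inputs are clustered on the same side of $\sigma^*$, not when they are spread out. The true reason your dichotomy has a chance is different from what you wrote: if $\obj(S,\hat{\pi}) + n\,\kappa(\hat{\pi},\sigma^*)$ is close to $3\opt$, then the triangle inequality through $\hat{\pi}$ must be nearly tight for most inputs, which forces most $I_i$ to lie inside the inversion set $J$ of $\hat{\pi}$ versus $\sigma^*$ with $|J|\le 2\avg$; a Cauchy--Schwarz argument on the multiplicities of pairs in $J$ then yields large average pairwise intersections $|I_i\cap I_j|$. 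Even granting this, you still need to extract a single index $i$ for which $|I_i|$ is not much above $\avg$ \emph{and} $\sum_j|I_i\cap I_j|$ is large simultaneously --- this is a nontrivial averaging step (the paper spends two lemmas on exactly this, sorting the inputs by $|I_i|$ and bounding the size of $I_\ell$ for a carefully chosen $\ell$), and without it the constant $2.881$ is unsupported; there is no calculation in your proposal from which that number, or any number below $3$, actually emerges.

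For contrast, the paper's algorithm does not use the unconstrained PTAS aggregate as a reference point at all. Instead it enumerates all triples of inputs, builds the majority tournament of each triple, solves an approximate feedback arc set on it, and snaps the resulting order to a fair ranking; the case analysis is governed entirely by the quantities $|I_i|$ and $|I_i\cap I_j|$ (small pairwise intersections for some triple make the majority vote nearly recover $\sigma^*$, giving a $(1+12.0001\alpha)\opt$ candidate, while large pairwise intersections everywhere make some input-snapped candidate good), and $2.881$ comes from optimizing three explicit parameters against four explicit bounds. Your $\hat{\pi}$-based route is genuinely different and, in my judgment, plausibly completable --- it would even be faster, with only $n+1$ candidates --- but as written it is a plan with its central quantitative claim missing and misattributed, so it does not constitute a proof of the theorem. (A minor additional inconsistency: your running-time tally invokes "repeating the procedure over $O(n)$ reference rankings" to justify the $n^3$ and $n^4$ factors, which does not match the algorithm you describe.)
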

The running time can be improved significantly, more specifically, the dependency on $n$ can be brought down to (near-)linear, using random sampling and \emph{coreset} construction (as detailed in Section~\ref{sec:faster-generic-algo}).

\paragraph{Implications to stricter fairness notions. }

Stronger fairness notions than that of Definition~\ref{def:fair-ranking} have been studied in the context of fair rank aggregation, such as $(\bar{\alpha}, \bar{\beta})$-block-$k$-fairness (see the full version of~\cite{chakraborty2022}).

\begin{definition}($(\bar{\alpha}, \bar{\beta})$-block-$k$-fair ranking)
\label{def:blockfairranking}
    Consider a partition of $d$ candidates into $g$ groups $G_1,\cdots, G_g$. For each group $G_i$ ($i \in [g]$), let us consider two parameters $\alpha_i,\beta_i \in [0,1] \cap \mathbb{Q}$. For $\bar{\alpha} = \left( \alpha_1,\cdots,\alpha_g \right)$, $\bar{\beta} = \left( \beta_1,\cdots,\beta_g \right)$, $k \in [d]$, and a given block size $b$ (such that $\forall_{i\in [g]},\;b\cdot \alpha_i, b \cdot \beta_i$ are integers), a ranking $\pi$ (on $d$ candidates) is said to be \emph{$(\bar{\alpha},\bar{\beta})$-block-$k$-fair} if for every $p$ such that $p \ge k$, and $p \mod b \equiv 0$, for every group $G_i$: The top-$p$ positions $\pi(1),\ldots, \pi(p)$ contain at least $\alpha_i \cdot p$ and at most $\beta_i \cdot p$ candidates from $G_i$.
\end{definition}

Further, they provide an $O(d^2)$-time algorithm that finds a closest $(\bar{\alpha}, \bar{\beta})$-block-$k$-fair ranking. Therefore, as an immediate corollary of Theorem~\ref{thm:below3approx}, we obtain a $2.881$-approximation algorithm for $(\bar{\alpha}, \bar{\beta})$-block-$k$-fair rank aggregation in time $O(n^3 d^3 \log d + n^4 d \log d)$ (the running time can be reduced using randomization), improving upon previously known 3-approximation under this stricter fairness notion.

\paragraph{Description of the algorithm. }Suppose we are given a set $S = \{\pi_1,\cdots,\pi_n\}$. Initialize $L = \emptyset$. For each $\pi_i$ in $S$, find a closest fair ranking $\sigma_i$ using $\mathcal{A}$, and add $\sigma_i$ to the set $L$. 

Iterate through all distinct 3-tuples $T := (\pi_i, \pi_j, \pi_k)$ from $S$. For each such tuple, construct an unweighted directed tournament graph $G_T$ over $[d]$ vertices as follows: For every pair of elements $(a, b)$, add the edge $(a, b)$ if at least two rankings in $T$ order $a$ before $b$; otherwise, add the edge $(b, a)$. Note,~\cite{mathieu2009rank} proposed an algorithm that finds a $(1 + \gamma)$-approximation (for any $\gamma > 0$) to the feedback arc set problem on tournaments. Run this algorithm with $\gamma = 0.00001$ on $G_T$ and delete the set of edges output by the algorithm to obtain a directed acyclic graph $\tilde{G}_T$. Let $\tilde{\pi}_T$ be the ranking obtained by taking the topological ordering of $\tilde{G}_T$. Next, use $\mathcal{A}$ to find a closest fair ranking $\tilde{\sigma}_T$ to $\tilde{\pi}_T$, and add it to $L$. Finally, output a ranking from $L$ that minimizes the objective value (sum of distances to the input rankings).

Below, we provide the pseudocode (Algorithm~\ref{alg:blackbox-fair}) of our generic fair rank aggregation algorithm.

\begin{algorithm}[h!]
\caption{\textsc{Generic Fair Rank Aggregation Algorithm}}
\label{alg:blackbox-fair}
\begin{algorithmic}[1]
\Procedure{GenericFairRA}{$S=\{\pi_1,\cdots,\pi_n\}$}
    \State Initialize an empty set $L$
    \For{$i \gets 1$ to $n$}
        \State $\sigma_i \gets $ A closest fair ranking to $\pi_i$.
        \State Set $L \gets L \cup \{\sigma_i\}$.
    \EndFor
    \For{$i \gets 1$ to $n$}
        \For{$j \gets i+1$ to $n$}
            \For{$k \gets j+1$ to $n$}
                \State Create an (unweighted) tournament graph $G_T \gets ([d], E)$ with $d$ vertices where
                \State \qquad $E = \{(a, b) \mid a \prec b \text{ in at least two rankings of $\{\pi_i, \pi_j, \pi_k\}$} \}$
                \State $E' \gets$ An 1.00001-approximate feedback arc set of $G_T$.
                \State $\tilde{G}_T \gets ([d], E \setminus E')$
                \State $\tilde{\pi}_T \gets $ Ranking obtained from topological ordering of $\tilde{G}_T$.
                \State $\tilde{\sigma}_T \gets $ A closest fair ranking to $\tilde{\pi}_T$.
                \State Set $L \gets L \cup \{\tilde{\sigma}_T\}$.
            \EndFor
        \EndFor
    \EndFor
    \\
    \Return $\arg \min_{\sigma \in L} \obj(S, \sigma)$.
\EndProcedure
\end{algorithmic}
\end{algorithm}

\paragraph{Running time}

There are $O(n^3)$ sets of three rankings of $S$. For each, it takes $O(d^2)$ time to construct the graph $G_T$ as there will be $O(d^2)$ edges. Applying an approximation algorithm to find a $1.00001$-approximate feedback arc set of $G_T$ runs in time $O(d^3\log d)$. Finding the closest fair ranking to $\tilde{\pi}_T$ takes time $t(d)$, and we need to do so for $O(n^3)$ many rankings. Computing the objective value $\obj(S,\sigma)$ of a ranking $\sigma$ can be done in time $O(n d \log d)$. Therefore, finding the ranking in $L$ that minimizes the objective value takes time $O(n^4 d \log d)$. So, overall, the runtime of the algorithm is $O(n^3 d^3 \log d + n^3 t(d) + n^4 d \log d)$.

\paragraph{Analysis of the algorithm} Our analysis at a very high level possesses certain similarities to that used in the \emph{Ulam median} problem in~\cite{chakraborty2023clustering}. However, there are two stark differences. First, there is no direct relation between the Ulam and Kendall-tau distance. In fact, there are rankings in $\S_d$ with both the Ulam and Kendall-tau distances being constant (or equal); on the other hand, there are rankings with Ulam distance one but Kendall-tau distance as large as $\Theta(d)$. Thus, any result or technique known for the Ulam distance does not immediately provide anything for the Kendall-tau distance, which is the metric we consider for fair rank aggregation in this paper. Second,~\cite{chakraborty2023clustering} only considers the Ulam median problem without any fairness constraint, and thus, incorporating fairness would be a major challenge. We need to circumvent both of these challenges in the analysis, and thus, the argument used below is significantly different than that in~\cite{chakraborty2023clustering}, albeit possessing similarities at the level of the proof framework.

Throughout this section, let $\sigma^*$ be an (arbitrary) optimal fair aggregate ranking. For any $\pi_i \in S$, let $\sigma_i$ be a closest fair ranking to $\pi_i$. Further, for any $\pi_i$, define the following set $I_i := \{(a, b) \mid a \prec_{\pi_i} b, b \prec_{\sigma^*} a \}$. Observe that $|I_i| = \kappa(\pi_i, \sigma^*)$. Let $\alpha, \beta \in [0, 1]$, $c \ge 1$ be parameters, the exact values of which we will set later. We emphasize that these parameters are used solely for the purpose of the analysis and have no role in the algorithm.

Recall that we write the median objective value of a ranking $\sigma$ as $\obj(S, \sigma) = \sum_{\pi_i \in S} \kappa(\sigma, \pi_i)$, and that $\opt = \obj(S, \sigma^*)$. Let us also define $\avg := \opt/n$.

For the analysis, we assume the following holds

\begin{equation}
\label{eqn:case1assumption}
    \forall \pi_i \in S, |I_i| > (1-\beta) \avg .
\end{equation}

Otherwise, we show that the objective value of the ranking $\sigma_i$ is strictly less than $3 \opt$. 

\begin{lemma}
\label{lemma:case1}
    Suppose there exists $\pi_i \in S$ such that $|I_i| \le (1 - \beta) \avg$. Then $\obj(\sigma_i) \le (3 - 2\beta) \opt$.
\end{lemma}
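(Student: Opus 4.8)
The plan is to bound $\obj(\sigma_i) = \sum_{\pi_j \in S} \kappa(\sigma_i, \pi_j)$ purely through the triangle inequality for the Kendall-tau metric, exploiting two facts: that $\sigma_i$ is a \emph{closest} fair ranking to $\pi_i$, and that $\pi_i$ lies close to $\sigma^*$ by hypothesis. First I would record the key consequence of the definition of $\sigma_i$: since $\sigma^*$ is itself a fair ranking and $\sigma_i$ is a closest fair ranking to $\pi_i$, optimality gives $\kappa(\pi_i, \sigma_i) \le \kappa(\pi_i, \sigma^*) = |I_i|$. This is the only place the closest-fair-ranking property (i.e., the availability of $\mathcal{A}$) enters.

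Next, for each $\pi_j \in S$ I would route the distance from $\sigma_i$ to $\pi_j$ through both $\pi_i$ and $\sigma^*$ by applying the triangle inequality twice:
\[
\kappa(\sigma_i, \pi_j) \le \kappa(\sigma_i, \pi_i) + \kappa(\pi_i, \pi_j) \le \kappa(\sigma_i, \pi_i) + \kappa(\pi_i, \sigma^*) + \kappa(\sigma^*, \pi_j).
\]
Summing over all $\pi_j \in S$ and using $\sum_{\pi_j \in S} \kappa(\sigma^*, \pi_j) = \obj(\sigma^*) = \opt$ yields
\[
\obj(\sigma_i) \le n \cdot \kappa(\sigma_i, \pi_i) + n \cdot \kappa(\pi_i, \sigma^*) + \opt.
\]

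Finally I would substitute the two bounds. Using $\kappa(\sigma_i, \pi_i) \le |I_i|$ and $\kappa(\pi_i, \sigma^*) = |I_i|$ together with the hypothesis $|I_i| \le (1-\beta)\avg = (1-\beta)\opt/n$, the first two terms contribute at most $2n \cdot (1-\beta)\opt/n = 2(1-\beta)\opt$; adding the remaining $\opt$ term gives $\obj(\sigma_i) \le \bigl(2(1-\beta) + 1\bigr)\opt = (3 - 2\beta)\opt$, as desired.

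There is no substantial obstacle here; the lemma is a clean triangle-inequality estimate. The only point requiring a little care is the choice of intermediate waypoints: routing through $\pi_i$ alone would leave an uncontrolled $\kappa(\pi_i,\pi_j)$ term, whereas inserting $\sigma^*$ as a second waypoint converts it into $\kappa(\pi_i,\sigma^*)$ and $\kappa(\sigma^*,\pi_j)$, the latter summing exactly to $\opt$. This double application is precisely what produces the factor $2$ multiplying $|I_i|$ and hence the savings of $2\beta$ over the trivial $3$-approximation obtained when $|I_i|$ is not small.
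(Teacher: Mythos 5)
Your proof is correct and is essentially the same as the paper's: both bound $\kappa(\sigma_i,\pi_j)$ by routing through the waypoints $\pi_i$ and $\sigma^*$, use the closest-fair-ranking property to get $\kappa(\sigma_i,\pi_i)\le\kappa(\sigma^*,\pi_i)=|I_i|$, and sum to obtain $2n|I_i|+\opt\le(3-2\beta)\opt$. The only (immaterial) difference is the order in which the two triangle inequalities are applied — the paper first passes through $\sigma^*$ and then expands $\kappa(\sigma_i,\sigma^*)$ via $\pi_i$, while you first pass through $\pi_i$ and then expand $\kappa(\pi_i,\pi_j)$ via $\sigma^*$.
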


\begin{proof}
    Recall that $\sigma_i$ is a closest fair ranking to $\pi_i$. Consider the objective value of $\sigma_i$,

    \begin{align*}
        \obj(\sigma_i) & = \sum_{\pi_j \in S} \kappa(\sigma_i, \pi_j) \\
        & \le \sum_{\pi_j \in S} \left( \kappa(\sigma_i, \sigma^*) + \kappa(\sigma^*, \pi_j) \right) && \text{(By triangle inequality)} \\
        & = \sum_{\pi_j \in S} \kappa(\sigma_i, \sigma^*) + \sum_{\pi_j \in S} \kappa(\sigma^*, \pi_j) \\
        & = \sum_{\pi_j \in S} \kappa(\sigma_i, \sigma^*) + \opt &&\text{(By definition of $\opt$)} \\
        & \le \sum_{\pi_j \in S} \left( \kappa(\sigma_i, \pi_i) + \kappa(\pi_i, \sigma^*) \right) + \opt && \text{(By triangle inequality)} \\
        & \le \sum_{\pi_j \in S} \left( \kappa(\sigma^*, \pi_i) + \kappa(\pi_i, \sigma^*) \right) + \opt && \text{(As $\sigma_i$ is a closest fair rank to $\pi_i$)} \\
        & \le 2 \cdot n \cdot |I_i| + \opt && \text{(By definition)} \\
        & \le 2 (1-\beta) \opt + \opt && \text{(Since $|I_i| \le (1 - \beta) \avg$)} \\
        & \le (3 - 2\beta) \opt .
    \end{align*}
\end{proof}

Therefore, we assume (\ref{eqn:case1assumption}) for the remainder of this section.

Now, consider all the sets $T = \{\pi_i, \pi_j, \pi_k\}$ consisting of three distinct rankings from $S$. If there exists some $\{\pi_i, \pi_j, \pi_k\}$ such that the following holds

\begin{equation*}
    \forall r\ne  s \in \{i, j, k\}, |I_r \cap I_s| \le \alpha \avg ,
\end{equation*}

we show that the objective value of the ranking $\hat{\sigma}_T$ will be strictly less than $3\opt$. To build some intuition, see that if we attempt to order pairs corresponding to the majority vote of the three rankings, all pairs in $|I_r \cap I_s|$ will be inverted in order compared to $\sigma^*$. If, in total, there are not too many such pairs from the majority vote, then finding a ranking that follows the majority of the three will give a good approximation.

\begin{lemma}
\label{lemma:case2}
    Suppose there exists $T = \{\pi_i, \pi_j, \pi_k\}$ such that the following holds: $\forall r \ne  s \in \{i, j, k\}, |I_r \cap I_s| \le \alpha \avg$. Then there will be a fair ranking $\tilde{\sigma}_T \in L$ such that $\obj(\tilde{\sigma}_T) \le (1+12.0001\alpha) \opt$.
\end{lemma}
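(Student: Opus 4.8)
The plan is to show that, for the triple $T$ satisfying the hypothesis, the majority-vote tournament $G_T$ is extremely close to the optimal fair ranking $\sigma^*$ in the feedback-arc-set sense, and then to push this closeness through the approximate FAS solver and the closest-fair-ranking step to bound $\obj(\tilde{\sigma}_T)$. The first and most important step is a purely combinatorial estimate: I would argue that $G_T$ can disagree with $\sigma^*$ on an unordered pair $\{a,b\}$ only when at least two of the three rankings in $T$ invert that pair relative to $\sigma^*$. Concretely, if $\sigma^*$ orders $a \prec_{\sigma^*} b$ but $G_T$ points $b \to a$, then by the majority rule at least two rankings $\pi_r, \pi_s$ place $b$ before $a$, so $(b,a) \in I_r \cap I_s$. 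Letting $d(G_T,\sigma^*)$ denote the number of backward edges of $G_T$ under the ordering $\sigma^*$, each such disagreeing pair maps injectively into $\bigcup_{r \ne s} (I_r \cap I_s)$, giving $d(G_T,\sigma^*) \le |I_i\cap I_j| + |I_i\cap I_k| + |I_j\cap I_k| \le 3\alpha\,\avg$. In particular, the minimum FAS value of $G_T$ is at most $3\alpha\,\avg$.

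Next I would chain triangle inequalities through the algorithm. Since $\tilde{\pi}_T$ is a topological order of the graph left after deleting a $(1+\gamma)$-approximate FAS with $\gamma = 0.00001$, the number of backward edges of $G_T$ under $\tilde{\pi}_T$ is at most $(1+\gamma)\cdot 3\alpha\,\avg$. Treating $\kappa$, $d(\,\cdot\,,G_T)$, and $d(G_T,\,\cdot\,)$ all as disagreement-counting distances (for which the triangle inequality holds even when one argument is the possibly non-transitive tournament $G_T$), I obtain $\kappa(\tilde{\pi}_T,\sigma^*) \le d(\tilde{\pi}_T,G_T) + d(G_T,\sigma^*) \le (2+\gamma)\cdot 3\alpha\,\avg$. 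Because $\sigma^*$ is itself a fair ranking and $\tilde{\sigma}_T$ is a closest fair ranking to $\tilde{\pi}_T$, we have $\kappa(\tilde{\sigma}_T,\tilde{\pi}_T) \le \kappa(\sigma^*,\tilde{\pi}_T)$, so one more triangle inequality yields $\kappa(\tilde{\sigma}_T,\sigma^*) \le 2\,\kappa(\tilde{\pi}_T,\sigma^*) \le (12+6\gamma)\alpha\,\avg$. Finally, expanding the objective, $\obj(\tilde{\sigma}_T) \le n\cdot\kappa(\tilde{\sigma}_T,\sigma^*) + \opt$, and using $n\,\avg = \opt$, I get $\obj(\tilde{\sigma}_T) \le \bigl(1 + (12+6\gamma)\alpha\bigr)\opt \le (1 + 12.0001\alpha)\opt$ for $\gamma = 0.00001$.

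The main obstacle is the first step: the majority argument establishing $d(G_T,\sigma^*) \le 3\alpha\,\avg$, which is where the hypothesis on the pairwise intersections is consumed and where one must be careful that the orientation of each disagreeing pair is uniquely determined so the injection into the $I_r \cap I_s$ is valid. The remaining steps are routine triangle-inequality chaining, but they still require attention on two points: justifying the extended triangle inequality when one object is the tournament $G_T$ rather than a ranking, and tracking the two independent doublings — one producing the factor $(2+\gamma)$ from the FAS-approximation-plus-triangle step, and one from the closest-fair-ranking step — so that the constants multiply to exactly $12 + 6\gamma$, which is below the claimed $12.0001$.
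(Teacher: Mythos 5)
Your proposal is correct and follows essentially the same route as the paper: the key step bounding the optimal feedback arc set of $G_T$ by $|I_i\cap I_j|+|I_i\cap I_k|+|I_j\cap I_k|\le 3\alpha\,\avg$ via the majority argument is exactly the paper's set $B$, and the subsequent accounting of $\kappa(\tilde{\pi}_T,\sigma^*)\le (2+\gamma)\cdot 3\alpha\,\avg$ followed by the closest-fair-ranking and triangle-inequality chain yields the identical constant $12+6\gamma=12.00006<12.0001$. The only cosmetic difference is that you route the final chain through $\kappa(\tilde{\sigma}_T,\sigma^*)$ while the paper passes through $\tilde{\pi}_T$ twice; the bounds coincide.
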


\begin{proof}
    Recall that our algorithm iterates over all possible sets of three rankings from $S$ and, for each, forms an unweighted directed tournament graph. Consider the graph $G_T$ constructed for this set $T = \{\pi_i, \pi_j, \pi_k\}$. Our algorithm finds a $1.00001$-approximation solution of the optimal feedback arc set, then removes the edges to form $\tilde{G}_T$, and finally takes a topological ordering of $\tilde{G}_T$ to obtain a ranking $\tilde{\pi}_T$.

    Let us now bound the size of an optimal feedback arc set of $G_T$. Let $B = (I_i \cap I_j) \cup (I_j \cap I_k) \cup (I_i \cap I_k)$. Suppose for every pair in $B$, we remove the edges between the pair in $G_T$. Consider the remaining edges in the graph. As they are not in $B$, they must then clearly follow the relative order for the pair of elements in $\sigma^*$. Therefore, the resulting graph is acyclic, and so $|B|$ gives an upper bound on the size of an optimal feedback arc set of $G_T$.

    Now, consider the ranking $\tilde{\pi}_T$, obtained from taking a topological ordering of $\tilde{G}_T$. Since the number of pairs in different pairwise order from $\sigma^*$ is in the worst case, all the pairs in $B$ and the $1.00001|B|$ pairs in the output approximate feedback arc set, we get

    \begin{align}
    \label{eqn:case2intermed}
        \kappa(\tilde{\pi}_T, \sigma^*) & \le (1+0.00001)|B| + |B| \nonumber \\
        & \le 2.00001(3 \alpha \avg) && \text{(By the assumption of the lemma)} \nonumber \\
        & \le 6.00003 \alpha \avg
    \end{align}

    Next, Algorithm~\ref{alg:blackbox-fair} finds a fair ranking $\tilde{\sigma}_T$, which is a closest fair ranking to $\tilde{\pi}_T$ and adds this ranking to $L$. We can bound the objective value of $\tilde{\sigma}_T$ as follows, 


    \begin{align*}
        \obj(\tilde{\sigma}_T) &= \sum_{\pi_j \in S} \kappa(\tilde{\sigma}_T, \pi_j) \\
        & \le \sum_{\pi_j \in S} \left( \kappa(\tilde{\sigma}_T, \tilde{\pi}_T) + \kappa(\tilde{\pi}_T, \pi_j) \right) && \text{(By triangle inequality)} \\
        & \le \sum_{\pi_j \in S} \left( \kappa(\sigma^*, \tilde{\pi}_T) + \kappa(\tilde{\pi}_T, \pi_j) \right) && \text{(As $\tilde{\sigma}_T$ is a closest fair rank to $\tilde{\pi}_T$)} \\
        & \le 6.00003 \alpha \opt + \sum_{\pi_j \in S} \kappa(\tilde{\pi}_T, \pi_j) && \text{(By Equation~\ref{eqn:case2intermed})} \\
        & \le 6.00003 \alpha \opt + \sum_{\pi_j \in S} \left( \kappa(\tilde{\pi}_T, \sigma^*) + \kappa(\sigma^*, \pi_j) \right) && \text{(By triangle inequality)} \\
        & \le 6.00003 \alpha \opt + 6.00003 \alpha \opt + \opt && \text{(By Equation~\ref{eqn:case2intermed} and definition of $\opt$)} \\
        & \le (1+12.0001 \alpha) \opt .
    \end{align*}
\end{proof}

For the rest of the section, we additionally assume that for any set $T = \{\pi_i, \pi_j, \pi_k\}$ of three rankings from $S$, the following holds

\begin{equation}
\label{eqn:case2assumption}
    \exists r \ne s \in \{i, j, k\}, |I_r \cap I_s| > \alpha \avg .
\end{equation}

Finally, we show that assuming (\ref{eqn:case1assumption}) and (\ref{eqn:case2assumption}), there will be a ranking in $L$ which is guaranteed to be strictly better than a 3-approximation of the optimal. Without loss of generality, assume that the rankings are indexed in order of non-decreasing size of $I_i$. That is, $|I_1| \le |I_2| \le  ... \le |I_n|$. By averaging, 
\begin{equation}
    \label{eq:size-I1}
    |I_1| \le \avg.
\end{equation}

In the following, let $\ell$ be the smallest integer such that $|I_1 \cap I_\ell| \le \alpha \avg$ (if such an $\ell$ does not exist, then let $\ell = n+1$). Let $S' = \{\pi_2, \pi_3, ..., \pi_{\ell-1}\}$. Thus
\begin{equation}
    \label{eq:intersect-large}
    \forall \pi_r \in S', |I_r \cap I_1| > \alpha \avg.
\end{equation}

For any $\pi_j, \pi_k$, it is straightforward to observe that
\begin{equation}
\label{eqn:case3distance}
    \kappa(\pi_j, \pi_k) = |I_j| + |I_k| - 2 |I_j \cap I_k| .
\end{equation} 
To see the above, observe that pairs in neither $I_j$ nor $I_k$ will not contribute to the distance, and pairs in $I_j \cap I_k$ contribute twice to the term $|I_j| + |I_k|$ but have the same pairwise order in $\pi_j$ and $\pi_k$ and so do not contribute to the distance.

Now, consider
\[
S_1 : = \{\pi_j \mid |I_j \cap I_1| > \alpha \avg\}, \text{ and}
\]
\[
S_\ell : = \{\pi_j \mid |I_j \cap I_\ell| > \alpha \avg\}.
\]

From now on in the analysis, we consider a parameter $c \ge 1$, the exact value of which we will set later.
\begin{lemma}
\label{lemma:case3a}
    Suppose that (\ref{eqn:case1assumption}) and (\ref{eqn:case2assumption}) hold. If $|S_1| \ge n/c$, then $\obj(\sigma_1) \le \left(3 - \frac{2 \alpha}{c}\right) \opt$.
\end{lemma}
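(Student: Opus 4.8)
The plan is to bound $\obj(\sigma_1)=\sum_{\pi_j\in S}\kappa(\sigma_1,\pi_j)$ by splitting the sum according to whether $\pi_j\in S_1$, and routing the triangle inequality through a \emph{different} intermediate ranking in each case. For $\pi_j\notin S_1$ I would use the same crude route as in Lemma~\ref{lemma:case1}, namely $\kappa(\sigma_1,\pi_j)\le \kappa(\sigma_1,\sigma^*)+\kappa(\sigma^*,\pi_j)$, together with $\kappa(\sigma_1,\sigma^*)\le \kappa(\sigma_1,\pi_1)+\kappa(\pi_1,\sigma^*)\le 2|I_1|$ (the last step using that $\sigma_1$ is a closest fair ranking to $\pi_1$ and $\sigma^*$ is fair, so $\kappa(\sigma_1,\pi_1)\le \kappa(\sigma^*,\pi_1)=|I_1|$, and symmetry of $\kappa$). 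This yields $\kappa(\sigma_1,\pi_j)\le 2|I_1|+|I_j|$ for every $\pi_j$, which is exactly the generic bound; the whole point of the lemma is to improve it on $S_1$.

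For $\pi_j\in S_1$ I would instead route through $\pi_1$ to harvest the large overlap. Triangle inequality gives $\kappa(\sigma_1,\pi_j)\le \kappa(\sigma_1,\pi_1)+\kappa(\pi_1,\pi_j)\le |I_1|+\kappa(\pi_1,\pi_j)$. Then I apply the identity~(\ref{eqn:case3distance}), $\kappa(\pi_1,\pi_j)=|I_1|+|I_j|-2|I_1\cap I_j|$, and use the defining property $|I_1\cap I_j|>\alpha\,\avg$ of $S_1$ from~(\ref{eq:intersect-large}) to obtain
\[
\kappa(\sigma_1,\pi_j)\ <\ 2|I_1|+|I_j|-2\alpha\,\avg .
\]
Thus each $\pi_j\in S_1$ contributes a saving of $2\alpha\,\avg$ relative to the generic bound.

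Summing the two cases gives $\obj(\sigma_1)<\sum_{\pi_j\in S}\bigl(2|I_1|+|I_j|\bigr)-|S_1|\cdot 2\alpha\,\avg=2n|I_1|+\opt-2\alpha\,\avg\,|S_1|$, where I use $\sum_{\pi_j\in S}|I_j|=\sum_{\pi_j\in S}\kappa(\pi_j,\sigma^*)=\opt$. Finally I plug in $|I_1|\le \avg$ from~(\ref{eq:size-I1}), so $2n|I_1|\le 2\opt$, and the hypothesis $|S_1|\ge n/c$, so $2\alpha\,\avg\,|S_1|\ge 2\alpha\cdot\frac{\opt}{n}\cdot\frac{n}{c}=\frac{2\alpha}{c}\opt$. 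Combining, $\obj(\sigma_1)<\bigl(3-\tfrac{2\alpha}{c}\bigr)\opt$, as claimed.

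I do not expect a serious obstacle here; the argument is a refinement of Lemma~\ref{lemma:case1}. The one point requiring care is the bookkeeping of the \emph{dual routing}: one must route through $\sigma^*$ to get the uniform $2|I_1|+|I_j|$ term but through $\pi_1$ to expose the $-2|I_1\cap I_j|$ saving, and then verify the saving is collected precisely on $S_1$, which is where the cardinality bound $|S_1|\ge n/c$ converts the per-ranking gain $2\alpha\,\avg$ into the global gain $\frac{2\alpha}{c}\opt$. It is also worth noting (and I would remark) that assumptions~(\ref{eqn:case1assumption}) and~(\ref{eqn:case2assumption}), though stated in the hypothesis, are not actually invoked in this particular estimate; the only inputs the bound needs are $|I_1|\le \avg$ and the definition of $S_1$.
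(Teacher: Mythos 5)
Your proof is correct and follows essentially the same route as the paper's: bound $\kappa(\sigma_1,\pi_j)$ via $\pi_1$ using the closest-fair-ranking property, apply the identity $\kappa(\pi_1,\pi_j)=|I_1|+|I_j|-2|I_1\cap I_j|$, harvest the $2\alpha\,\avg$ saving on $S_1$, and finish with $|I_1|\le\avg$ and $|S_1|\ge n/c$. The only (immaterial) difference is that for $\pi_j\notin S_1$ you detour through $\sigma^*$ to get the generic bound $2|I_1|+|I_j|$, whereas the paper routes every term through $\pi_1$ and simply drops the nonnegative $2|I_1\cap I_j|$ savings outside $S_1$; your closing remark that assumptions~(\ref{eqn:case1assumption}) and~(\ref{eqn:case2assumption}) are not actually used in the estimate is also accurate.
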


\begin{proof}
    Recall that $\sigma_1$ is a closest fair ranking to $\pi_1$. Now, consider the objective value of $\sigma_1$, 

    \begin{align*}
        \obj(\sigma_1) &= \sum_{\pi_j \in S} \kappa(\pi_j, \sigma_1) \\
        & \le \sum_{\pi_j \in S} (\kappa(\pi_j, \pi_1) + \kappa(\pi_1, \sigma_1)) && \text{(By triangle inequality)} \\
        & \le \sum_{\pi_j \in S} (\kappa(\pi_j, \pi_1) + \kappa(\pi_1, \sigma^*)) && \text{(As $\sigma_1$ is a closest fair ranking to $\pi_1$)} \\
        & = \sum_{\pi_j \in S} (\kappa(\pi_j, \pi_1) + |I_1| ) && \text{(By definition of $|I_1|$)} \\
        & = \sum_{\pi_j \in S} \left( |I_j| + |I_1| - 2 |I_j \cap I_1|  \right) + \sum_{\pi_j \in S} |I_1| && \text{(By Equation~\ref{eqn:case3distance})} \\
        & \le \sum_{\pi_j \in S} (2 |I_1| + |I_j|) - \sum_{\pi_j \in S_1} 2 \alpha \avg && \text{(By Equation~\ref{eq:intersect-large})} \\
        & \le 2 \opt + \opt - \sum_{\pi_j \in S_1} 2 \alpha \avg && \text{(As $|I_1| \le \avg$)} \\
        & \le 3 \opt - 2 \alpha \opt \frac{|S_1|}{n} \\
        & \le \left( 3 - \frac{2 \alpha}{c} \right) \opt && \text{(As $|S_1| \ge n/c$)}.
    \end{align*}
    
\end{proof}

Now, when $c$ is too large, the above lemma only guarantees a poor approximation. However, in that case, we argue that $\sigma_\ell$ attains a much better approximation. Recall that $\ell$ is the smallest integer such that $|I_1 \cap I_\ell| \le \alpha \avg$. 

\begin{lemma}
\label{lemma:case3b}
    Suppose that (\ref{eqn:case1assumption}) and (\ref{eqn:case2assumption}) hold, and further $|S_1| < n/c$. Then, 
    \[
    \obj(\sigma_\ell) \le \left (3 + \frac{2 \beta}{c-1} - 2\left(1 - \frac{1}{c}\right) \alpha \right) \opt.
    \]
\end{lemma}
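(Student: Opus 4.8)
~

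The plan is to bound $\obj(\sigma_\ell)$ by comparing $\sigma_\ell$ against $\sigma^*$ through the pivot ranking $\pi_\ell$, much as in Lemma~\ref{lemma:case3a}, but now I must control $|I_\ell|$ and the intersection terms using the structural facts that flow from $|S_1| < n/c$. Following the same triangle-inequality opening as before, since $\sigma_\ell$ is a closest fair ranking to $\pi_\ell$, I would write
\begin{align*}
\obj(\sigma_\ell) &\le \sum_{\pi_j \in S}\left(\kappa(\pi_j,\pi_\ell) + \kappa(\pi_\ell,\sigma^*)\right) \\
&= \sum_{\pi_j \in S}\left(|I_j| + |I_\ell| - 2|I_j \cap I_\ell|\right) + \sum_{\pi_j \in S}|I_\ell|,
\end{align*}
using Equation~\ref{eqn:case3distance} and $\kappa(\pi_\ell,\sigma^*)=|I_\ell|$. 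This gives $\obj(\sigma_\ell) \le \opt + 2n|I_\ell| - 2\sum_{\pi_j \in S}|I_j \cap I_\ell|$, so the whole argument reduces to getting a good upper bound on $|I_\ell|$ and a good lower bound on the sum of intersections.

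The key structural input is the definition of $\ell$ and the bound $|S_1| < n/c$. First, I expect to bound $|I_\ell|$: by minimality of $\ell$, every $\pi_r$ with $r < \ell$ (i.e.\ $\pi_r \in S' \cup \{\pi_1\}$) satisfies $|I_r \cap I_1| > \alpha\avg$, so all of $\pi_2,\dots,\pi_{\ell-1}$ lie in $S_1$; combined with $\pi_1$ this forces $\ell - 1 \le |S_1| < n/c$, hence $\ell < n/c + 1$. Since the $|I_j|$ are sorted in non-decreasing order and $I_\ell$ is the $\ell$-th smallest, I can bound $|I_\ell|$ by an averaging/ranking argument: the at least $n - \ell + 1 > n(1 - 1/c)$ rankings with index $\ge \ell$ each have $|I_j| \ge |I_\ell|$, and since $\sum_j |I_j| = \opt = n\,\avg$, this yields roughly $|I_\ell| \le \frac{n\,\avg}{n(1-1/c)} = \frac{\avg}{1 - 1/c} = \frac{c}{c-1}\avg$. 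This is exactly where the $\frac{1}{c-1}$ in the target bound will come from. Second, for the intersection sum I would split $S$ by membership in $S_\ell$: for $\pi_j \in S_\ell$ I only know $|I_j \cap I_\ell| \ge 0$, but the real gain comes from lower-bounding $|I_j \cap I_\ell|$ via assumption~(\ref{eqn:case2assumption}) applied to triples, and from the relation between $I_j$, $I_\ell$, and $I_1$; the factor $2(1-1/c)\alpha$ in the bound suggests that for the $> n(1-1/c)$ many indices outside $S_1$ we can extract an intersection of size $> \alpha\avg$ with $I_\ell$, contributing $-2\alpha\avg \cdot n(1-1/c)$ to the objective, i.e.\ the $-2(1-1/c)\alpha\,\opt$ term.

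Concretely, I would argue that for any $\pi_j \notin S_1$ the triple $\{\pi_1, \pi_j, \pi_\ell\}$ together with~(\ref{eqn:case2assumption}) forces a large pairwise intersection among some two of $I_1, I_j, I_\ell$; since $|I_j \cap I_1| \le \alpha\avg$ (as $\pi_j \notin S_1$) and $|I_1 \cap I_\ell| \le \alpha\avg$ (by definition of $\ell$), the only surviving possibility is $|I_j \cap I_\ell| > \alpha\avg$, so $\pi_j \in S_\ell$. Hence every $\pi_j \notin S_1$ lies in $S_\ell$, giving $|S_\ell| \ge n - |S_1| > n(1 - 1/c)$, and each such $\pi_j$ contributes $|I_j \cap I_\ell| > \alpha\avg$. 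Plugging these two bounds in:
\[
\obj(\sigma_\ell) \le \opt + 2n \cdot \frac{c}{c-1}\avg - 2 \cdot n\Bigl(1-\tfrac1c\Bigr)\cdot \alpha\avg,
\]
and recalling $\avg = \opt/n$ this becomes $\obj(\sigma_\ell) \le \opt + \frac{2c}{c-1}\opt - 2(1-1/c)\alpha\,\opt$. I would then need to reconcile the constant $1 + \frac{2c}{c-1} = 3 + \frac{2}{c-1}$ with the stated $3 + \frac{2\beta}{c-1}$; the discrepancy means the $|I_\ell|$ bound must be sharpened using assumption~(\ref{eqn:case1assumption}), namely $|I_j| > (1-\beta)\avg$ for all $j$, which lets me replace the crude averaging bound on $|I_\ell|$ by $|I_\ell| \le \frac{\avg - (1-\beta)(\ell-1)/n \cdots}{\cdots}$, ultimately tightening $\frac{c}{c-1}$ to $1 + \frac{\beta}{c-1}$.

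The main obstacle I anticipate is precisely this sharpening of the $|I_\ell|$ bound: the naive ranking argument only gives $|I_\ell| \le \frac{c}{c-1}\avg$, whereas the target requires $n|I_\ell| \le (1 + \frac{\beta}{c-1})\opt$, so I must feed in the lower bound $|I_j| > (1-\beta)\avg$ for the $\ell - 1 < n/c$ small-index rankings to show $\sum_{j \ge \ell}|I_j| \ge \opt - (\ell-1)\cdot(\text{something})$ more carefully, then divide by the count $n - \ell + 1$. Getting the bookkeeping of these two assumptions to interact correctly — using~(\ref{eqn:case1assumption}) to bound $|I_\ell|$ from above and~(\ref{eqn:case2assumption}) to bound the intersection sum from below — is the delicate part; the triangle-inequality skeleton and the decomposition~(\ref{eqn:case3distance}) are routine by analogy with Lemma~\ref{lemma:case3a}.
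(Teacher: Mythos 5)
Your proposal is correct and takes essentially the same route as the paper: the same triangle-inequality opening via Equation~\ref{eqn:case3distance}, the same deduction that every $\pi_j \notin S_1$ must lie in $S_\ell$ (via the triple $\{\pi_1,\pi_j,\pi_\ell\}$ and assumption~(\ref{eqn:case2assumption})) so that $|S_\ell| \ge (1-\tfrac1c)n$, and the same sharpened bound $|I_\ell| \le (1+\tfrac{\beta}{c-1})\avg$ obtained by feeding assumption~(\ref{eqn:case1assumption}) into the sorted-order averaging (the paper phrases this last step as a proof by contradiction with $\lambda = \beta/(c-1)$, but the computation is identical to the direct one you outline). The sharpening you flag as the delicate part does go through exactly as you describe: $(n-\ell+1)\,|I_\ell| \le \opt - (\ell-1)(1-\beta)\avg$ with $\ell - 1 < n/c$ yields the required $1+\tfrac{\beta}{c-1}$ factor.
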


\begin{proof}

    Since $|S_1| \le n/c$ and we assume (\ref{eqn:case2assumption}), it must be that $|S_\ell| \geq 1 - n/c$. Recall that $\sigma_\ell$ is a closest fair ranking to $\pi_\ell$. Let us now bound the objective value of $\sigma_\ell$,

    \begin{align}\label{eq:ell-bound}
        \obj(\sigma_\ell) &= \sum_{\pi_j \in S} \kappa(\pi_j, \sigma_\ell) \nonumber \\
        & \le \sum_{\pi_j \in S} (\kappa(\pi_j, \pi_\ell) + \kappa(\pi_\ell, \sigma_\ell)) && \text{(By triangle inequality)} \nonumber \\
        & \le \sum_{\pi_j \in S} (\kappa(\pi_j, \pi_\ell) + \kappa(\pi_\ell, \sigma^*)) && \text{(As $\sigma_\ell$ is a closest fair ranking to $\pi_\ell$)} \nonumber \\
        & = \sum_{\pi_j \in S} (\kappa(\pi_j, \pi_\ell) + |I_\ell|) && \text{(By definition of $I_\ell$)} \nonumber \\
        & \le \sum_{\pi_j \in S} (2 |I_\ell| + |I_j|) - \sum_{\pi_j \in S_\ell} 2 \alpha \avg.
    \end{align}

    We now establish an upper bound on the size of $I_\ell$. In particular, we argue that
    \begin{equation}
        \label{eq:ell-size}
        \left |I_\ell \right | \le \left(1 + \frac{\beta}{c-1}\right) \avg.
    \end{equation}
    
    For the sake of contradiction, assume that $|I_\ell| > (1 + \lambda) \avg$, for $\lambda = \beta/(c-1)$. Recall, $S' = \{\pi_2, \pi_3, ..., \pi_{\ell-1}\}$. Then, we get
    \begin{align*}
        \opt &= \sum_{\pi_j \in S} \kappa(\sigma^*, \pi_j) \\
        & > (1 - \beta) \avg |S'| + (1 + \lambda) \avg |S \setminus S'| && \text{(As we assume (\ref{eqn:case1assumption}))} \\
        & = \opt + (\lambda |S \setminus S'| - \beta |S'|) \avg \\
        & = \opt + \lambda n - (\lambda + \beta) |S'| \avg
    \end{align*}
    Now, since $|S_1| < n/c$ and by the definition $S' \subseteq S_1$, $|S'| < n/c$. Then, for $\lambda = \beta/(c-1)$,
    \[
    \lambda n - (\lambda + \beta) |S'| > \frac{\beta}{c-1} \cdot n - \beta \cdot \frac{c}{c-1} \cdot \frac{n}{c} \ge 0,
    \]
    leading to a contradiction.

    Next, by our obtained bound on the objective value of $\sigma_\ell$ in Equation~\ref{eq:ell-bound},

    \begin{align*}
        \obj(\sigma_\ell) &\le \sum_{\pi_j \in S} (2 |I_\ell| + |I_j|) - \sum_{\pi_j \in S_\ell} 2 \alpha \avg \\
        & \le 2 \left(1 + \frac{\beta}{c-1}\right) \opt + \opt - \sum_{\pi_j \in S_\ell} 2 \alpha \avg && \text{(By Equation~\ref{eq:ell-size})}\\
        & \le 2\left(1 + \frac{\beta}{c-1}\right) \opt + \opt - 2\left(1 - \frac{1}{c} \right) \alpha \opt && \text{(As we assumed $|S_\ell| \ge \left(1 - \frac{1}{c}\right)n$)} \\
        & \le \left(3 + \frac{2 \beta}{c-1} - 2\left(1 - \frac{1}{c} \right) \alpha \right) \opt .
    \end{align*}
\end{proof}

Now, we conclude the proof of Theorem~\ref{thm:below3approx}.

\begin{proof}[Proof of Theorem~\ref{thm:below3approx}]
    By combining Lemmas~\ref{lemma:case1},~\ref{lemma:case2},~\ref{lemma:case3a},~\ref{lemma:case3b}, we deduce that $L$ must contain a fair ranking which has objective value at most the maximum of: $(3-2\beta) \opt$, $(1+12.0001\alpha) \opt$, $\left(3 - \frac{2 \alpha}{c}\right) \opt$, and $\left(3 + \frac{2 \beta}{c-1} - 2\left(1-\frac{1}{c}\right)\alpha\right) \opt$.

    By setting $\alpha = 0.1567$, $\beta = 0.0598$, and $c = 2.62$, we guarantee that in the worst case, at least one of the fair rankings in $L$ is guaranteed to have objective value at most $2.881 \opt$.
\end{proof}

\paragraph{Approximate fair aggregate ranking using an approximately close fair ranking algorithm. }

Suppose that we only have an $f$-approximate algorithm to find a closest fair ranking. Let the $f$-approximate closest fair ranking be applied with input ranking $\pi_i$ to obtain a $f$-approximate closest fair rank $\sigma_i$. Then the guarantee is that $\kappa(\pi_i, \sigma_i) \le f \kappa(\pi_i, \sigma^*)$. Therefore, the analysis of the above lemmas will need to be modified to account for this additional factor $f$ whenever the closest fair ranking algorithm is applied. This leads to the approximation factors as follows: Lemma~\ref{lemma:case1} gives $(f + 2 - (f+1)\beta) \opt$, Lemma~\ref{lemma:case2} gives $(1 + 6.00003f\alpha + 6.00003\alpha)\opt$, Lemma~\ref{lemma:case3a} gives $\left(f+2 - \frac{2 \alpha}{c} \right) \opt$, Lemma~\ref{lemma:case3b} gives $\left(f+2 +(f+1)\frac{\beta}{c-1}  - 2(1 - \frac{1}{c})\alpha\right) \opt$. For instance, using a 2-approximate closest fair ranking, with the parameters $\alpha = 0.1599$, $\beta = 0.04068$, and $c = 2.62$, our algorithm finds a $3.878$-approximate fair aggregate ranking, improving on the simple 4-approximation derived from~\cite{wei2022rank,chakraborty2022}.

\section{A faster generic fair rank aggregation algorithm}{\label{sec:faster-generic-algo}}

In this section, we discuss how to improve the running time (to bring down the dependency on the number of inputs from cubic to near-linear) of the algorithm discussed in Section~\ref{sec:improvedfra} while incurring only a small additional increase in the approximation factor. Let us first recall certain details of the algorithm. The algorithm iterates through all possible sets of three rankings of $S$ and, for each such set, finds a candidate ranking and adds it to $L$. Therefore, our algorithm has to find an approximate feedback arc set $O(n^3)$ times, and moreover, there are $O(n^3)$ rankings in $L$. Another bottleneck is the computation of the objective value for a ranking, which takes time $O(nd \log d)$ due to needing to compute the cost with respect to the whole input $S$.

\begin{theorem}
\label{thm:samplingfra}
    Suppose there is a $t(d)$-time algorithm $\mathcal{A}$ that solves the closest fair ranking problem. Then, there exists a randomized $2.888$-approximation algorithm for the fair rank aggregation problem with running time $O(d^3 \log^3 n \log d + n t(d) + n d \log n \log d)$. The algorithm succeeds with probability at least $1 - \frac{1}{n}$.
\end{theorem}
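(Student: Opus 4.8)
The plan is to accelerate Algorithm~\ref{alg:blackbox-fair} by replacing its two $n$-heavy components—iterating over all $\binom{n}{3}$ triples and computing $\obj(S,\cdot)$ exactly for each of the $O(n^3)$ candidates—with randomized surrogates, while re-running the case analysis of Lemmas~\ref{lemma:case1}--\ref{lemma:case3b} on the sampled data. Concretely, I would still compute a closest fair ranking $\sigma_i$ for every $\pi_i\in S$ and put all of them into $L$; this is the source of the $n\,t(d)$ term, and these are exactly the witnesses used by the three ``$\sigma_i$-based'' lemmas (Lemmas~\ref{lemma:case1},~\ref{lemma:case3a},~\ref{lemma:case3b}), so those cases remain handled deterministically and verbatim. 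The only case whose witness is a triple-derived ranking $\tilde\sigma_T$ is Lemma~\ref{lemma:case2}; for it I would draw a uniform sample $R\subseteq S$ of $m=O(\log n)$ rankings and generate $\tilde\sigma_T$ only for the $O(\log^3 n)$ triples inside $R$, which yields the $d^3\log^3 n\log d$ term since each approximate feedback-arc-set call costs $O(d^3\log d)$.

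For the evaluation bottleneck I would build a coreset. Since $\obj(S,\sigma)=\sum_{\pi\in S}\kappa(\pi,\sigma)$ is exactly a $1$-median cost in the Kendall-tau metric, I would use standard sensitivity/importance sampling (bootstrapped from a crude constant-factor median obtained by uniform sampling in near-linear time) to construct a reweighted subset $S'$ of size $O(\epsilon_0^{-2}\log n)=O(\log n)$ such that, with probability at least $1-\tfrac{1}{2n}$, the estimate $\widehat{\obj}(\sigma):=\sum_{\pi\in S'} w_\pi\,\kappa(\pi,\sigma)$ is within a $(1\pm\epsilon_0)$ factor of $\obj(S,\sigma)$ simultaneously for all $O(n)$ candidates in $L$ (a union bound over $L$ placed inside the concentration inequality forces the $\log n$ sample size). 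Evaluating every candidate against $S'$ then costs $O(|L|\cdot|S'|\cdot d\log d)=O(n\,d\log n\log d)$, and outputting the candidate minimizing $\widehat{\obj}$ returns a ranking whose true objective exceeds $\min_{\sigma\in L}\obj(S,\sigma)$ by at most a $(1+O(\epsilon_0))$ factor.

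The main obstacle is the sampled version of Lemma~\ref{lemma:case2}. I must show that whenever the deterministic algorithm is \emph{forced} to use a triple-based witness—that is, in the regime where assumption~(\ref{eqn:case1assumption}) holds, $|S_1|<n/c$, and assumption~(\ref{eqn:case2assumption}) fails, so that no $\sigma_i$ candidate is guaranteed good—a uniform sample of $O(\log n)$ rankings contains a triple $T$ with $|I_r\cap I_s|\le\alpha\,\avg$ for all distinct $r,s\in T$, with probability at least $1-\tfrac{1}{2n}$. The difficulty is that the failure of assumption~(\ref{eqn:case2assumption}) only guarantees the \emph{existence} of one good triple, which a sparse sample can easily miss; the crux is therefore to upgrade existence to abundance. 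I would model the instance by the conflict graph $H$ on $[n]$ whose edges are the pairs with $|I_i\cap I_j|>\alpha\,\avg$, and argue that in this regime $H$ must be sparse enough (equivalently, small-intersection ``good'' pairs form a sufficiently large constant fraction) that its complement contains $\Omega(n^3)$ triangles; a triangle-supersaturation / second-moment computation then shows that an $O(\log n)$-vertex sample spans a good triple except with probability $n^{-\Omega(1)}$, tunable below $\tfrac{1}{2n}$. Establishing the required sparsity is the delicate point, and I expect to slightly relax the thresholds $\alpha,\beta,c$ to make it go through.

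Finally, I would fix the constants ($\alpha,\beta,c,\epsilon_0$, and the hidden constant in $m=O(\log n)$) so that three things hold at once: re-optimizing the four bounds of Theorem~\ref{thm:below3approx} under the relaxed thresholds, inflated by the $(1\pm\epsilon_0)$ coreset slack, gives an objective at most $2.888\,\opt$; a union bound over the coreset-accuracy event and the good-triple-capture event keeps the total failure probability below $1/n$; and the three running-time contributions sum to $O(d^3\log^3 n\log d + n\,t(d) + n\,d\log n\log d)$, establishing Theorem~\ref{thm:samplingfra}.
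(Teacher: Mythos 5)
Your overall architecture matches the paper's: keep a closest fair ranking $\sigma_i$ for every input (giving the $n\,t(d)$ term), restrict the triple enumeration to an $O(\log n)$-size random sample, and replace exact objective evaluation by a Kendall-tau $1$-median coreset of size $O(\log n)$ with a union bound over the candidate set. The coreset half of your plan is essentially what the paper does and is fine. The gap is in the other half: the claim that whenever no $\sigma_i$-based witness suffices, the complement of the conflict graph $H$ contains $\Omega(n^3)$ triangles, so that a uniform $O(\log n)$-vertex sample spans a good triple. This dichotomy is false as stated. The regime not covered by Lemmas~\ref{lemma:case1},~\ref{lemma:case3a},~\ref{lemma:case3b} taken \emph{verbatim} is ``$|S_1|<n/c$ and assumption~(\ref{eqn:case2assumption}) fails,'' i.e.\ the complement of $H$ has \emph{at least one} triangle; nothing forces it to have $\Omega(n^3)$ of them. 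Concretely, the good triples can all be concentrated on a single complement edge $\{1,\ell\}$: if there are $\epsilon n$ vertices $m$ with $|I_1\cap I_m|\le\alpha\avg$ and $|I_\ell\cap I_m|\le\alpha\avg$ but all other pairs conflict, the complement has only $O(n)$ triangles, every one of them contains both $1$ and $\ell$, and a uniform $O(\log n)$ sample hits such a triple with probability only $O(\log^2 n/n^2)$. Meanwhile the bound of Lemma~\ref{lemma:case3b} degrades additively by $2\alpha\epsilon\,\opt$ (since now only $|S_\ell|\ge n-n/c-\epsilon n$), which for moderate $\epsilon$ pushes past any target below $3$; re-tuning $\alpha,\beta,c$ cannot remove an obstruction that scales with $\epsilon$. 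A global edge-count/supersaturation argument cannot rescue this either, since there is no useful upper bound on the density of $H$ in this regime (the relation $|I_i\cap I_j|\le(|I_i|+|I_j|)/2$ gives nothing better than the trivial bound), and a dense triangle-free complement (e.g.\ complete bipartite, or a few high-degree hubs) is exactly the configuration the case analysis must contend with.

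What is actually needed — and what the paper does — is a restructured case decomposition rather than an ``existence to abundance'' upgrade of Lemma~\ref{lemma:case2}. The paper considers all $n/s$ candidates $\pi_k$ for the role of $\pi_1$, and for each a set $C_k$ of $\ge n/s$ candidates for the role of $\pi_\ell$, and for each such pair the set $F_{k,\ell}$ of compatible third rankings. Two \emph{new} intermediate lemmas (Lemmas~\ref{lemma:samplingone} and~\ref{lemma:samplingtwo}) show that if any $\pi_k$ with $k\le n/s$ has $|S_k|\ge n/c$, or any pair $(k,\ell)$ has $|F_{k,\ell}|<n/s$, then the corresponding $\sigma_k$ or $\sigma_\ell$ (already in $L$) is a good witness — with slightly worse constants than Lemmas~\ref{lemma:case3a} and~\ref{lemma:case3b}, because $|I_k|$ and $|I_\ell|$ are only bounded via their index being at most $n/s$ or $n/c+2n/s$. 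Only in the residual case does one get the nested $(n/s)\times(n/s)\times(n/s)$ family of good triples that a $\Theta(\log n)$-rate Bernoulli sample captures with probability $1-n^{-2}$ (Lemma~\ref{lemma:samplingsbar}). This restructuring is precisely why the constant degrades from $2.881$ to $2.888$. Your proposal is missing these intermediate cases, and without them the sampled version of Lemma~\ref{lemma:case2} cannot be established.
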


Before we describe the modifications, we first introduce an important tool needed for the algorithm, namely \emph{coreset}.

\begin{definition}[$\gamma$-coreset]
For a set $S$ of points in an arbitrary metric space $\mathcal{X}$ with the associated distance function $\dist(\cdot,\cdot)$, and an implicit set $X \subseteq \mathcal{X}$ (of potential medians), a weighted subset $P \subseteq S$ (with a weight function $w : P \rightarrow \mathbb{R}$) is a \emph{$\gamma$-coreset} (for any $\gamma >0$) of $S$ with respect to $X$ for the ($1$-)median problem if for any $x \in X$,
\[ (1 - \gamma)\sum_{y \in S} \dist(y,x) \le \sum_{p \in P} w(p) \cdot \dist(p,x) \le (1+\gamma)\sum_{y \in S} \dist(y,x).\]
\end{definition}
In simple words, the above-weighted subset $P$ is sufficient to well-approximate the objective value of any potential median point. There are many constructions for coresets in the literature. For our algorithm, we consider the following coreset construction by~\cite{feldman2011unified,bachem2018one,braverman2021coresets}.

\begin{theorem}[~\cite{feldman2011unified,bachem2018one,braverman2021coresets}]
\label{thm:coreset}
    There is an algorithm that, given a set $S$ of $n$ points of an arbitrary metric space $\mathcal{X}$ and an implicit set $X \subseteq \mathcal{X}$ (without loss of generality, assume $S \subseteq X$) and $\gamma > 0$, outputs a $\gamma$-coreset of $S$ with respect to $X$ for the median problem, of size $O(\gamma^{-2} \log |X|)$. The algorithm succeeds with probability at least $1 - \frac{1}{n^2}$ and runs in time $O(\gamma^{-2} n \log |X|)$.
\end{theorem}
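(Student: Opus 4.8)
The plan is to keep the deterministic Algorithm~\ref{alg:blackbox-fair} essentially intact but make two changes, each attacking one bottleneck, and then re-run the case analysis of Theorem~\ref{thm:below3approx} on top of them. First, I would still compute a closest fair ranking $\sigma_i$ for \emph{every} input $\pi_i$ (this is the $n\,t(d)$ term, and it keeps inside $L$ all the rankings that Lemmas~\ref{lemma:case1},~\ref{lemma:case3a}, and~\ref{lemma:case3b} rely on), but I would replace the $\Theta(n^3)$ triples by the triples of a single uniform sample $\hat S\subseteq S$ of size $m=\Theta(\log n)$. Only the $O(m^3)=O(\log^3 n)$ triples drawn from $\hat S$ are turned into candidate rankings via the approximate feedback-arc-set routine of~\cite{mathieu2009rank}, accounting for the $d^3\log^3 n\log d$ term. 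Second, once the (now $O(n)$-sized) candidate set $L$ is built, I would invoke Theorem~\ref{thm:coreset} with a small constant $\gamma$ and implicit set $X=S\cup L$ (so $\log|X|=O(\log n)$) to obtain a weighted coreset $P$ of size $O(\log n)$, and evaluate $\obj(S,\sigma)$ for each $\sigma\in L$ through $P$ instead of through all of $S$; since $|L|=O(n)$ and each evaluation costs $O(|P|\,d\log d)$, this gives the $n\,d\log n\log d$ term. Because $P$ is built after $L$ is fixed, its $(1\pm\gamma)$ guarantee holds simultaneously for every candidate in $L$ with probability $1-1/n^2$, so the ranking minimizing the coreset objective is within a factor $(1+\gamma)/(1-\gamma)$ of the true best ranking in $L$.

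With these changes, Lemmas~\ref{lemma:case1},~\ref{lemma:case3a}, and~\ref{lemma:case3b} carry over essentially verbatim: each only asserts that a \emph{specific} ranking among $\sigma_1,\dots,\sigma_n$ (a $\sigma_i$ with small $|I_i|$, or $\sigma_1$, or $\sigma_\ell$) is a good aggregate, and all of these are still in $L$. The only part of the argument that genuinely used all $\binom n3$ triples is Lemma~\ref{lemma:case2}, whose conclusion I must now recover from the sampled triples alone. I would do this via a dichotomy on the set $B$ of ``bad'' inputs $\pi_j$ lying far from both analysis pivots, i.e. $|I_j\cap I_1|\le\alpha\,\avg$ and $|I_j\cap I_\ell|\le\alpha\,\avg$. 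A short calculation shows that whenever the bounds of Lemmas~\ref{lemma:case3a} and~\ref{lemma:case3b} are both loose (so the output is not already below $2.888\,\opt$), the pivots satisfy $|S_1|$ and $|S_\ell|$ strictly below $n$ by constant margins, which forces $|B|=\Omega(n)$.

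The dichotomy is then: for $\pi_b\in B$ let $c_b:=|\{j:|I_j\cap I_b|>\alpha\,\avg\}|$ count the inputs close to it. If some $\pi_b\in B$ is close to $\Omega(n)$ inputs and is itself not too far from $\sigma^*$, then the kept ranking $\sigma_b$ is already a good aggregate: the same averaging as in Lemma~\ref{lemma:case3a} applied with $\pi_b$ as pivot yields $\obj(\sigma_b)\le \opt+2n|I_b|-2\sum_j|I_j\cap I_b|$, where the many close inputs subtract a $\Omega(\alpha\,\avg\,n)$ term, and $\sigma_b\in L$. Otherwise every bad input is far from almost all other bad inputs, so the ``far-pair'' graph on $B$ (join $\pi_j,\pi_{j'}$ when $|I_j\cap I_{j'}|\le\alpha\,\avg$) has minimum degree $(1-o(1))|B|$ and is dense in triangles; a uniformly random triple of $B$-points is then good with constant probability. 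Since $|B|=\Omega(n)$, a Chernoff bound shows $\hat S$ contains $\Omega(\log n)$ points of $B$ with probability $1-1/\mathrm{poly}(n)$, and among $\Omega(\log n)$ such points a good triple appears with probability $1-1/\mathrm{poly}(n)$; for that triple Lemma~\ref{lemma:case2} delivers a $\tilde\sigma_T\in L$ below $(1+12.0001\alpha)\,\opt$.

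Finally I would assemble the pieces as in the proof of Theorem~\ref{thm:below3approx}: in every case $L$ contains a ranking whose true objective is at most $2.881\,\opt$ up to the constant slack introduced by the dichotomy thresholds, and the coreset inflates this by $(1+\gamma)/(1-\gamma)$; choosing $m=\Theta(\log n)$, the slack constants, and $\gamma$ appropriately keeps the final ratio at most $2.888$. A union bound over the coreset failure ($\le 1/n^2$) and the two sampling events ($\le 1/n$ in total) yields success probability at least $1-1/n$. The step I expect to be the real obstacle is precisely this case-2 recovery: making the clustered-versus-spread dichotomy quantitative, so that the ``spread'' branch provably forces enough mutually far bad inputs for an $O(\log n)$ sample to capture a good triple while the ``clustered'' branch always exhibits a kept $\sigma_b$ beating the target, and doing so with the \emph{same} constants $\alpha,\beta,c$ that drive the $2.881$ bound, only slightly relaxed to absorb the sampling and coreset errors.
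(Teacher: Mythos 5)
You have not proved the statement you were asked to prove. The statement is Theorem~\ref{thm:coreset} itself: the existence of an algorithm that, for $n$ points in an arbitrary metric space and an implicit candidate-median set $X$, constructs a $\gamma$-coreset for the $1$-median objective of size $O(\gamma^{-2}\log|X|)$ in time $O(\gamma^{-2} n \log |X|)$ with success probability at least $1-\frac{1}{n^2}$. Your proposal instead designs a faster fair rank aggregation algorithm --- essentially reconstructing Theorem~\ref{thm:samplingfra} and its proof (sampling $\Theta(\log n)$ rankings, re-running the case analysis of Lemmas~\ref{lemma:case1},~\ref{lemma:case3a},~\ref{lemma:case3b}, and recovering Lemma~\ref{lemma:case2} via a clustered-versus-spread dichotomy) --- and along the way it \emph{invokes} Theorem~\ref{thm:coreset} as a black box (``I would invoke Theorem~\ref{thm:coreset} with a small constant $\gamma$\dots''). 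Relative to the target statement this is circular: you assume exactly what you were asked to establish. Note that the paper itself does not prove this theorem either; it is imported by citation from~\cite{feldman2011unified,bachem2018one,braverman2021coresets}, so the expected content of a proof is the construction from those works, not an application of it.

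Concretely, a proof of Theorem~\ref{thm:coreset} would follow the sensitivity (importance) sampling framework: first compute a constant-factor approximate $1$-median $c$ of $S$ (doable in time linear in $n$, e.g.\ by sampling candidate centers); define for each point $p \in S$ a sensitivity bound of the form $s(p) = O\left(\frac{\dist(p,c)}{\sum_{q \in S}\dist(q,c)} + \frac{1}{n}\right)$, whose total is $O(1)$; sample $m = O(\gamma^{-2}\log|X|)$ points independently with probability proportional to $s(p)$ and weight each sampled point by the inverse of $m$ times its sampling probability, so the weighted sum is an unbiased estimator of $\sum_{y \in S}\dist(y,x)$ for every $x$; then apply a Bernstein-type concentration bound --- the sensitivity normalization is precisely what controls the variance and range of each summand relative to the objective value --- and finish with a union bound over all $|X|$ candidate medians, which is where the $\log|X|$ factor in the coreset size comes from, and a choice of constants giving failure probability at most $n^{-2}$. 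None of this machinery (sensitivities, unbiasedness, variance control, the union bound over $X$) appears in your proposal, so the statement remains unproved. Separately, and as a minor point, the application-level argument you did write is a reasonable sketch of Section~\ref{sec:faster-generic-algo}, though the paper's own proof of Lemma~\ref{lemma:samplingsbar} replaces your dichotomy with a direct pigeonhole argument via the sets $C_k$ and $F_{k,\ell}$ under assumptions~(\ref{eqn:assumption1sampling}) and~(\ref{eqn:assumption2sampling}), with Lemmas~\ref{lemma:samplingone} and~\ref{lemma:samplingtwo} handling the complementary cases.
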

It is worth noting that the original result is for a more general \emph{$k$-median} problem; however, since we are only interested in the (1-)median problem, we state the theorem accordingly. Here, we are interested in solving the median problem over the Kendall-tau metric.

\paragraph{Description of the algorithm}
The algorithm is similar to the deterministic one described in the previous section, but with only two major changes. So, we only highlight those two modifications below.

First, each ranking in $S$ is independently sampled to be included in a set $\bar{S}$ with probability $\frac{4s \log n}{n}$, for some constant $s > 1$, the value of which we will fix later. Then, we only iterate through all possible sets of three rankings of $\bar{S}$ (instead of $S$). Note that for each input ranking $\pi_i \in S$, we still find a closest fair ranking $\sigma_i$ and add it to $L$.

Second, we additionally construct a 1.0001-coreset $P$ of $S$ (with respect to the implicit potential median set $L$) using the algorithm of Theorem~\ref{thm:coreset}. Our modified algorithm finally outputs a ranking $\sigma$ that minimizes the weighted sum of distances to the coreset $P$, that is, $ \arg \min_{\sigma \in L} \sum_{\pi \in P} w(\pi) \cdot \kappa(\pi, \sigma)$. 

\paragraph{Analysis}

Before analyzing the above-modified version of our algorithm, let us recall some definitions. For any ranking $\pi_i \in S$, we define $I_i := \{(a, b) \mid a \prec_{\pi_i} b, b \prec_{\sigma^*} a \}$. Further, without loss of generality, we assume that the rankings of $S$ are indexed in order of non-decreasing $|I_i|$, that is, $|I_1| \le |I_2| \le ... \le |I_n|$. Finally, for any ranking $\pi_i \in S$, we define the following set
\[
S_i : = \{\pi_j \mid |I_i \cap I_j| > \alpha \avg\}.
\]

As before, let $\alpha, \beta \in [0, 1], c \ge 1$ be parameters, the exact values of which we will set later. Again, we emphasize that these values are used solely for the purpose of analyzing the algorithm.

As the algorithm still finds a closest fair ranking to each input ranking and adds it to $L$, Lemma~\ref{lemma:case1} still applies. We therefore assume in the remainder of the analysis that
\[\forall \pi_i \in S, |I_i| > (1-\beta) \avg .\]

We first show that if there is a $\pi_k$ among the first $\frac{n}{s}$ many rankings, which meets the condition that $|S_k| \ge \frac{n}{c}$, $\sigma_k$ will give a good approximation.
\begin{lemma}
\label{lemma:samplingone}
    Suppose that there exists a $\pi_k \in S$, for $k \le \frac{n}{s}$, such that $|S_k| \ge \frac{n}{c}$. Then 
    \[
    \obj(\sigma_k) \le \left(3 + \frac{2 \beta}{s - 1} - \frac{2 \alpha}{c}\right) \opt.
    \]
\end{lemma}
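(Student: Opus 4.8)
The plan is to mirror the proof of Lemma~\ref{lemma:case3a}, replacing $\pi_1$ by $\pi_k$, where the one genuinely new ingredient is an upper bound on $|I_k|$. In Lemma~\ref{lemma:case3a} the bound $|I_1| \le \avg$ was free by averaging (Equation~\ref{eq:size-I1}); here the index $k$ may be as large as $n/s$, so $|I_k|$ can exceed $\avg$, and we must instead prove the analogue of Equation~\ref{eq:ell-size}, namely $|I_k| \le \left(1 + \frac{\beta}{s-1}\right)\avg$.

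First I would establish this bound by contradiction, exactly as in the derivation of Equation~\ref{eq:ell-size}. Suppose $|I_k| > (1+\lambda)\avg$ with $\lambda = \beta/(s-1)$. Since the rankings are indexed in non-decreasing order of $|I_i|$, every $\pi_j$ with $j \ge k$ has $|I_j| \ge |I_k| > (1+\lambda)\avg$, while each of the $k-1 < n/s$ remaining rankings satisfies $|I_j| > (1-\beta)\avg$ by the standing assumption (\ref{eqn:case1assumption}). Summing these lower bounds over all $j$ gives
\[
\opt = \sum_{\pi_j \in S} |I_j| > \opt + \big(\lambda n - (k-1)(\beta + \lambda)\big)\avg,
\]
and since $k - 1 < n/s$ together with $\frac{n}{s}(\beta+\lambda) = \frac{n}{s}\cdot\frac{s\beta}{s-1} = \lambda n$ forces the bracketed term to be strictly positive, we obtain $\opt > \opt$, a contradiction.

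With the bound on $|I_k|$ in place, the remainder is the same triangle-inequality chain as in Lemma~\ref{lemma:case3a}. Starting from $\obj(\sigma_k) = \sum_{\pi_j \in S}\kappa(\pi_j, \sigma_k)$, I would route through $\pi_k$ by the triangle inequality, use that $\sigma_k$ is a closest fair ranking to $\pi_k$ so that $\kappa(\pi_k, \sigma_k) \le \kappa(\pi_k, \sigma^*) = |I_k|$, and rewrite each $\kappa(\pi_j, \pi_k)$ as $|I_j| + |I_k| - 2|I_j \cap I_k|$ via Equation~\ref{eqn:case3distance}. This yields $\obj(\sigma_k) \le 2n|I_k| + \opt - 2\sum_{\pi_j \in S}|I_j \cap I_k|$. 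Discarding all but the $\pi_j \in S_k$ terms in the last sum and using $|I_j \cap I_k| > \alpha\avg$ on those gives a subtracted term of at least $2|S_k|\alpha\avg \ge \frac{2\alpha}{c}\opt$ (by the hypothesis $|S_k| \ge n/c$), while $2n|I_k| \le 2\left(1 + \frac{\beta}{s-1}\right)\opt$ by the bound above. Collecting terms gives exactly $\left(3 + \frac{2\beta}{s-1} - \frac{2\alpha}{c}\right)\opt$.

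The main obstacle is the first step. Unlike the $\pi_1$ case, $|I_k|$ is not controlled by a trivial averaging argument, and the bound relies on combining the sortedness of the $|I_i|$ with the global lower bound (\ref{eqn:case1assumption}); the quantitative choice $\lambda = \beta/(s-1)$ is precisely what makes the contradiction tight as $k - 1 \to n/s$. Everything after this step is a routine reprise of the Lemma~\ref{lemma:case3a} computation.
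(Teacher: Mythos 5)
Your proposal is correct and follows essentially the same route as the paper: the paper's proof likewise combines the triangle-inequality chain of Lemma~\ref{lemma:case3a} with a Lemma~\ref{lemma:case3b}-style contradiction bound $|I_k| \le \left(1+\frac{\beta}{s-1}\right)\avg$, justified by the fact that at most $\frac{n}{s}$ rankings precede $\pi_k$ in the sorted order. You merely spell out the contradiction computation (with $\lambda = \beta/(s-1)$) that the paper leaves as "a similar argument," and your arithmetic checks out.
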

\begin{proof}
    From using the same argument as in the proof of Lemma~\ref{lemma:case3a}, we have the following

    \begin{equation*}
        \obj(\sigma_k) \le \sum_{\pi_j \in S} (2 |I_k| + |I_j|) - \sum_{\pi_j \in S_k} 2 \alpha \avg .
    \end{equation*}
    
    Now, from a similar argument as in the proof of Lemma~\ref{lemma:case3b}, we argue that $|I_k| \le \left(1 + \frac{\beta}{s - 1}\right) \avg$, as there are at most $\frac{n}{s}$ rankings before $\pi_k$.
    
    Therefore, 
    \begin{align*}
        \obj(\sigma_k) & \le \sum_{\pi_j \in S} (2 |I_k| + |I_j|) - \sum_{\pi_j \in S_k} 2 \alpha \avg \\
        & \le \left(3 + \frac{2 \beta}{s - 1} - \frac{2 \alpha}{c} \right) \opt .
    \end{align*}
\end{proof}

As such, from now on, we assume otherwise and that the following holds

\begin{equation}
\label{eqn:assumption1sampling}
    \forall k \le \frac{n}{s}, |S_k| < \frac{n}{c} .
\end{equation}

Now, for each $\pi_k$ ($k \le \frac{n}{s}$), let us consider the next $\frac{n}{c} + \frac{n}{s}$ many rankings. Define the following set 
\[ C_k := \{\pi_j \mid k < j \le k+\frac{n}{c} + \frac{n}{s} \text{ and } |I_k \cap I_j| \le \alpha \opt \} .\]

Since $|S_k| < \frac{n}{c}$, it must hold that $|C_k| \ge \frac{n}{s}$. 
For a ranking $\pi_\ell \in C_k$, additionally define the set 
\[F_{k, \ell} := \{\pi_j \mid |I_j \cap I_k| \le \alpha \opt \text{ and } |I_j \cap I_\ell| \le \alpha \opt \} . \] 

Now, fix any $k \le \frac{n}{s}$ and consider any $\pi_\ell \in C_k$. If $|F_{k, \ell}| \le \frac{n}{s}$, then we show that $\sigma_\ell$ will give a good approximation.

\begin{lemma}
\label{lemma:samplingtwo}
    Consider any $\pi_k$ where $k \le \frac{n}{s}$, and any $\pi_\ell \in C_k$. If $|F_{k, \ell}| < \frac{n}{s}$, then it holds that 
    \[
    \obj(\sigma_\ell) \le \left(3 + 2 \beta \frac{2c + s}{cs - s - 2c} - 2 \left(1 - \frac{1}{c} - \frac{1}{s} \right) \alpha \right) \opt.
    \]
\end{lemma}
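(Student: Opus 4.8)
The plan is to mirror the structure of the proof of Lemma~\ref{lemma:case3b}, replacing the single threshold index with the joint information carried by the pair $(k,\ell)$ and the set $F_{k,\ell}$. First I would establish the same starting inequality as in that lemma: applying the triangle inequality, the fact that $\sigma_\ell$ is a closest fair ranking to $\pi_\ell$ (so $\kappa(\pi_\ell,\sigma_\ell)\le \kappa(\pi_\ell,\sigma^*)=|I_\ell|$), the distance identity of Equation~\ref{eqn:case3distance}, and the definition of $S_\ell$, I obtain
\begin{equation*}
\obj(\sigma_\ell) \le \sum_{\pi_j \in S}\left(2|I_\ell| + |I_j|\right) - \sum_{\pi_j \in S_\ell} 2\alpha\avg .
\end{equation*}
Since $\sum_{\pi_j\in S}|I_j| = \opt$ and $\sum_{\pi_j\in S}2|I_\ell| = 2n|I_\ell|$, it then suffices to upper bound $|I_\ell|$ and lower bound $|S_\ell|$.

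For the bound on $|I_\ell|$, I would use that $\pi_\ell \in C_k$ forces $\ell \le k + \frac{n}{c} + \frac{n}{s} \le \frac{2n}{s} + \frac{n}{c}$ (as $k \le n/s$), so at most $m := \ell - 1 \le n\cdot\frac{2c+s}{cs}$ rankings precede $\pi_\ell$ in the sorted order. Writing $\mu := \frac{2}{s}+\frac{1}{c} = \frac{2c+s}{cs}$ and setting $\lambda := \beta\cdot\frac{2c+s}{cs - 2c - s} = \frac{\beta\mu}{1-\mu}$, I would assume for contradiction that $|I_\ell| > (1+\lambda)\avg$. By the sorted order, $|I_j|\ge|I_\ell| > (1+\lambda)\avg$ for every $j \ge \ell$, while the standing assumption $|I_j| > (1-\beta)\avg$ covers the $m$ earlier rankings; summing and using $\avg\cdot n = \opt$ gives $\opt > \opt + \avg\big(\lambda n - (\lambda+\beta)m\big)$, and since $m \le \mu n$ the choice of $\lambda$ forces $\lambda n - (\lambda+\beta)m \ge n\big(\lambda(1-\mu)-\beta\mu\big) \ge 0$, a contradiction. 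Hence $|I_\ell| \le \big(1 + \beta\frac{2c+s}{cs-2c-s}\big)\avg$, and after multiplying by $2$ this contributes the $2\beta\frac{2c+s}{cs-s-2c}$ term.

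For the bound on $|S_\ell|$, the key observation is the set identity $F_{k,\ell} = (S\setminus S_k)\cap(S\setminus S_\ell)$, which holds since the defining thresholds coincide; consequently $(S\setminus S_k)\cap S_\ell = (S\setminus S_k)\setminus F_{k,\ell}$. Invoking Equation~\ref{eqn:assumption1sampling} (which gives $|S_k| < n/c$, valid because $k\le n/s$) yields $|S\setminus S_k| > n\big(1-\tfrac{1}{c}\big)$, and the hypothesis $|F_{k,\ell}| < n/s$ then gives $|S_\ell| \ge |(S\setminus S_k)\cap S_\ell| > n\big(1 - \tfrac{1}{c} - \tfrac{1}{s}\big)$. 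Substituting both the $|I_\ell|$ and the $|S_\ell|$ bounds into the starting inequality and simplifying produces exactly the claimed bound $\big(3 + 2\beta\frac{2c+s}{cs-s-2c} - 2(1-\tfrac{1}{c}-\tfrac{1}{s})\alpha\big)\opt$.

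I expect the main obstacle to be bookkeeping rather than anything conceptual: one must count ``how many rankings precede $\pi_\ell$'' correctly (the factor $2/s$ arises from $k\le n/s$ together with the $n/s$ window width built into $C_k$) so that the algebra lines up to $\frac{\mu}{1-\mu} = \frac{2c+s}{cs-2c-s}$, and one must verify that the two independent sparsity hypotheses — $|S_k| < n/c$ from Equation~\ref{eqn:assumption1sampling} and $|F_{k,\ell}| < n/s$ from the lemma — combine cleanly through the $F_{k,\ell}$ identity to isolate $S_\ell$. I would also double-check that the contradiction argument does not secretly require $\pi_\ell$ itself to be excluded from one of the index ranges, since an off-by-one there would perturb the coefficients of $\avg$.
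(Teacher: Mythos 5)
Your proposal is correct and follows essentially the same route as the paper's proof: the same starting inequality inherited from Lemma~\ref{lemma:case3b}, the same count of at most $\tfrac{2n}{s}+\tfrac{n}{c}$ rankings preceding $\pi_\ell$ leading to $\lambda=\beta\tfrac{2c+s}{cs-s-2c}$ by contradiction, and the same lower bound $|S_\ell|\ge n(1-\tfrac1c-\tfrac1s)$ (which the paper gets via $|S_k|+|S_\ell|\ge n-|F_{k,\ell}|$ and you get via the equivalent identity $F_{k,\ell}=(S\setminus S_k)\cap(S\setminus S_\ell)$). Your implicit reading that the thresholds in the definitions of $S_i$, $C_k$, and $F_{k,\ell}$ all coincide (as $\alpha\avg$) is the intended one; the $\alpha\opt$ appearing in the paper's definitions of $C_k$ and $F_{k,\ell}$ is a typo.
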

\begin{proof}
    If $|F_{k, \ell}| \le \frac{n}{s}$, then $|S_k| + |S_\ell| \ge n - \frac{n}{s}$.
    
    As we have assumed that $|S_k| \le \frac{n}{c}$, we must have that $|S_\ell| \ge n - \frac{n}{s} - \frac{n}{c}$. Using a similar argument as in the proof of Lemma~\ref{lemma:case3b}, 

    \begin{equation*}
        \obj(\sigma_\ell) \le \sum_{\pi_j \in S} (2 |I_\ell| + |I_j|) - \sum_{\pi_j \in S_\ell} 2 \alpha \avg.
    \end{equation*}

    Again, we need to bound $|I_\ell|$, and as in the proof of Lemma~\ref{lemma:case3b}, assume that $|I_\ell| \ge (1 + \lambda) \avg$ where $\lambda = \beta \frac{2c + s}{c s - s - 2c}$. Observe that there are at most $\frac{n}{c} + \frac{2 n}{s}$ many rankings before $\pi_\ell$. Therefore,

    \begin{equation*}
        \opt > (1 - \beta) \avg \left(\frac{n}{c} + \frac{2 n}{s} \right) + (1 + \lambda) \avg \left(n - \frac{n}{c} - \frac{2 n}{s} \right)
    \end{equation*}

    and thus for $\lambda = \beta \frac{2c + s}{c s - s - 2c}$ we get a contradiction. Therefore, $|I_\ell| \le \left(1 + \beta \frac{2c + s}{c s - s - 2c}\right) \avg$.

    Hence, we have the following upper bound

    \begin{align*}
        \obj(\sigma_\ell) & \le \sum_{\pi_j \in S} (2 |I_\ell| + |I_j|) - \sum_{\pi_j \in S_\ell} 2 \alpha \avg. \\
        & \le \left(3 + 2 \beta \frac{2c + s}{c s - s - 2c} - 2\left(1 - \frac{1}{c} - \frac{1}{s} \right) \alpha \right) \opt .
    \end{align*}
\end{proof}

As such, from now on, we additionally assume the following
\begin{equation}
\label{eqn:assumption2sampling}
    \forall k \le \frac{n}{s}, \forall \pi_\ell \in C_k, |F_{k, \ell}| \ge \frac{n}{s} .
\end{equation}

Now, we show that given these assumptions, with high probability, there exist three rankings in $\bar{S}$, which give a good approximation.

\begin{lemma}
\label{lemma:samplingsbar}
    Suppose (\ref{eqn:assumption1sampling}) and (\ref{eqn:assumption2sampling}) hold. Then there exist three rankings, $T = \{\pi_k, \pi_\ell, \pi_m\}$ in $\bar{S}$ such that $\obj(\tilde{\sigma}_T) \le (1 + 12.0001 \alpha) \opt$, with probability at least $1 - \frac{1}{n^2}$.
\end{lemma}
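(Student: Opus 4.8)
The plan is to reduce the statement to the combinatorial guarantee already proved in Lemma~\ref{lemma:case2} and then run a ``survival'' argument for the sampling. Note first that the modified algorithm still enumerates every triple $T=\{\pi_k,\pi_\ell,\pi_m\}$ drawn from $\bar S$ and, for each, builds the majority tournament, computes a $1.00001$-approximate feedback arc set, and adds the closest fair ranking $\tilde\sigma_T$ of the resulting topological order to $L$. Crucially, the proof of Lemma~\ref{lemma:case2} only uses that the triple is \emph{good}, i.e.\ $|I_r\cap I_s|\le\alpha\avg$ for every pair $r\ne s$ in $T$; it never uses that $T$ comes from $S$ rather than from $\bar S$. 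Hence it suffices to show that, with probability at least $1-\frac{1}{n^2}$, the set $\bar S$ contains a good triple. I would first record why good triples are abundant under the running assumptions: for any $k\le n/s$ we have $|C_k|\ge n/s$, and by~(\ref{eqn:assumption2sampling}), $|F_{k,\ell}|\ge n/s$ for every $\pi_\ell\in C_k$; moreover any distinct $\pi_k,\ \pi_\ell\in C_k,\ \pi_m\in F_{k,\ell}$ form a good triple, since $\ell\in C_k$ gives $|I_k\cap I_\ell|\le\alpha\avg$ while $m\in F_{k,\ell}$ gives $|I_m\cap I_k|,|I_m\cap I_\ell|\le\alpha\avg$. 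Distinctness is automatic, because $|I_i|>(1-\beta)\avg>\alpha\avg$, so no ranking lies in its own $S$-set.

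The survival argument I would carry out by exposing the sampling outcomes lazily in three stages. Let $B_1$ be the first $n/s$ rankings in the fixed order of non-decreasing $|I_i|$, and let $p=\tfrac{4s\log n}{n}$. In Stage~1, scan $B_1$ in index order until the first sampled ranking $\pi_k$ appears; the event that none is sampled has probability $(1-p)^{n/s}\le e^{-4\log n}=n^{-4}$. The key point is that conditioning on ``$k$ is the first sampled index of $B_1$'' fixes only the statuses of indices $\le k$, so every index $>k$ remains independently in $\bar S$ with probability $p$. In Stage~2, since $C_k\subseteq(k,\,k+\tfrac{n}{c}+\tfrac{n}{s}]$ consists entirely of indices $>k$ and $|C_k|\ge n/s$, scanning $C_k$ until the first sampled $\pi_\ell$ fails with probability $\le(1-p)^{n/s}\le n^{-4}$, and again only the inspected prefix becomes fixed. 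In Stage~3, among the at least $n/s$ rankings of $F_{k,\ell}$, those not yet inspected are still independently sampled with probability $p$, so a sampled $\pi_m\in F_{k,\ell}$ (necessarily distinct from $\pi_k,\pi_\ell$) exists unless all of them fail, an event I would bound by $(1-p)^{\Omega(n/s)}$. A union bound over the three stages then gives the claimed probability, after which $T=\{\pi_k,\pi_\ell,\pi_m\}\subseteq\bar S$ is a good triple and Lemma~\ref{lemma:case2} yields $\obj(\tilde\sigma_T)\le(1+12.0001\alpha)\opt$.

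The delicate part is making Stage~3 rigorous, because the three relevant sets $B_1$, $C_k$, and $F_{k,\ell}$ all have size only $\Theta(n/s)$, so I must guarantee that $F_{k,\ell}$ still contains $\Omega(n/s)$ \emph{un-inspected} rankings after Stages~1--2. This requires a high-probability bound on the number of rankings inspected so far: the position of the first sampled element in $B_1$ (and in $C_k$) is geometric, so with probability $1-n^{-\Omega(1)}$ at most $O\!\big(\tfrac{\log n}{p}\big)=O(n/s)$ rankings are inspected in each stage. Choosing the stopping thresholds (equivalently, tuning the constant $s$ together with the constant in $p$) so that the total inspected prefix stays strictly below $|F_{k,\ell}|\ge n/s$ leaves a constant fraction of $F_{k,\ell}$ fresh, and balancing the three resulting failure probabilities against the target $\tfrac{1}{n^2}$ is exactly what pins down the value of $s$; everything else is the independence bookkeeping above. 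I would finally remark that this sampling step is the source of the mild increase in the approximation guarantee (from $2.881$ to $2.888$), since replacing exact objective evaluation by the coreset of Theorem~\ref{thm:coreset} contributes an additional $1.0001$ distortion on top of the bounds from Lemmas~\ref{lemma:case1},~\ref{lemma:samplingone}, and~\ref{lemma:samplingtwo}.
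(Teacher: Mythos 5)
Your reduction to Lemma~\ref{lemma:case2} is exactly right, and you correctly observe the two points the paper leaves implicit: that Lemma~\ref{lemma:case2} only needs the triple to be pairwise ``good'' (not that it was enumerated from $S$), and that the three rankings are automatically distinct because $|I_i| > (1-\beta)\avg > \alpha\avg$. However, your probabilistic argument diverges from the paper's and has a genuine gap. The paper does \emph{not} expose the samples sequentially; it simply takes a union bound over \emph{unconditional} events: for each fixed pair $(k,\ell)$ with $k \le n/s$ and $\pi_\ell \in C_k$, the event ``no ranking of $F_{k,\ell}$ is sampled'' has probability at most $(1-\tfrac{4s\log n}{n})^{n/s} \le n^{-4}$, and there are at most $n^2$ such pairs (plus the events ``no $\pi_k$ with $k\le n/s$ is sampled'' and ``no $\pi_\ell \in C_k$ is sampled''), giving $1 - n^{-2}$ overall with no conditioning issues at all.

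Your three-stage lazy-exposure argument founders precisely where you flag it as ``delicate,'' and the fix you sketch does not work. First, the sets $B_1$, $C_k$, and $F_{k,\ell}$ all have guaranteed size only $n/s$, and $F_{k,\ell}$ carries no index restriction, so in the worst case $F_{k,\ell}$ can lie entirely inside $B_1 \cup C_k$ and hence entirely inside the inspected-and-rejected prefix after Stages 1--2; no choice of stopping threshold prevents this, since it is the adversarial placement of $F_{k,\ell}$, not the randomness, that causes the problem. Second, even quantitatively, bounding the inspected prefix by a constant fraction of $n/s$ (say $n/(4s)$ per stage) fails with probability $(1-p)^{n/(4s)} = e^{-\log n} = 1/n$ per stage with $p = \tfrac{4s\log n}{n}$ --- note that $s$ cancels out of this exponent, so ``tuning $s$'' cannot push this below the target $1/n^2$. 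And even conditioned on a short prefix, you would be left with only $\Theta(n/s)$ fresh elements of $F_{k,\ell}$, whose all-miss probability is around $n^{-2}$ before any union bound, so the final guarantee cannot be met along this route. The repair is to abandon the sequential exposure and union bound over all pairs $(k,\ell)$ up front, as the paper does. Your closing remark attributing the $2.881 \to 2.888$ loss solely to the coreset is also slightly off: the modified bounds in the other sampling lemmas (the extra $\tfrac{2\beta}{s-1}$ and $\tfrac{1}{s}$ terms) contribute as well.
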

\begin{proof}

    Fix a $\pi_k$ such that $k \le \frac{n}{s}$, and as we assume (\ref{eqn:assumption1sampling}), $|C_k| \ge \frac{n}{s}$ (as we have already argued). Fix a $\pi_\ell \in C_k$, and as we assume (\ref{eqn:assumption2sampling}), $|F_{k, \ell}| \ge \frac{n}{s}$. 

    Recall that we independently sample each ranking for inclusion to $\bar{S}$ with probability $\frac{4s \log n}{n}$. The probability that no ranking in $F_{k, \ell}$ (for this fixed choice of $k$ and $\ell$) is sampled can be bounded as follows

    \begin{align*}
        \Pr(\text{no rankings in } F_{k, \ell} \text{ are sampled}) & \le \left(1 - \frac{4s \log n}{n} \right)^{n/s} \\
        & \le e^{-4 \log n} \\
        & \le n^{-4} .
    \end{align*}

   Now, fix some ranking $\pi_k$ where $k \le \frac{n}{s}$. Let $H_k$ be the event where either no ranking in $C_k$ is sampled, or there is some $\pi_\ell \in C_k$ where no rankings in $F_{k, \ell}$ are sampled. Let us bound the probability of $H_k$ occurring,

   \begin{align*}
       \Pr(H_k) & \le \left(1 - \frac{4s \log n}{n} \right)^{n/s} + \frac{n}{s} \cdot n^{-4} && \text{(By applying union bound)} \\
       & \le n^{-4} + \frac{n^{-3}}{s}  \le n^{-3} && \text{(For some $s>1$)}.
   \end{align*}

    Observe that if the event $H_k$ does not occur for all $\pi_k$ where $k \le \frac{n}{s}$, then sampling any one such $\pi_k$ ensures that for the sampled $\pi_k$, there is a sampled $\pi_\ell \in C_k$. Further, for this sampled $\pi_\ell$, there must be a sampled $\pi_m \in F_{k, \ell}$. Therefore, let us now bound the probability of this bad event that either no such $\pi_k$ is sampled or that $H_k$ occurs for at least one such $k$.

    \begin{align*}
        \Pr(\text{no } \pi_k \text{ is sampled, or } H_k \text{ occurs for some } k) &\le \left(1 - \frac{4s \log n}{n} \right)^{n/s} + \frac{n}{s} \cdot n^{-3} && \text{(By applying union bound)} \\
        & \le n^{-4} + \frac{n^{-2}}{s}  \le n^{-2} && \text{(For some $s>1$)}.
    \end{align*}

    Therefore, with probability at least $1 - \frac{1}{n^2}$, we will sample some $\pi_k$ where $k \le \frac{n}{s}$, some $\pi_\ell \in C_k$, and some $\pi_m \in F_{k, \ell}$. By definition of $C_k$ and $F_{k, \ell}$, it holds that $\forall \{r, s\} \in \{k, \ell, m\}, |I_r \cap I_s| \le \alpha \opt$. By applying Lemma~\ref{lemma:case2} on these three rankings, we directly obtain that there exists some ranking $\hat{\sigma}_T$ added to $L$ such that $\obj(\hat{\sigma}_T) \le (1 + 12.0001 \alpha) \opt$.

\end{proof}

Finally, we also claim that with high probability, the size of the sampled set is sufficiently small.

\begin{lemma}
\label{lemma:sampledsetsizes}
    $|\bar{S}| \le 10s \log n$ with probability at least $1 - \frac{1}{n^2}$.
\end{lemma}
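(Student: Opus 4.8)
The plan is to recognize $|\bar{S}|$ as a sum of independent indicator random variables and apply a standard multiplicative Chernoff bound. Concretely, for each $i \in [n]$ let $X_i$ be the indicator that $\pi_i$ is sampled into $\bar{S}$. By construction the $X_i$ are independent with $\Pr[X_i = 1] = \frac{4s\log n}{n}$, and $|\bar{S}| = \sum_{i=1}^n X_i$. Hence the mean is $\mu := \mathbb{E}[|\bar{S}|] = n \cdot \frac{4s\log n}{n} = 4s\log n$.

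Next I would invoke the upper-tail Chernoff bound $\Pr[|\bar{S}| \ge (1+\delta)\mu] \le \exp\left(-\frac{\delta^2}{2+\delta}\,\mu\right)$, valid for any $\delta > 0$. The target threshold $10s\log n$ equals $2.5\,\mu$, so I set $1+\delta = 2.5$, i.e.\ $\delta = 1.5$. Plugging in gives an exponent $\frac{\delta^2}{2+\delta}\,\mu = \frac{2.25}{3.5}\cdot 4s\log n > 2.5\, s\log n$, so the failure probability is at most $\exp(-2.5\, s\log n) = n^{-2.5 s}$ (using $\log = \ln$, consistent with the convention $e^{-4\log n} = n^{-4}$ employed in Lemma~\ref{lemma:samplingsbar}). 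Since we take $s > 1$, this is at most $n^{-2.5} \le n^{-2}$, which yields the claimed bound $\Pr[|\bar{S}| \le 10 s\log n] \ge 1 - \frac{1}{n^2}$.

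There is essentially no substantive obstacle here: the statement is a routine concentration estimate, and the only points requiring (minor) care are choosing a Chernoff form whose constant is clean enough to absorb into $n^{-2}$, and confirming that the slack between $4s\log n$ (the mean) and $10s\log n$ (the threshold) is large enough --- which it is, since a factor-$2.5$ deviation above the mean already lies deep in the exponentially small regime. It is worth noting that the sampling probability $\frac{4s\log n}{n}$ was chosen precisely so that this size bound and the earlier success-probability guarantee (Lemma~\ref{lemma:samplingsbar}) hold simultaneously, while keeping $|\bar{S}| = O(s\log n)$ so that iterating over all triples of $\bar{S}$ contributes only a $\mathrm{polylog}(n)$ overhead to the final running time.
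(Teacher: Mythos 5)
Your proof is correct and matches the paper's argument essentially verbatim: both compute $\mathbb{E}[|\bar{S}|] = 4s\log n$ and apply the multiplicative Chernoff bound $\Pr[X \ge (1+\delta)\mu] \le \exp(-\delta^2\mu/(2+\delta))$ with $\delta = 1.5$, yielding the exponent $\tfrac{9s\log n}{3.5}$ and hence a failure probability at most $n^{-2}$ for $s>1$. No differences worth noting.
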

\begin{proof}
    Recall that each ranking is independently sampled for inclusion to $\bar{S}$ with probability $\frac{4s \log n}{n}$. Therefore, it is straightforward to see that $\mathbb{E}(|\bar{S}|) = 4s \log n$.

    By applying the standard Chernoff bound, we have

    \begin{equation*}
        \Pr(|\bar{S}| \ge 10s \log n) \le e^{\frac{-9s \log n}{3.5}} \le n^{-2} .
    \end{equation*}
\end{proof}

Now, we are ready to finish the proof of Theorem~\ref{thm:samplingfra}.

\begin{proof}[Proof of Theorem~\ref{thm:samplingfra}]
    By combining Lemmas~\ref{lemma:case1},~\ref{lemma:samplingone},~\ref{lemma:samplingtwo},~\ref{lemma:samplingsbar}, there is some ranking in $L$ which has objective value at most the maximum of: $(3 - 2 \beta) \opt$, $(1 + 12.0001 \alpha) \opt, (3 + \frac{2 \beta}{s - 1} - \frac{2 \alpha}{c}) \opt, \left(3 + 2 \beta \frac{2c + s}{c s - s - 2c} - 2 \left(1 - \frac{1}{c} - \frac{1}{s} \right) \alpha \right) \opt$.

    Fix $s = 50$. Then by setting $\alpha = 0.1572$, $\beta = 0.05652$, $c = 2.7$, we get that there is some ranking $\sigma \in L$ such that $\obj(S, \sigma) \le 2.887 \opt$. However, our algorithm now returns a ranking $\sigma_p$ that minimizes the objective value to the 1.0001-coreset $P$. Therefore, we need to bound the objective cost of $\sigma_p$ with respect to the original set $S$ as follows

    \begin{align*}
        \obj(S, \sigma_p) & \le 1.0001 \sum_{\pi \in P} w(\pi) \cdot \kappa(\sigma_p, \pi)  && \text{(As $P$ is a 1.0001-coreset of $S$)} \\
        & \le 1.0001 \sum_{\pi \in P} w(\pi) \cdot \kappa(\sigma, \pi)  && \text{(As $\sigma_p$ minimizes the sum of weighted distances to $P$)} \\
        & \le 1.0001 \cdot 1.0001 \obj(S, \sigma) && \text{(As $P$ is a 1.0001-coreset of $S$)}\\
        & \le 1.0001 \cdot 1.0001 \cdot 2.887 \opt \\
        & \le 2.888 \opt
    \end{align*}

    and so our algorithm outputs a ranking with an objective value at most $2.888 \opt$.

    By Lemma~\ref{lemma:samplingsbar}, $|\bar{S}| \le 500 \log n$. Therefore, our algorithm will only need to approximately solve the feedback arc set for $O(\log^3 n)$ sets of three rankings. Further, we only compute the weighted sum of distances to the coreset $P$. Now, it remains to provide an upper bound to the size of $P$, for which we use Theorem~\ref{thm:coreset}. For that purpose, let us now specify the implicit set of potential medians. Let us consider the set $L$ constructed in our deterministic algorithm (i.e., Algorithm~\ref{alg:blackbox-fair}), and to remove any ambiguity, let us denote that set as $L_{det}$. Recall, $\left |L_{det} \right | = O(n^3)$. Now, we apply Theorem~\ref{thm:coreset} by considering $L_{det}$ as the implicit set of potential medians (note, the set $L$ constructed by our modified randomized algorithm is always a subset of $L_{det}$), and as a consequence, we get $|P| =O\left(\log \left |L_{det} \right | \right) = O(\log n)$. So, the time taken to compute the weighted sum of distances to the coreset $P$ for a ranking in $L$ is $O(d \log n \log d)$. The time taken to construct the coreset is $O(n \log n)$. Therefore, the overall running time of our modified randomized algorithm is $O(d^3 \log^3 n \log d + n t(d) + n d \log n \log d)$ (if the time exceeds this bound, we declare the algorithm to fail).

    By taking a union bound over the failure probabilities of Theorem~\ref{thm:coreset}, Lemma~\ref{lemma:samplingsbar}, and Lemma~\ref{lemma:sampledsetsizes}, the algorithm succeeds with probability at least $1 - \frac{1}{n}$.
\end{proof}

\section{Experiments}

\begin{figure}[t!]
    \centering
    \includegraphics[width = 0.8\textwidth]{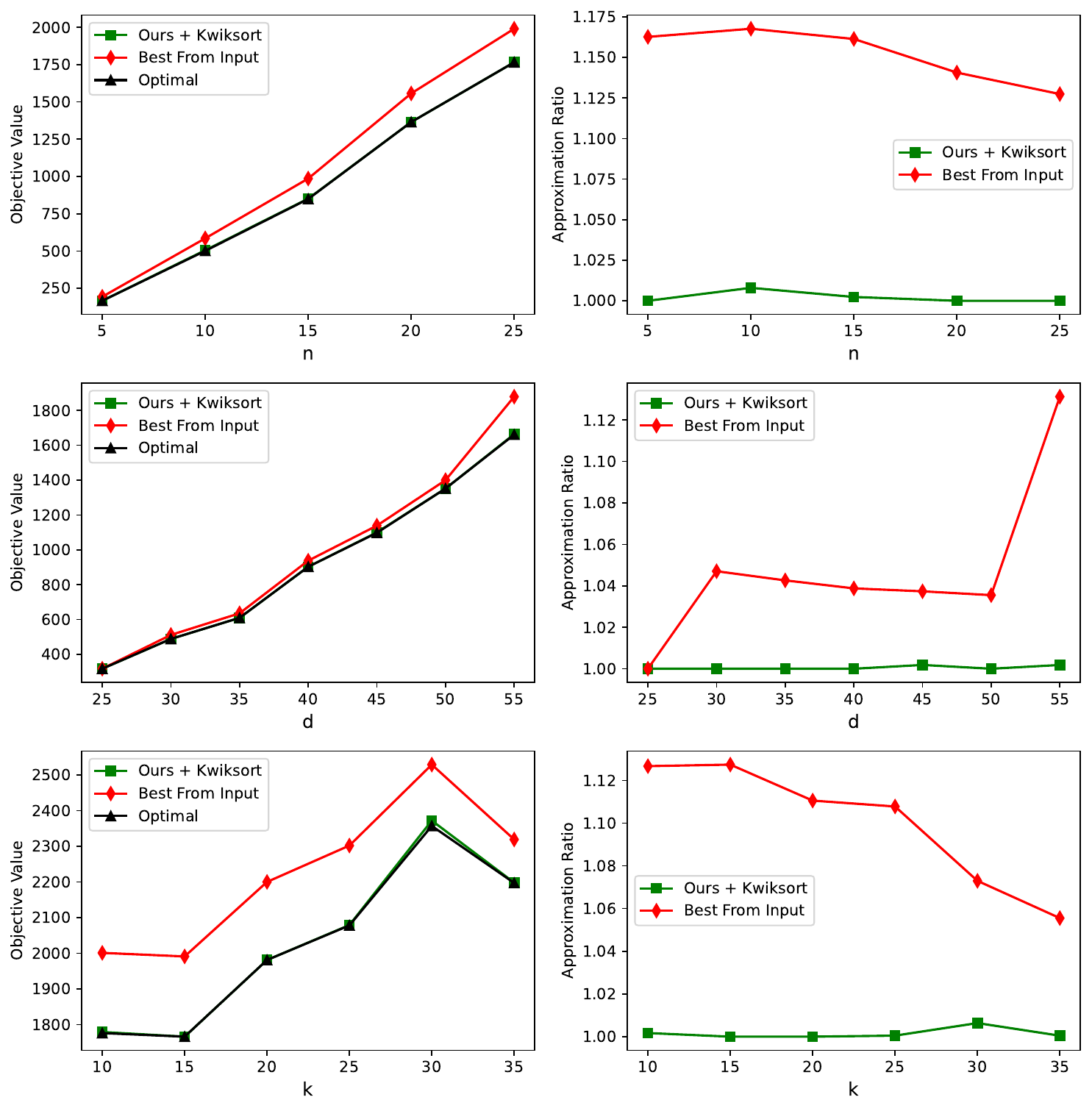}
    \caption{\textbf{Football dataset}. The $x$-axis indicates the value of the parameter ($n$, $d$, or $k$). The $y$-axis indicates the objective value of each output ranking on the left figure, with the corresponding approximation ratio on the right figures.}
    \label{fig:football}
\end{figure}

In this section, we provide an empirical evaluation of our algorithm on real-world datasets. We compare our algorithm against the performance of the best-known algorithms for the fair rank aggregation problem. The algorithms are implemented in Python 3.12. All experiments were performed on a laptop running Windows 11 using a Ryzen 6800HS processor and 16GB of RAM. We also use Integer Linear Programming (ILP) to find the optimal solution, where possible, for comparison. The Integer Linear Programs are implemented with CVXPY~\cite{cvxpy}, using SCIP~\cite{BolusaniEtal2024OO} as the solver. The code is available on Github\footnote{https://github.com/Aussiroth/Improved-Fair-Rank-Aggregation}.

\paragraph{Datasets} The first dataset we use was introduced in previous work studying fairness in rank aggregation~\cite{kuhlman2020rank}. The dataset is taken from a fantasy sports website for American football, where experts provide performance rankings over a set of (real) football players each week. The dataset contains rankings from experts across 16 weeks of the 2019 football season. In each week, the ranking of 25 experts on a set of players is given. We follow their work and divide the players into two groups based on the conference the player's team is in.

The second dataset was introduced in previous work studying proportionate fairness in rank aggregation~\cite{wei2022rank}. The dataset contains the rankings of 7 users over 268 movies. Each movie is placed into groups based on its genre, leading to 8 groups. We also preprocess the dataset to remove some movies to obtain a smaller dataset. This dataset contains movies from the 4 largest genres and has 58 movies. We perform experiments on both datasets.

\paragraph{Algorithms} We implement our algorithm as described in the previous sections. The approximation algorithm used to solve rank aggregation is Kwiksort~\cite{ailon2008}, and with Theorem~\ref{thm:fair-approx-ratio}, this implementation is, in theory, a $18/7$-approximation. The algorithm was chosen as it has good asymptotic runtime and is easy to implement, making it much more practical than the known PTAS.

We implement the best-known algorithm for this problem in previous work as the baseline~\cite{chakraborty2022} (also as in~\cite{wei2022rank}). Recall that the algorithm finds a closest fair ranking for each input ranking and then outputs the one that gives the minimum objective value. We refer to this algorithm as \textsc{BestFromInput}. This is a $3$-approximation algorithm.

For all experiments, the values of $\alpha_i, \beta_i$ are selected to be equal to the proportion of elements belonging to group $i$ in the dataset. This is a natural option, as it maintains the proportion of elements from each group in the top-$k$.

\paragraph{Results}

Figure~\ref{fig:football} evaluates the algorithms for one instance (week 4) of the football dataset. We perform experiments that independently vary the number of input rankings $n$, the number of players $d$, and the value of the parameter $k$. For the experiments that vary $n$ and $d$, we set $k = 15$. We see that our algorithm consistently performs better than \textsc{BestFromInput}, finding a fair aggregate ranking with a lower objective value in all of the instances, and is almost optimal. 

Figure~\ref{fig:movielens} evaluates the algorithms for the full Movielens dataset. We also perform experiments independently varying $n$, $d$, and $k$. For the experiments that vary $n$ and $d$, we fix the parameter $k = 30$. As the number of elements is too large for the ILP to scale to, we only compare the objective value of the ranking that is output by the two algorithms. Our algorithm performs much better than \textsc{BestFromInput} in all experiments.

\begin{figure}[h!]
    \centering
    \includegraphics[width = 0.8\textwidth]{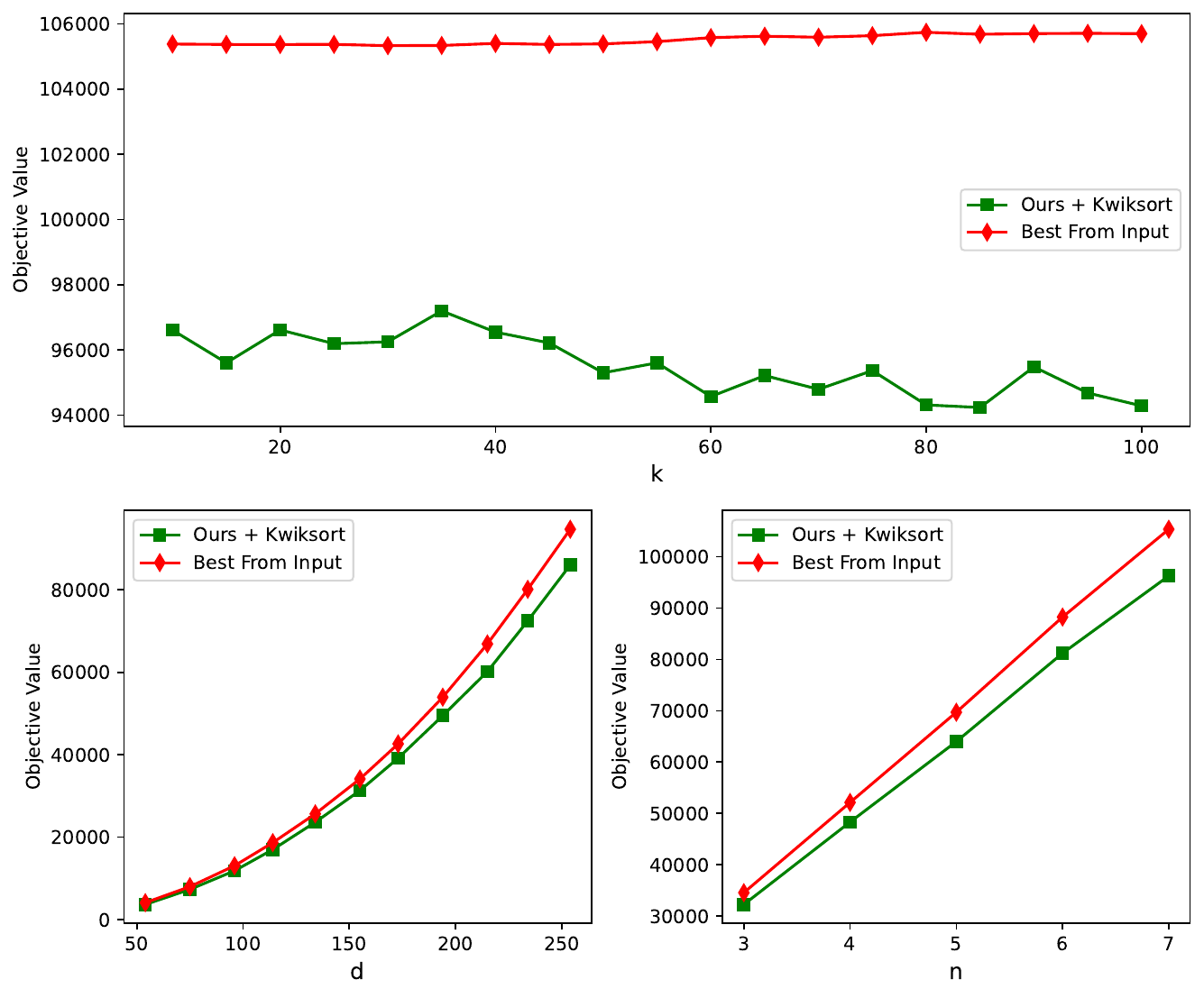}
    \caption{\textbf{Movielens dataset}. The x-axis indicates the value of the parameter ($n$, $d$ or $k$). The y-axis indicates the objective value of the output ranking for each algorithm.}
    \label{fig:movielens}
\end{figure}

Figure~\ref{fig:movielens-reduced} evaluates the algorithms for the reduced Movielens dataset. For this dataset, we focus on experiments for different values of $k$. We observe that in comparison to the football dataset, our algorithm with Kwiksort still performs much better than \textsc{BestFromInput}, but noticeably worse than optimal. We also plot an implementation of our algorithm that uses ILP to solve rank aggregation optimally. This performs better than using Kwiksort and is extremely close to optimal. This suggests that our algorithm is able to select a good set of top-$k$ elements.

\begin{figure}[h!]
    \centering
    \includegraphics[width = 0.8\textwidth]{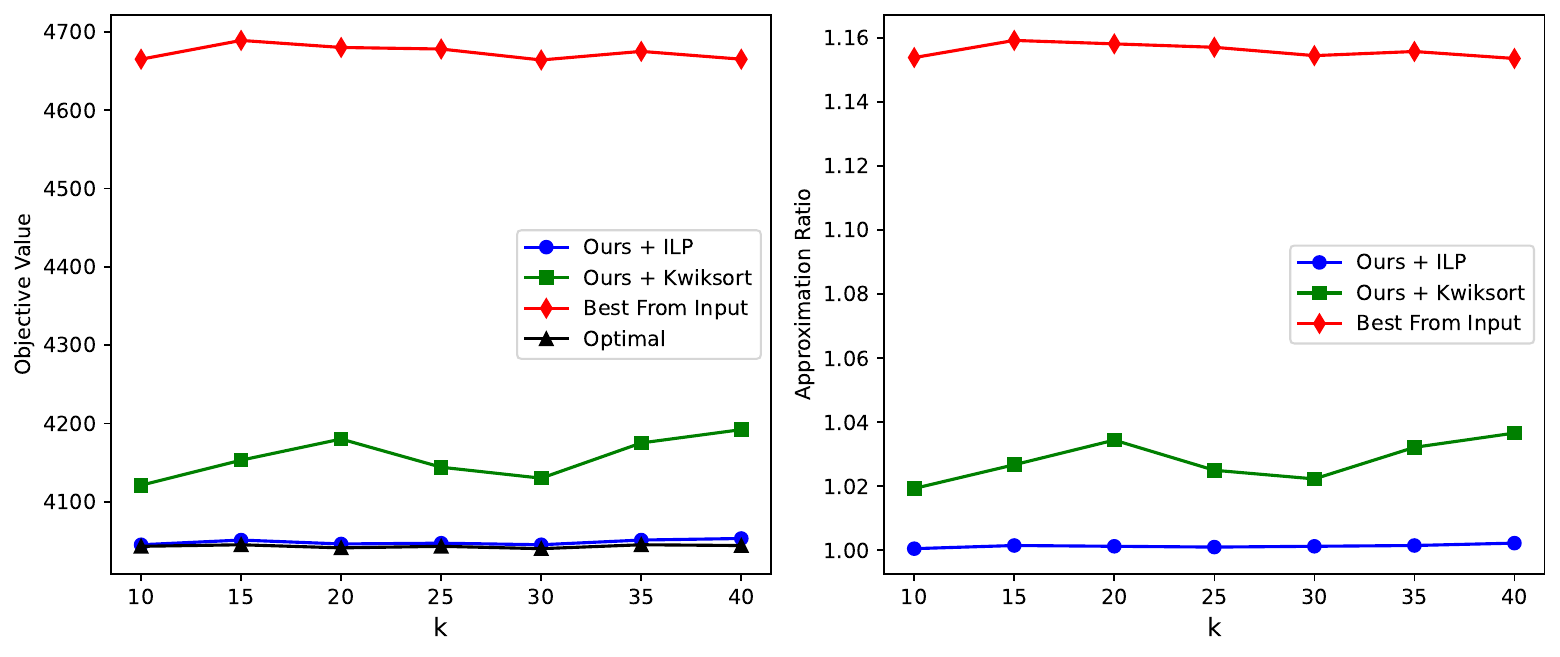}
    \caption{\textbf{Reduced Movielens dataset}. The x-axis indicates the value of the parameter $k$. The y-axis indicates the objective value of each output ranking on the left figure, with the corresponding approximation ratio on the right figure.}
    \label{fig:movielens-reduced}
\end{figure}

We also perform experiments where the values of $\alpha_i, \beta_i$ are varied instead of setting them as the group's proportion. For all experiments varying $n$ and $d$, we explore various values for the parameter $k$. We conduct experiments on 15 other instances (along with Week 4) of the football dataset as well. These experiments show similar results; see the appendix.

\section{Conclusion and Future Work}
In this work, we address the rank aggregation problem under the proportional fairness constraint as introduced in~\cite{wei2022rank,chakraborty2022}. We propose a novel algorithm to return a fair consensus ranking and establish (through theoretical analysis) that it achieves a $(2 + \epsilon)$-approximation for any $\epsilon > 0$, thereby improving upon the current best $3$-factor approximation bound. Our experimental results further show that our algorithm consistently produces nearly optimal fair consensus rankings in practice. We also present a generic $2.881$-approximation algorithm that works irrespective of the fairness definition as long as there is an efficient procedure to compute a closest fair ranking to any input.

An exciting open question is whether the approximation can be further improved, ideally achieving a PTAS that matches the current best approximation guarantee of the classical rank aggregation problem without fairness restrictions. It is noteworthy that a 2-factor is unavoidable for our algorithm (see the appendix), indicating that a fundamentally new approach may be required to enhance the approximation guarantee. Additionally, exploring other specific stricter fairness notions and demonstrating a comparable approximation guarantee (beating the bound obtained by our generic algorithm) for them presents another intriguing research direction.

\bibliographystyle{named} 
\bibliography{sample}

@article{kemeny1959,
  title={Mathematics without numbers},
  author={Kemeny, John G},
  journal={Daedalus},
  volume={88},
  number={4},
  pages={577--591},
  year={1959},
  publisher={JSTOR}
}

@article{young1978,
  title={A consistent extension of Condorcet’s election principle},
  author={Young, H Peyton and Levenglick, Arthur},
  journal={SIAM Journal on applied Mathematics},
  volume={35},
  number={2},
  pages={285--300},
  year={1978},
  publisher={SIAM}
}

@article{young1988,
  title={Condorcet's theory of voting},
  author={Young, H Peyton},
  journal={American Political science review},
  volume={82},
  number={4},
  pages={1231--1244},
  year={1988},
  publisher={Cambridge University Press}
}

@misc{harman1992,
  title={Ranking Algorithms.},
  author={Harman, Donna},
  year={1992}
}

@inproceedings{dwork2001,
  title={Rank aggregation methods for the web},
  author={Dwork, Cynthia and Kumar, Ravi and Naor, Moni and Sivakumar, Dandapani},
  booktitle={Proceedings of the 10th international conference on World Wide Web},
  pages={613--622},
  year={2001}
}

@inproceedings{fagin2003,
  title={Efficient similarity search and classification via rank aggregation},
  author={Fagin, Ronald and Kumar, Ravi and Sivakumar, Dandapani},
  booktitle={Proceedings of the 2003 ACM SIGMOD international conference on Management of data},
  pages={301--312},
  year={2003}
}

@article{popov2007,
  title={Multiple genome rearrangement by swaps and by element duplications},
  author={Popov, V Yu},
  journal={Theoretical computer science},
  volume={385},
  number={1-3},
  pages={115--126},
  year={2007},
  publisher={Elsevier}
}

@article{ailon2008,
  title={Aggregating inconsistent information: ranking and clustering},
  author={Ailon, Nir and Charikar, Moses and Newman, Alantha},
  journal={Journal of the ACM (JACM)},
  volume={55},
  number={5},
  pages={1--27},
  year={2008},
  publisher={ACM New York, NY, USA}
}

@article{biedl2009,
  title={On the complexity of crossings in permutations},
  author={Biedl, Therese and Brandenburg, Franz J and Deng, Xiaotie},
  journal={Discrete Mathematics},
  volume={309},
  number={7},
  pages={1813--1823},
  year={2009},
  publisher={Elsevier}
}

@inproceedings{gleich2011,
  title={Rank aggregation via nuclear norm minimization},
  author={Gleich, David F and Lim, Lek-heng},
  booktitle={Proceedings of the 17th ACM SIGKDD international conference on Knowledge discovery and data mining},
  pages={60--68},
  year={2011}
}

@article{azari2013,
  title={Generalized method-of-moments for rank aggregation},
  author={Azari Soufiani, Hossein and Chen, William and Parkes, David C and Xia, Lirong},
  journal={Advances in Neural Information Processing Systems},
  volume={26},
  year={2013}
}

@article{bachmaier2015,
  title={On the hardness of maximum rank aggregation problems},
  author={Bachmaier, Christian and Brandenburg, Franz J and Glei{\ss}ner, Andreas and Hofmeier, Andreas},
  journal={Journal of Discrete Algorithms},
  volume={31},
  pages={2--13},
  year={2015},
  publisher={Elsevier}
}

@book{brandt2016,
  title={Handbook of computational social choice},
  author={Brandt, Felix and Conitzer, Vincent and Endriss, Ulle and Lang, J{\'e}r{\^o}me and Procaccia, Ariel D},
  year={2016},
  publisher={Cambridge University Press}
}

@article{barbero2018,
  title={Exploring the complexity of layout parameters in tournaments and semicomplete digraphs},
  author={Barbero, Florian and Paul, Christophe and Pilipczuk, Micha{\l}},
  journal={ACM Transactions on Algorithms (TALG)},
  volume={14},
  number={3},
  pages={1--31},
  year={2018},
  publisher={ACM New York, NY, USA}
}

@article{chakraborty2022,
  title={Fair rank aggregation},
  author={Chakraborty, Diptarka and Das, Syamantak and Khan, Arindam and Subramanian, Aditya},
  journal={Advances in Neural Information Processing Systems},
  volume={35},
  pages={23965--23978},
  year={2022},
  note = {Full version: arXiv preprint arXiv:2308.10499}
}

@inproceedings{wei2022rank,
  title={Rank aggregation with proportionate fairness},
  author={Wei, Dong and Islam, Md Mouinul and Schieber, Baruch and Basu Roy, Senjuti},
  booktitle={Proceedings of the 2022 international conference on management of data},
  pages={262--275},
  year={2022}
}

@article{bartholdi1989voting,
  author    = {J.J. Bartholdi and C.A. Tovey and M.A. Trick},
  title     = {Voting schemes for which it can be difficult to tell who won the election},
  journal   = {Social Choice and Welfare},
  volume    = {6},
  number    = {2},
  pages     = {157--165},
  year      = {1989}
}

@misc{Dwork2002RankAR,
  title={Rank Aggregation Revisited},
  author={Cynthia Dwork and Ravi Kumar and Moni Naor and D. Sivakumar},
  year={2002}
}

@misc{mathieu2009rank,
  title={How to rank with fewer errors},
  author={Mathieu, C and Schudy, W},
  year={2009},
  howpublished="\url{https://cs.brown.edu/people/wschudy/papers/fast_journal.pdf}",
  note={Unpublished}
}

@inproceedings{celis2018ranking,
  author    = {L. Elisa Celis and Damian Straszak and Nisheeth K. Vishnoi},
  title     = {Ranking with fairness constraints},
  booktitle = {International Colloquium on Automata, Languages, and Programming (ICALP)},
  volume    = {107},
  pages     = {28:1--28:15},
  year      = {2018},
}

@book{condorcet1785essai,
  author    = {M. J. Condorcet},
  title     = {Essai sur l'application de l'analyse \`a la probabilit\'e des d\'ecisions rendues \`a la pluralit\'e des voix},
  year      = {1785},
  note      = {Reprinted by AMS Bookstore in 1972},
}

@article{borda1781memoire,
  author    = {J. Borda},
  title     = {M\'emoire sur les \'elections au scrutin},
  journal   = {Histoire de l'Acad\'emie Royale des Sciences},
  year      = {1781},
}

@article{huang2019coresets,
  title={Coresets for clustering with fairness constraints},
  author={Huang, Lingxiao and Jiang, Shaofeng and Vishnoi, Nisheeth},
  journal={Advances in neural information processing systems},
  volume={32},
  year={2019}
}

@inproceedings{chen2019proportionally,
  title={Proportionally fair clustering},
  author={Chen, Xingyu and Fain, Brandon and Lyu, Liang and Munagala, Kamesh},
  booktitle={International conference on machine learning},
  pages={1032--1041},
  year={2019},
  organization={PMLR}
}

@article{bera2019fair,
  title={Fair algorithms for clustering},
  author={Bera, Suman and Chakrabarty, Deeparnab and Flores, Nicolas and Negahbani, Maryam},
  journal={Advances in Neural Information Processing Systems},
  volume={32},
  year={2019}
}

@inproceedings{backurs2019scalable,
  title={Scalable fair clustering},
  author={Backurs, Arturs and Indyk, Piotr and Onak, Krzysztof and Schieber, Baruch and Vakilian, Ali and Wagner, Tal},
  booktitle={International Conference on Machine Learning},
  pages={405--413},
  year={2019},
  organization={PMLR}
}

@article{cvxpy,
  title={CVXPY: A Python-embedded modeling language for convex optimization},
  author={Diamond, Steven and Boyd, Stephen},
  journal={Journal of Machine Learning Research},
  volume={17},
  number={83},
  pages={1--5},
  year={2016}
}

@techreport{BolusaniEtal2024OO,
  author = {Suresh Bolusani and Mathieu Besan{\c{c}}on and Ksenia Bestuzheva and Antonia Chmiela and Jo{\~{a}}o Dion{\'{i}}sio and Tim Donkiewicz and Jasper van Doornmalen and Leon Eifler and Mohammed Ghannam and Ambros Gleixner and Christoph Graczyk and Katrin Halbig and Ivo Hedtke and Alexander Hoen and Christopher Hojny and Rolf van der Hulst and Dominik Kamp and Thorsten Koch and Kevin Kofler and Jurgen Lentz and Julian Manns and Gioni Mexi and Erik~M\"{u}hmer and Marc E. Pfetsch and Franziska Schl{\"o}sser and Felipe Serrano and Yuji Shinano and Mark Turner and Stefan Vigerske and Dieter Weninger and Lixing Xu},
  title = {{The SCIP Optimization Suite 9.0}},
  type = {Technical Report},
  institution = {Optimization Online},
  month = {February},
  year = {2024},
  url = {https://optimization-online.org/2024/02/the-scip-optimization-suite-9-0/}
}

@article{kuhlman2020rank,
  title={Rank aggregation algorithms for fair consensus},
  author={Kuhlman, Caitlin and Rundensteiner, Elke},
  journal={Proceedings of the VLDB Endowment},
  volume={13},
  number={12},
  year={2020}
}

@article{borooah2010social,
  title={Social exclusion and jobs reservation in India},
  author={Borooah, Vani K},
  journal={Economic and Political Weekly},
  pages={31--35},
  year={2010},
  publisher={JSTOR}
}

@article{deshpande2005affirmative,
  title={Affirmative action in India and the United States},
  author={Deshpande, Ashwini},
  year={2005},
  publisher={Washington, DC: World Bank},
  journal={-}
}

@article{costello2016views,
  title={Who views online extremism? Individual attributes leading to exposure},
  author={Costello, Matthew and Hawdon, James and Ratliff, Thomas and Grantham, Tyler},
  journal={Computers in Human Behavior},
  volume={63},
  pages={311--320},
  year={2016},
  publisher={Elsevier}
}

@inproceedings{kay2015unequal,
  title={Unequal representation and gender stereotypes in image search results for occupations},
  author={Kay, Matthew and Matuszek, Cynthia and Munson, Sean A},
  booktitle={{ACM} conference on human factors in computing systems},
  pages={3819--3828},
  year={2015}
}

@article{bolukbasi2016man,
  title={Man is to computer programmer as woman is to homemaker? debiasing word embeddings},
  author={Bolukbasi, Tolga and Chang, Kai-Wei and Zou, James Y and Saligrama, Venkatesh and Kalai, Adam T},
  journal={Advances in Neural Information Processing Systems (NeurIPS)},
  volume={29},
  year={2016}
}

@article{baruah1996proportionate,
  title={Proportionate progress: A notion of fairness in resource allocation},
  author={Baruah, Sanjoy K and Cohen, Neil K and Plaxton, C Greg and Varvel, Donald A},
  journal={Algorithmica},
  volume={15},
  number={6},
  pages={600--625},
  year={1996},
  publisher={Springer}
}

@inproceedings{kliachkin2024fairness,
  title={Fairness in Ranking: Robustness through Randomization without the Protected Attribute},
  author={Kliachkin, Andrii and Psaroudaki, Eleni and Mare{\v{c}}ek, Jakub and Fotakis, Dimitris},
  booktitle={2024 IEEE 40th International Conference on Data Engineering Workshops (ICDEW)},
  pages={201--208},
  year={2024},
  organization={IEEE}
}

@inproceedings{dinitz2022fair,
  title={Fair disaster containment via graph-cut problems},
  author={Dinitz, Michael and Srinivasan, Aravind and Tsepenekas, Leonidas and Vullikanti, Anil},
  booktitle={International Conference on Artificial Intelligence and Statistics},
  pages={6321--6333},
  year={2022},
  organization={PMLR}
}

@inproceedings{feldman2011unified,
  title={A unified framework for approximating and clustering data},
  author={Feldman, Dan and Langberg, Michael},
  booktitle={Proceedings of the forty-third annual ACM symposium on Theory of computing},
  pages={569--578},
  year={2011}
}

@inproceedings{bachem2018one,
  title={One-shot coresets: The case of k-clustering},
  author={Bachem, Olivier and Lucic, Mario and Lattanzi, Silvio},
  booktitle={International conference on artificial intelligence and statistics},
  pages={784--792},
  year={2018},
  organization={PMLR}
}

@inproceedings{braverman2021coresets,
  title={Coresets for clustering in excluded-minor graphs and beyond},
  author={Braverman, Vladimir and Jiang, Shaofeng H-C and Krauthgamer, Robert and Wu, Xuan},
  booktitle={Proceedings of the 2021 ACM-SIAM Symposium on Discrete Algorithms (SODA)},
  pages={2679--2696},
  year={2021},
  organization={SIAM}
}

@inproceedings{chakraborty2021approximating,
  title={Approximating the median under the ulam metric},
  author={Chakraborty, Diptarka and Das, Debarati and Krauthgamer, Robert},
  booktitle={Proceedings of the 2021 ACM-SIAM Symposium on Discrete Algorithms (SODA)},
  pages={761--775},
  year={2021},
  organization={SIAM}
}

@inproceedings{chakraborty2023clustering,
  title={Clustering Permutations: New Techniques with Streaming Applications},
  author={Chakraborty, Diptarka and Das, Debarati and Krauthgamer, Robert},
  booktitle={14th Innovations in Theoretical Computer Science Conference (ITCS 2023)},
  year={2023},
  organization={Schloss-Dagstuhl-Leibniz Zentrum f{\"u}r Informatik}
}

\newpage
\begin{appendix}
\section{Tight Instance of our Fair Rank Aggregation Algorithm in Section~\ref{sec:fair-rank}}
In this paper, we provide a new algorithm to compute fair aggregate ranking and establish that our proposed algorithm achieves 2-approximation. In this section, we argue that the 2-factor approximation is not an artifact of our analysis. Instead, there are certain instances for which our algorithm cannot perform much better than the 2-factor.

Now, we describe a family of instances that show the analysis for our algorithm is tight; that is, we cannot achieve better than a 2-approximation. To aid in describing the instance, let us denote the elements in the following way. Consider two integers $s,t$, whose values we will set later. Let $A = \{a_1, \cdots, a_s\}$, $B = \{b_1, \cdots, b_s\}$, $C = \{c_1, \cdots, c_t\}$ and $D = \{d_1 , \cdots, d_t\}$. The universe of elements is $A \cup B \cup C \cup D$. Therefore $d = 2s + 2t$. Let there be two groups, $G_1, G_2$. $G_1 = C \cup D$ and $G_2 = A \cup B$. Let the parameters be $k = d/2$ and the values $\alpha_1 = t/(s+t)$, $\alpha_2 = s/(s+t)$ and $\beta_1=\beta_2=1$.

Let us now describe the input set $S$. $S$ contains $d-2s+1$ copies of the ranking
\begin{equation}
\label{eqn:maj-ranking}
    (c_1, c_2, \cdots, c_t, a_1, a_2, \cdots, a_s, b_1, b_2, \cdots, b_s, d_1, d_2, \cdots, d_t).
\end{equation}

$S$ contains (one copy of) the ranking \[ (b_1, b_2, \cdots, b_s, c_1, c_2, \cdots, c_t, d_1, d_2, \cdots, d_t, a_1, a_2, \cdots, a_s). \]

It is easiest to describe the remaining $2s-2$ rankings as an iterative process. Begin with the ranking of (\ref{eqn:maj-ranking})
and take the lowest-ranked element of $B$ and place it in the first rank. Take the highest-ranked element of $C$ and place it at the last rank. This creates a new ranking, of which there are two copies in $S$, \[ (b_s, c_1, c_2, \cdots, c_t, a_2, \cdots, a_s, b_1, b_2, \cdots, b_{s-1}, d_1, d_2, \cdots, d_t, a_1). \]

Now repeat this process, creating a new ranking of which there are two copies in $S$, \[ (b_{s-1}, b_s, c_1, c_2, \cdots, c_t, a_3, \cdots, a_s,\]\[ b_1, b_2, \cdots, b_{s-2}, d_1, d_2, \cdots, d_t, a_1, a_2) \]

And so on, until all the elements have been moved. So in total, $S$ contains $d = 2s + 2t$ rankings.

Next, consider the following fair ranking $\pi^*$: \[ (c_1, c_2, \cdots, c_t, a_1, a_2, \cdots, a_s, b_1, b_2, \cdots b_s, d_1, d_2, \cdots, d_t). \]

It can be verified that $\pi^*$ is an optimal fair aggregate ranking of the set $S$. However, we do not really need to use the fact that $\pi^*$ is an optimal fair aggregate ranking. What we argue is that if $\pi^p$ is the output of our algorithm, then
\[
\obj(\pi^p) \ge (2-\epsilon) \obj(\pi^*) \ge (2-\epsilon) \opt
\]
for any $\epsilon >0$.

First, we claim that an optimal colorful bi-partition from step 1 of our algorithm is the partition $B \cup C$. To see why, observe that $a_s$ has the largest in-degree of $A$, and $b_1$ has the smallest in-degree of $B$. In all but one ranking, $a_s$ is ranked just before $b_1$, so these $d-1$ rankings contribute $d-1$ more to the in-degree of $b_1$ compared to $a_s$. In the final ranking, $b_1$ is ordered first, and $a_s$ is ordered last, giving $a_s$ an in-degree of $d-1$ more than $b_1$. So, the two elements have the same in-degree. Suppose we solve the two partitions optimally in step 2 of the algorithm. Then our algorithm's output $\pi^p$ is \[ (c_1, c_2, \cdots, c_t, b_1, b_2, \cdots, b_s, a_1, a_2, \cdots, a_s, d_1, d_2, \cdots, d_t) \]

Now, consider the median objective cost of both the optimal ranking and the ranking produced by our approximation algorithm. We now carefully count the objective cost of both rankings. Now, observe that $A, B, C, D$ form a partition over the universe of elements; so we can sum the contributions of all possible pairs.

Recall from the paper we have defined that for any $X \subseteq [d], Y \subseteq [d]$, \[ \kappa_{X \times Y}(\pi, \sigma) := |\{(a, b) \in X \times Y \mid a \prec_{\pi} b \text{\hspace{2mm}but\hspace{2mm}} b \prec_{\sigma} a\}| \]

As well as for any $X \subseteq [d], Y \subseteq [d]$, \[ \obj(S, \pi)_{X \times Y} := \sum_{\pi_i \in S} \kappa_{X \times Y}(\pi, \pi_i). \]

It is straightforward to see that in the following cases, the objective cost is 0, as all rankings in $S$, $\pi^p$, and $\pi^*$ agree on the pairwise ordering for any such pair.

\begin{align*}
    \obj(\pi^*)_{C \times C} = \obj(\pi^p)_{C \times C} = 0 \\
    \obj(\pi^*)_{D \times D} = \obj(\pi^p)_{D \times D} = 0 \\
    \obj(\pi^*)_{C \times D} = \obj(\pi^p)_{C \times D} = 0 \\
    \obj(\pi^*)_{B \times D} = \obj(\pi^p)_{B \times D} = 0 \\
    \obj(\pi^*)_{A \times C} = \obj(\pi^p)_{A \times C} = 0 \\
\end{align*}

The next cost we consider is $\obj(\pi^*)_{A \times A}$. Observe that this is equal to $\obj(\pi^p)_{A \times A}$, as both rankings have identical pairwise ordering for all such pairs.
\begin{align*}
    \obj(\pi^*)_{A \times A} & = 2 (1(s-1) + 2(s-2) + 3(s-3) + \\
    &\cdots + (s-2)(2) + (s-1)1 ) \\
    & = 2 \sum_{i=1}^{s-1} i(s-i) \\
    & = \frac{s^3-s}{3} .
\end{align*}

To see this, in two rankings of $S$, we have $c_s$ ordered after $s-1$ elements, namely $c_1$ to $c_{s-1}$. There are two rankings of $S$ where $c_{s-1}$ and $c_s$ are ordered after the $s-2$ elements $c_1$ to $c_{s-2}$, and so on. The same argument holds for $\obj(\pi^*)_{B \times B}$, so the following hold.
\begin{equation*}
    \obj(\pi^*)_{A \times A} = \obj(\pi^*)_{B \times B} = \obj(\pi^p)_{B \times B}
\end{equation*}

There are three remaining cases of pairs left. Let us consider $\obj(\pi^p)_{A \times D}$. Observe that in two rankings of $S$, $a_1$ is ordered after all $t$ elements of $D$; in two rankings of $S$, $a_1$ and $a_2$ is ordered after all $t$ elements of $D$; and so on. Therefore, we can write the following.

\begin{align*}
    \obj(\pi^p)_{A \times D} &= 2t(1) + 2t(2) + \cdots + 2t(s-1) + ts \\
    & = ts + 2t \sum_{i=1}^{s-1} i \\
    & = ts + 2t\frac{s(s-1)}{2} \\
    & = ts^2 \\
    & = (\frac{d}{2}-s)s^2 \text{\hspace{1cm}(As $d = 2s+2t$)} \\
    & = \frac{ds^2}{2} - s^3 .
\end{align*}

Now, by a similar argument, $\obj(\pi^p)_{B \times C} = \frac{ds^2}{2} - s^3$. Again see that $\pi^p$ and $\pi^*$ order all such pairs in both these cases identically;, therefore $\obj(\pi^*)_{A \times D} = \obj(\pi^p)_{B \times C} = \frac{ds^2}{2} - s^3$.

The final case is $\obj(\pi^*)_{A \times B}$ and $\obj(\pi^p)_{A \times B}$.

We study $\obj(\pi^p)_{A \times B}$ first. In $\pi^p$, $b_i \prec a_j$ for all $i \in [s], j \in [s]$. There are $d-2s+1$ rankings which order $a_j \prec b_i$ for all $i \in [s], j \in [s]$. Consider the remaining $2s-1$ rankings. There are two rankings in $S$ where $a_j \prec b_i$ for all $i \in [s-1], j \in [s-1]$, followed by two rankings in $S$ where $a_j \prec b_i$ for all $i \in [s-2], j \in [s-2]$, and so on.
Summing these two contributions leads to a median objective cost of

\begin{align*}
    \obj(\pi^p)_{A \times B} &=  s^2(d-2s+1) + 2(s-1)^2 + 2(s-2)^2 +\\
    &\cdots + 2(1)^2 \\
    & = ds^2 - 2s^3 + s^2 + 2 \sum_{i=1}^{s-1} i^2 \\
    & = ds^2 - 2s^3 + s^2 + 2(\frac{s^3}{3} + \frac{s}{6} - \frac{s^2}{2}) \\
    & = ds^2 - \frac{4s^3}{3} + \frac{s}{3}
\end{align*}

We now study $\obj(\pi^*)_{A \times B}$. $\pi^*$ orders $a_i \prec b_j$ for any $i \in [s], j \in [s]$. Thus,
\begin{align*}
    \obj(\pi^*)_{A \times B} &= (2s-1)(2s-1) + (2s-3)(2s-3) +\\
    &\cdots + 3(3) + 1(1) \\
    & = \sum_{i=0}^{s-1} (2i+1)^2 \\
    & = \frac{4s^3-s}{3}
\end{align*}

To observe how this summation comes about, consider $b_s$ and $a_1$. We can see that $b_s$ is ranked above all $a_i$ for $i \in [s]$ in $2s-1$ rankings of $S$, and $a_1$ is ranked after all $b_j$ for all $j \in [s]$ in the same rankings. However, we need to avoid over-counting the pair $b_s$ and $a_1$. Similar observations for $b_{s-1}$ and $a_2$ give rise to the $(2s-3)(2s-3)$ term, and so on.

Now, we can sum everything up to obtain $\obj(\pi^p)$.
\begin{align*}
    \obj(\pi^p) &= \obj(\pi^p))_{A \times A} + \obj(\pi^p))_{B \times B} \\
    & + \obj(\pi^p)_{C \times C} + \obj(\pi^p)_{D \times D} \\ 
    & + \obj(\pi^p)_{A \times B} + \obj(\pi^p)_{A \times C} + \obj(\pi^p)_{A \times D} \\
    & + \obj(\pi^p)_{B \times C} + \obj(\pi^p)_{B \times D} + \obj(\pi^p)_{C \times D} \\
    & = 2\frac{s^3-s}{3} + ds^2 - \frac{4s^3}{3} + \frac{s}{3} + 2(\frac{ds^2}{2} - s^3) \\
    &= 2ds^2 - \frac{8s^3}{3} - \frac{s}{3} .
\end{align*}

By doing the same summation for $\pi^*$, we have
\begin{align*}
    \obj(\pi^*) &= 2\frac{s^3-s}{3} + \frac{4s^3-s}{3} + 2(\frac{ds^2}{2} - s^3)\\
    &= ds^2 - s .
\end{align*}

Then for any $\epsilon > 0$, we can pick suitable values of $s, d$ (where $s$ is much smaller than $d$, or in other words, $s << t$) to create an instance such that $\obj(\pi^p) \ge (2-\epsilon)\obj(\pi^*)$, showing that our analysis is tight.

\section{Further Details on Experiments}

Here, we describe the Integer Linear Program (ILP) used in computing the optimal result. Recall that the rank aggregation problem can be seen as a special case of feedback arc set on a weighted tournament. The tournament graph $G = (V, A)$ has vertex set $[d]$ and a pair of directed edges between each vertex. Let $n_{ab} = |\{ \pi \in S \mid a \prec_{\pi} b\}|$. Set the weight of the edge $(a, b)$ to be $w(a,b) = n_{ab} / n$ (where $n = |S|$). 

\begin{tabular}{l l}
        minimize & $\sum_{e = (a, b) \in E} w_e x_e$ \\
        subject to & \\
        $\forall$ distinct $a, b \in [d]$ & $x_{ab} + x_{ba} = 1$ \\
        $\forall$ distinct $a, b, c \in [d]$ & $x_{ab} + x_{bc} + x_{ca} \ge 1$ \\
        $\forall$  $a \in [d]$ & $- \sum_{b \in [d]} x_{ab} \le -k + 2d \cdot y_a$ \\
        $\forall$  $a \in [d]$ & $- \sum_{b \in [d]} x_{ab} \ge -(k - 1) - 2d(1 - y_a)$ \\

        $\forall$ $i \in [g]$ & $\lfloor \alpha_i k \rfloor \le \sum_{a \in G_i} y_a \le \lceil \beta_i k \rceil$ \\
        
\end{tabular}

The variable $x_{a,b}$ is an indicator of whether the edge $(a,b)$ is removed or not. That is, $x_{a, b} = 1$ iff the edge $(a, b)$ is removed, thus ordering $b$ before $a$ in the aggregate ranking. 

Similarly, the variable $y_a$ is an indicator of whether the element $a$ is placed in the top-$k$ positions or not. That is, $y_a = 1$ iff the element $a$ is placed in the top-$k$. The third and fourth inequalities achieve this. For some $a \in [d]$, the third inequality forces $y_a = 1$ if there are strictly less than $k$ elements ordered before it (that is, it must be in the top-$k$). If there are at least $k$ elements ordered before it, the fourth inequality forces $y_a = 0$. 
The last inequality is to enforce the fairness constraints for each group.

\paragraph{Implementation details}
For running the experiments, please note that the use of the following libraries. Numpy version 2.0.0 is used. The implementation uses Jupyter Notebook with Jupyterlab version 4.1.5. The integer linear programs are implemented using CVXPY version 1.4.2, and PySCIPOpt 5.1.1 was used as the solver. Other mixed integer program solvers should also work. To install CVXPY with SCIP, consider using pip with the command \texttt{pip install cvxpy[SCIP]}.

Note that the Kwiksort algorithm depends on randomness. To reproduce our results exactly, the seed should be set in the following manner to the Numpy random generator used. The generator should be seeded each time before the algorithm is run, with the following input array $(\texttt{dataset}, k, n, d, \Pi_{i \in [g]} 100 \cdot \alpha_i)$. \texttt{Dataset} is an integer from 1 to 16, depending on which instance of the football dataset is used. When using the Movielens dataset, this value should be excluded. Examples are provided in the code appendix.

Details about the folder structure of the code appendix are included as a readme in the root folder.

\paragraph{Further experimental results}

We plot experimental results on the other instances of the football dataset, as well as deeper studies on week 4 of the football dataset.

The plots of Figure~\ref{fig:football-all} plot the results of the paper for the other 15 instances in the football dataset. The values of $\bar{\alpha}$ are similarly chosen to be equal to the proportion of elements belonging to each group, and we set $k = 15$. The heading for each set of plots indicates the input instance. In all instances, our algorithm performs significantly better than \textsc{BestFromInput}.

\begin{figure}[h]
    \centering
    \includegraphics[width = 0.5\textwidth]{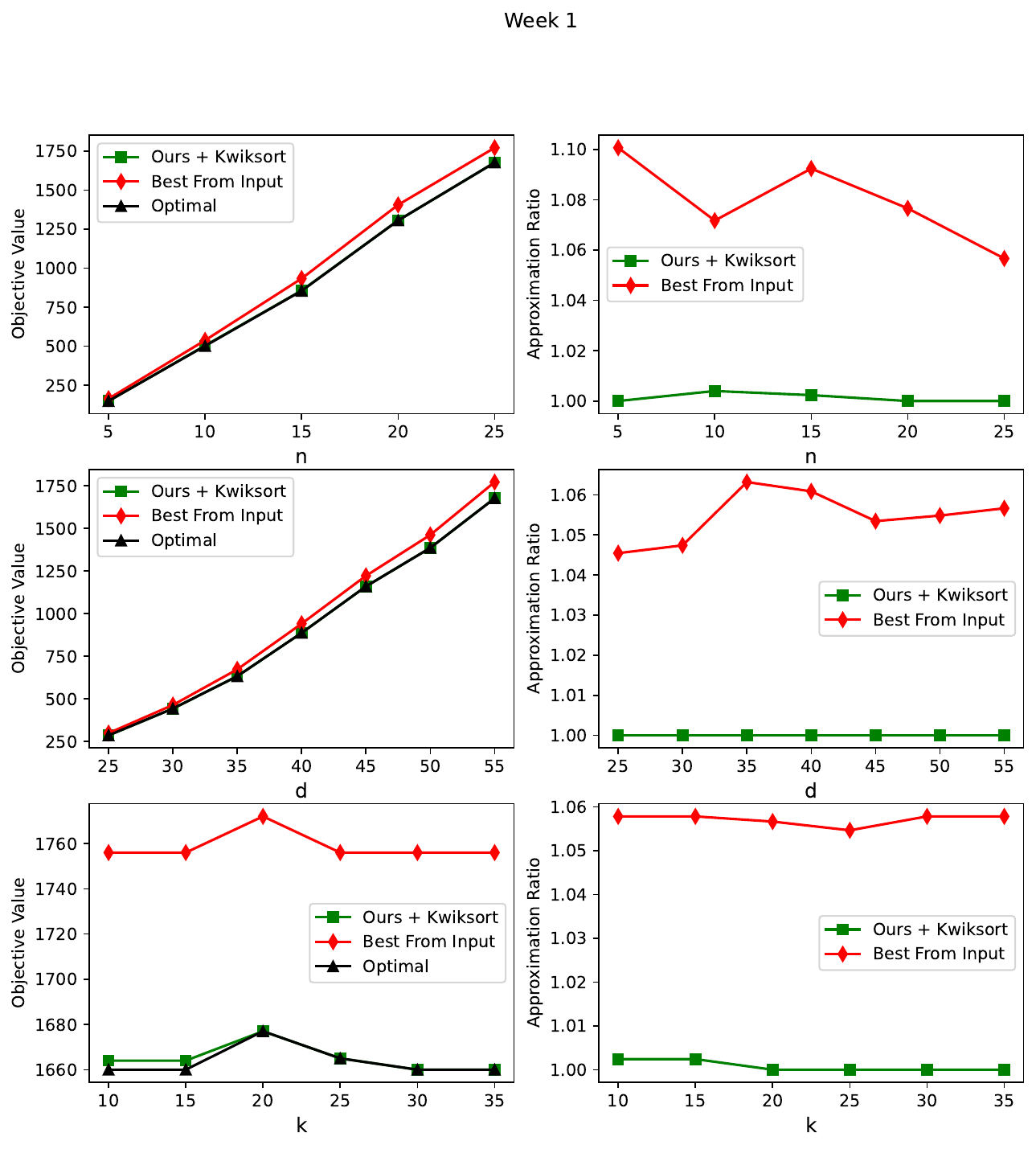}\\
    \includegraphics[width = 0.5\textwidth]{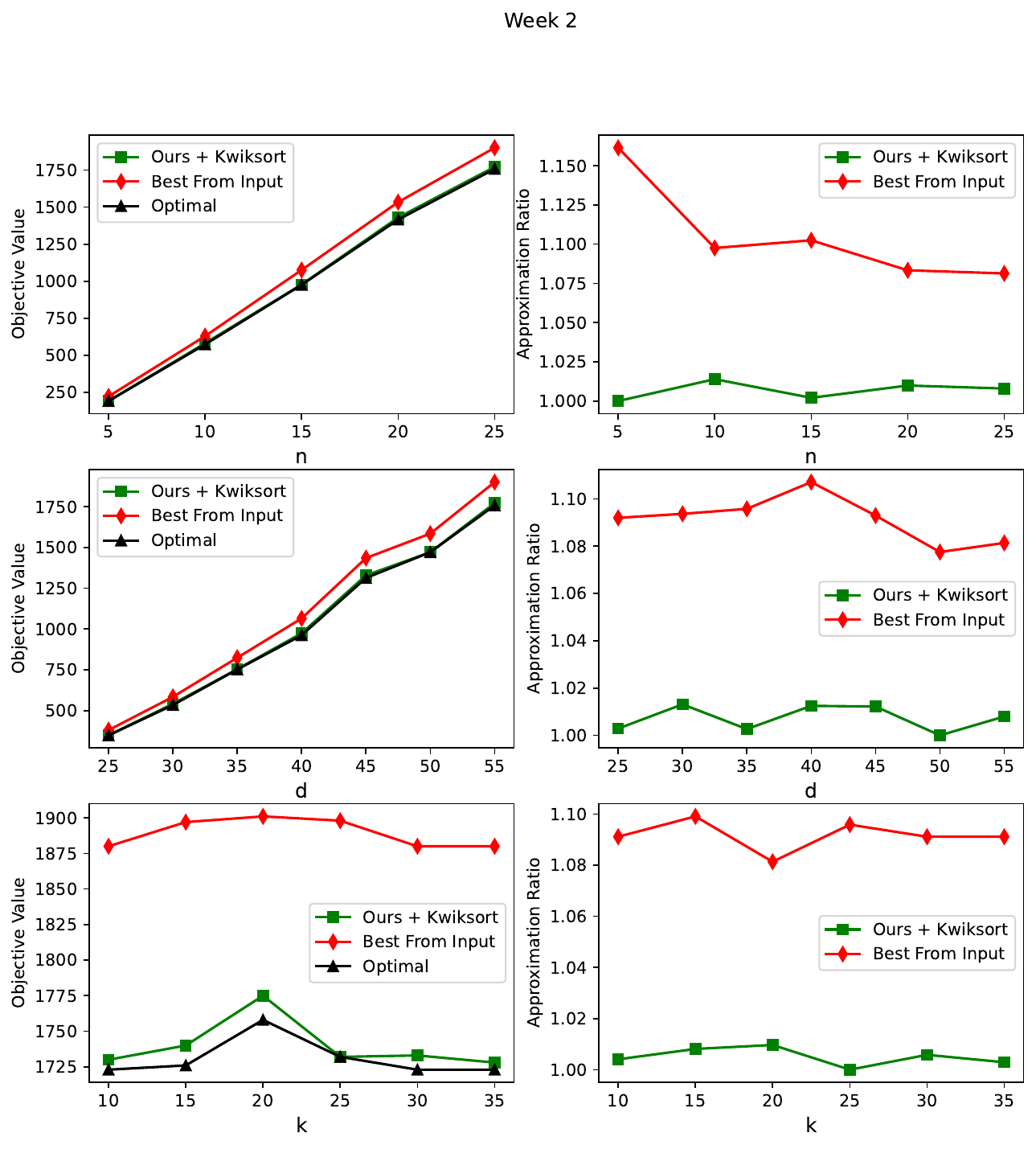}
\end{figure}

\begin{figure}[h]
    \centering
    \includegraphics[width = 0.5\textwidth]{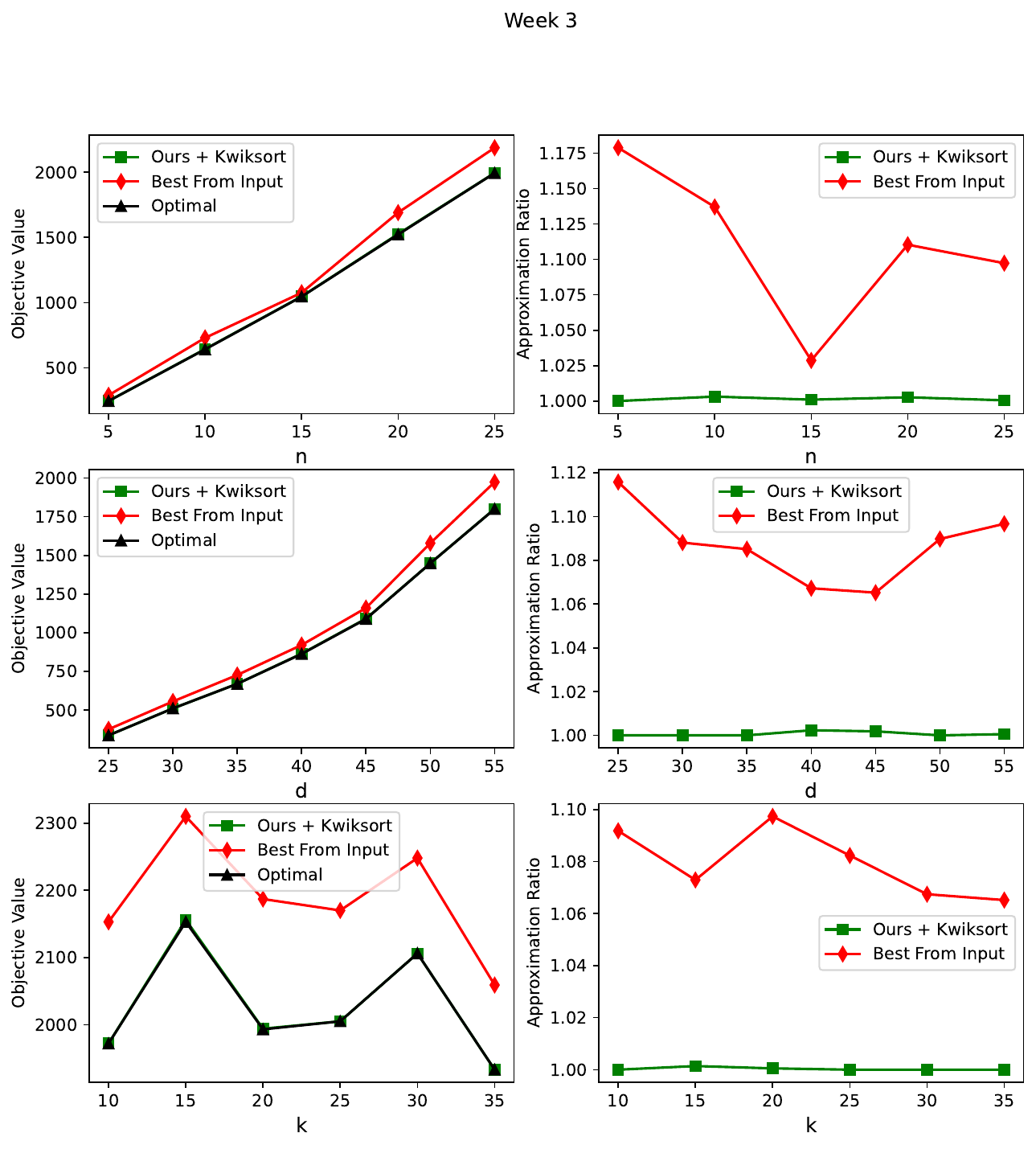}\\
    \includegraphics[width = 0.5\textwidth]{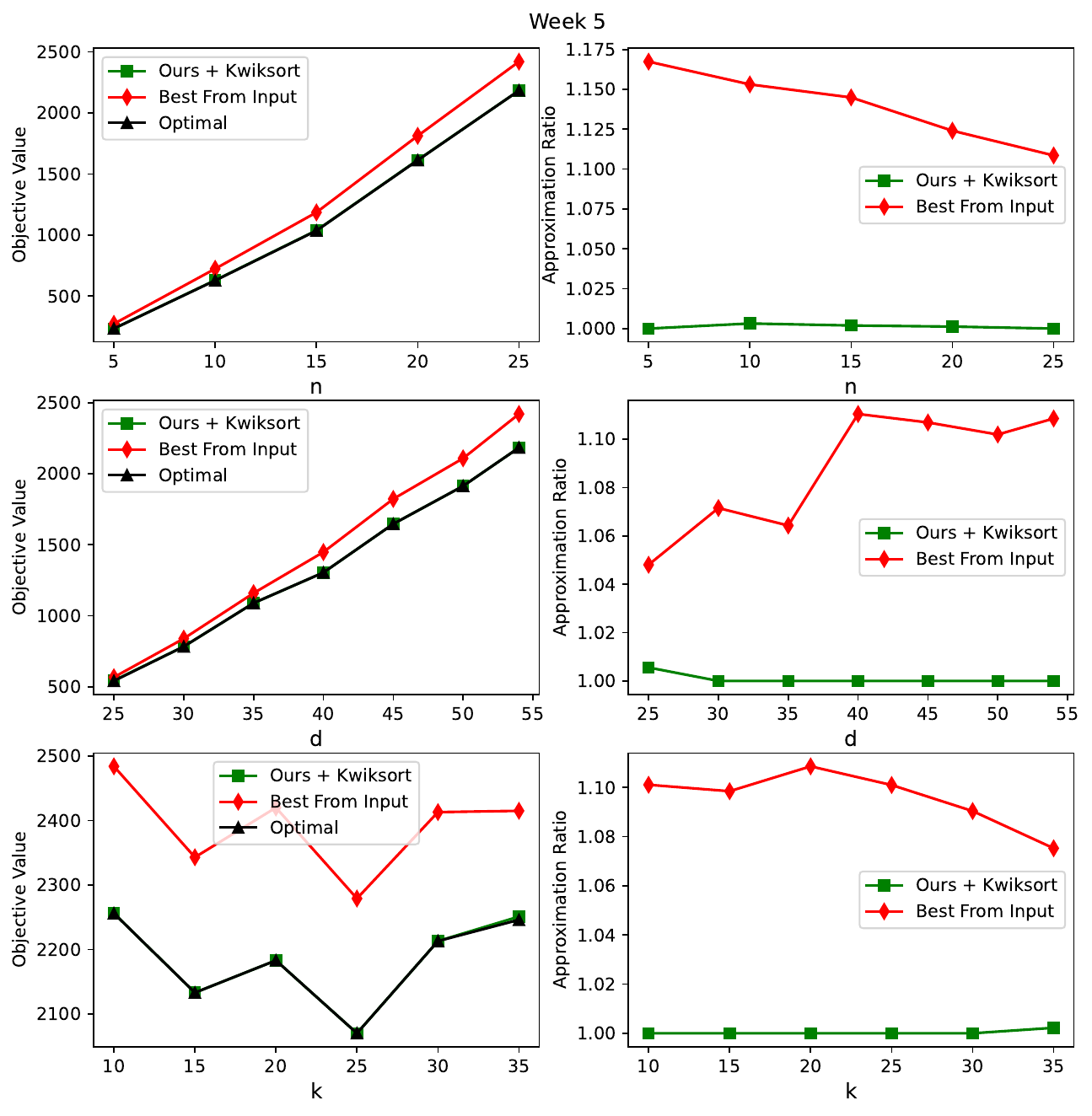}
\end{figure}

\begin{figure}[h]
    \centering
    \includegraphics[width = 0.5\textwidth]{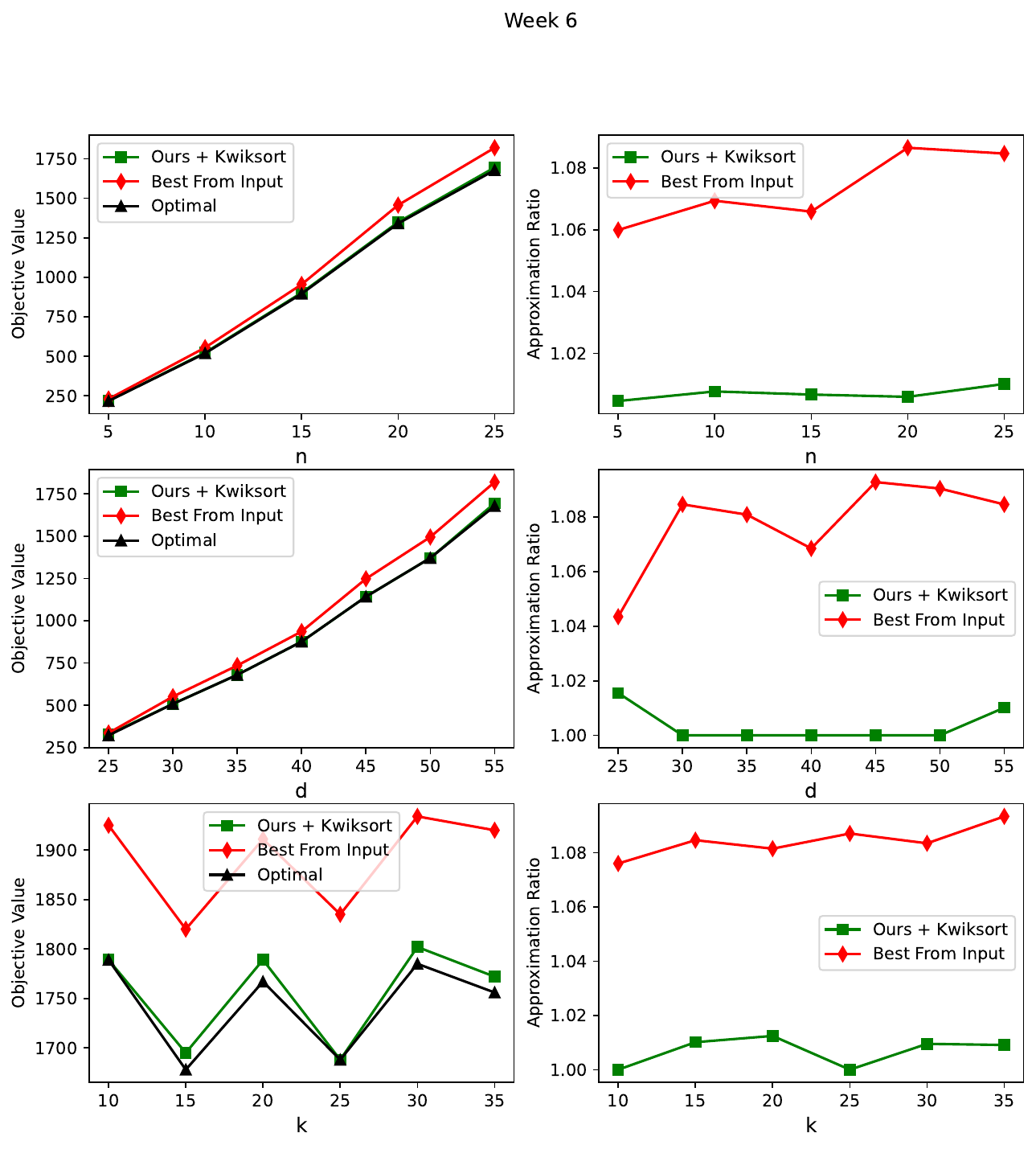}\\
    \includegraphics[width = 0.5\textwidth]{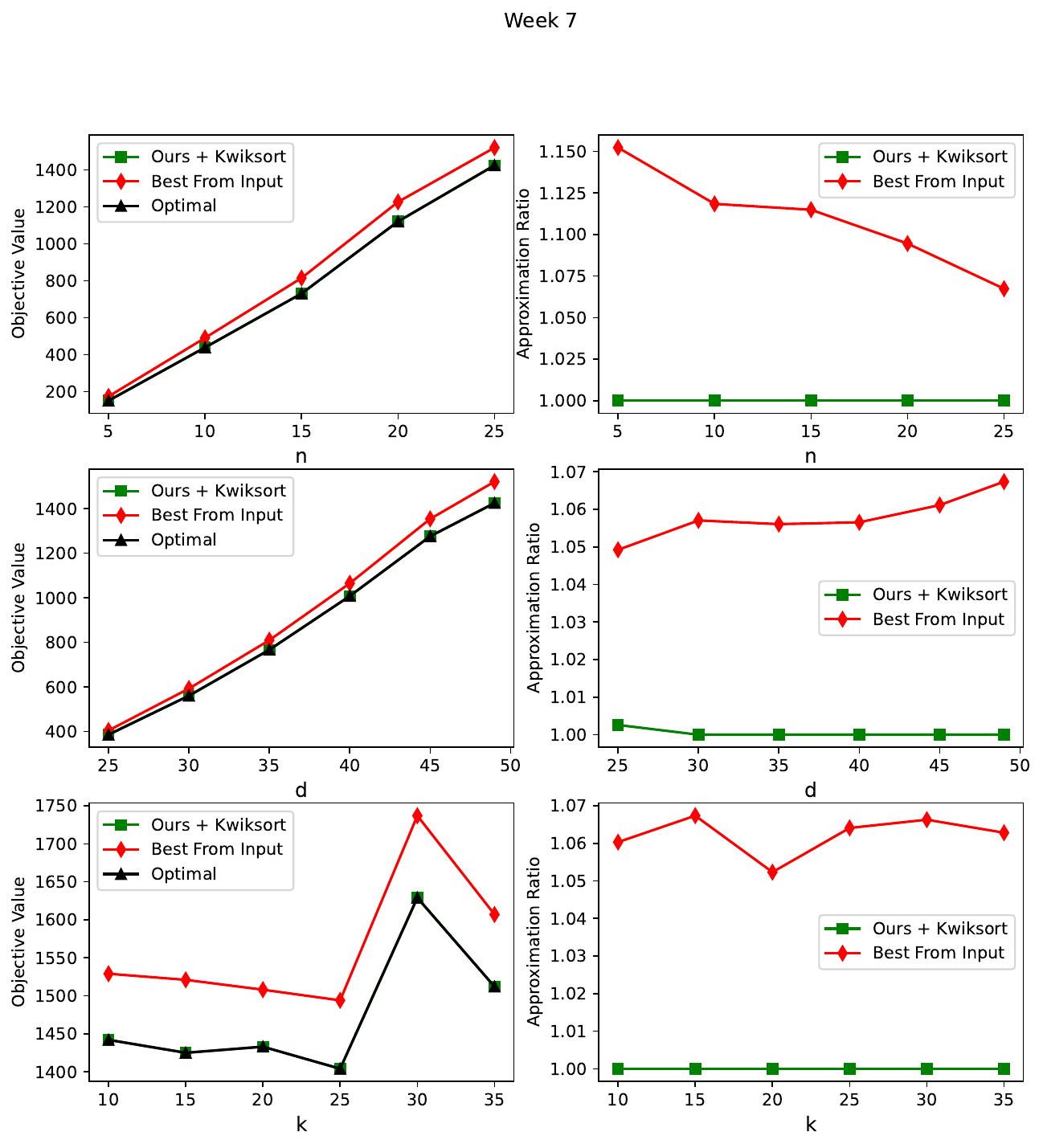}
\end{figure}

\begin{figure}[h]
    \centering
    \includegraphics[width = 0.5\textwidth]{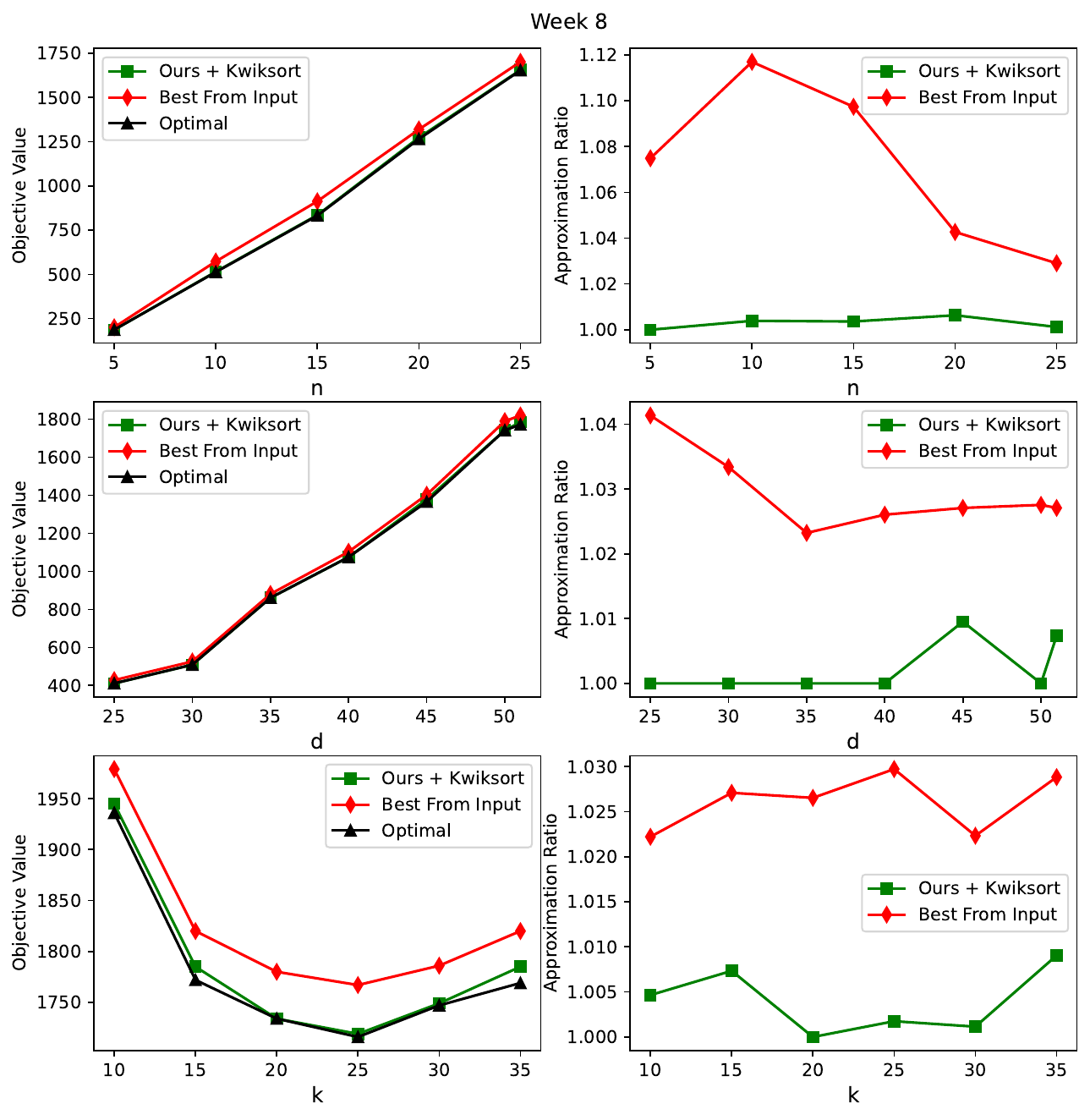}\\
    \includegraphics[width = 0.5\textwidth]{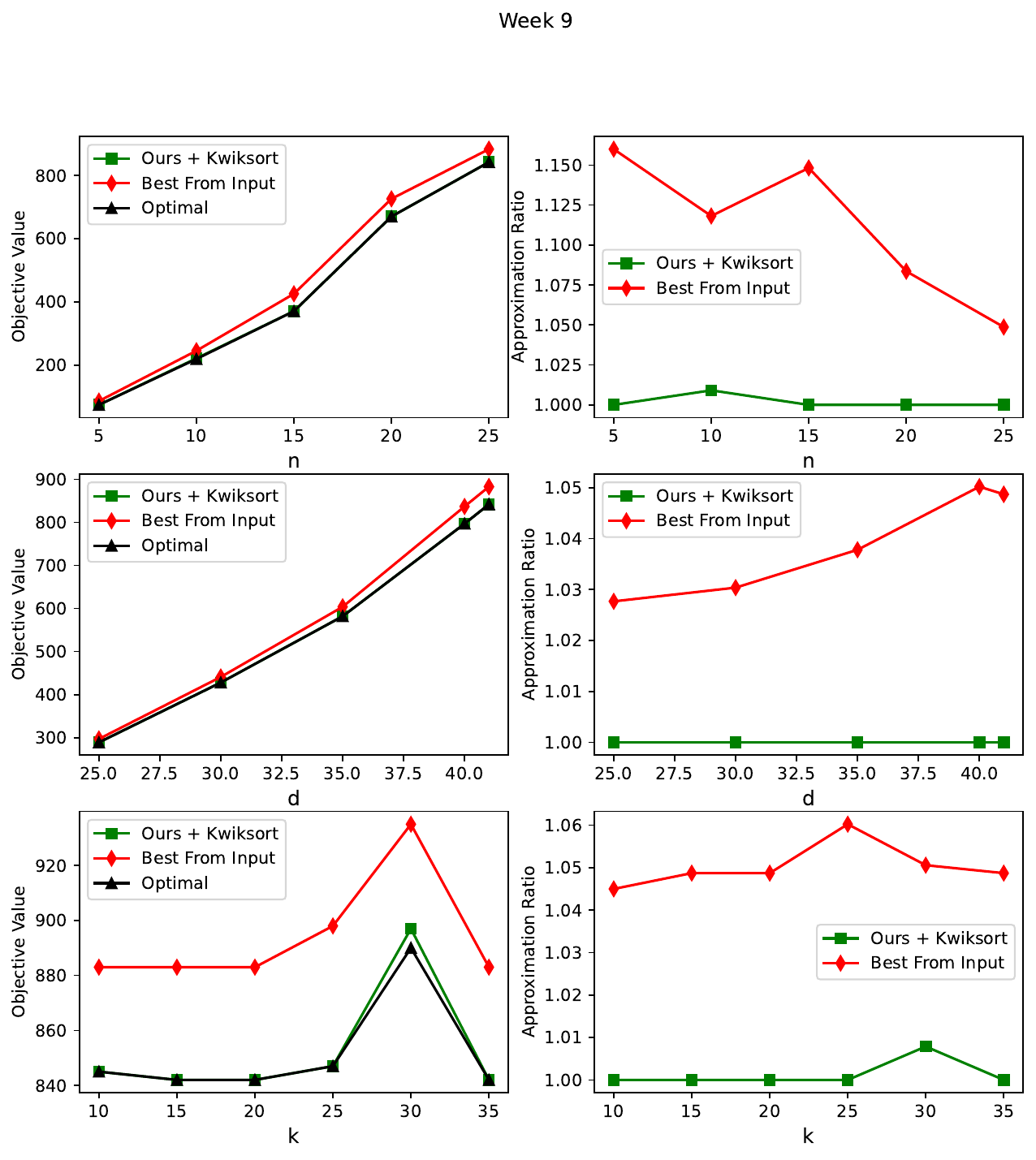}
\end{figure}

\begin{figure}[h]
    \centering
    \includegraphics[width = 0.5\textwidth]{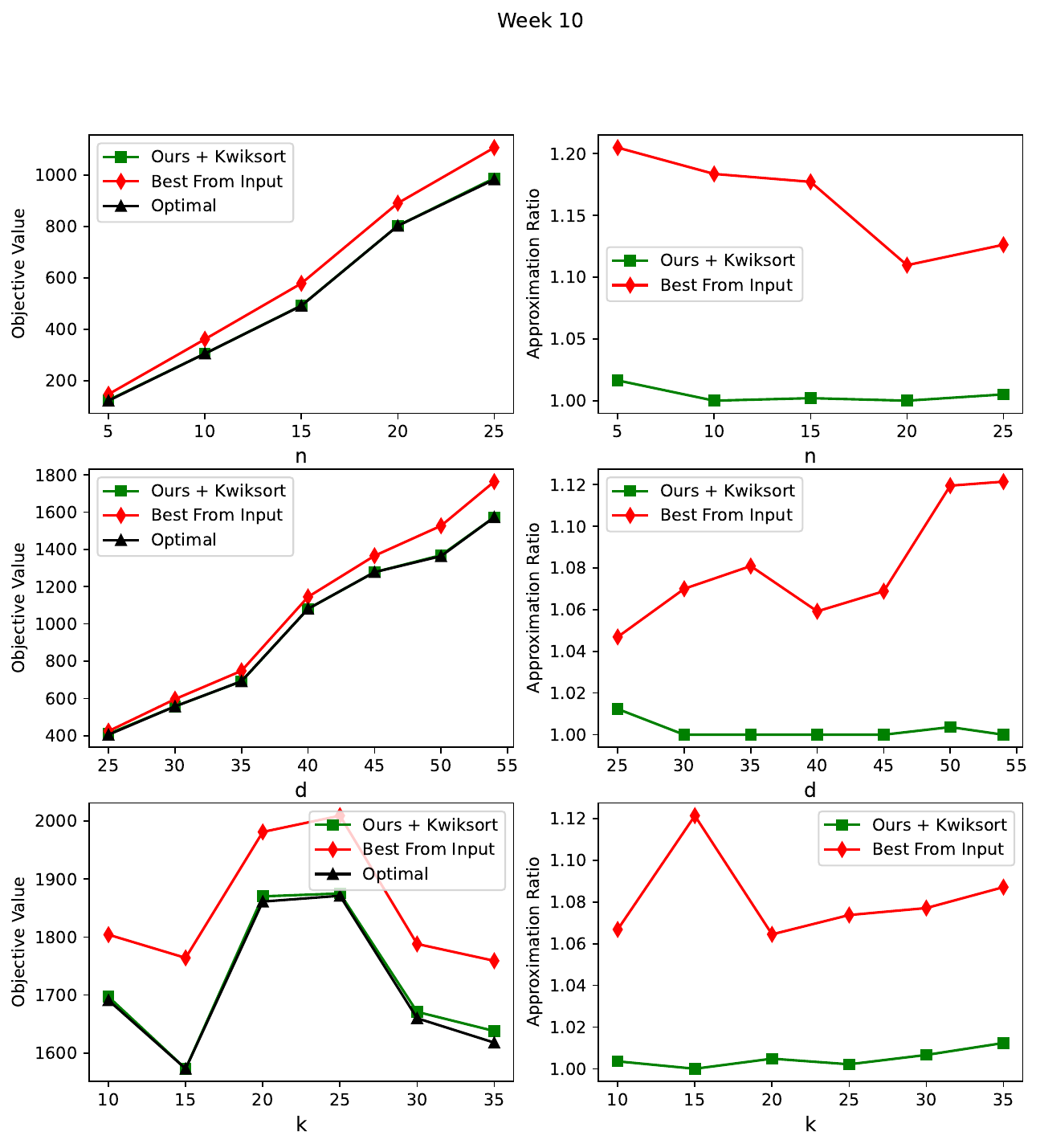}\\
    \includegraphics[width = 0.5\textwidth]{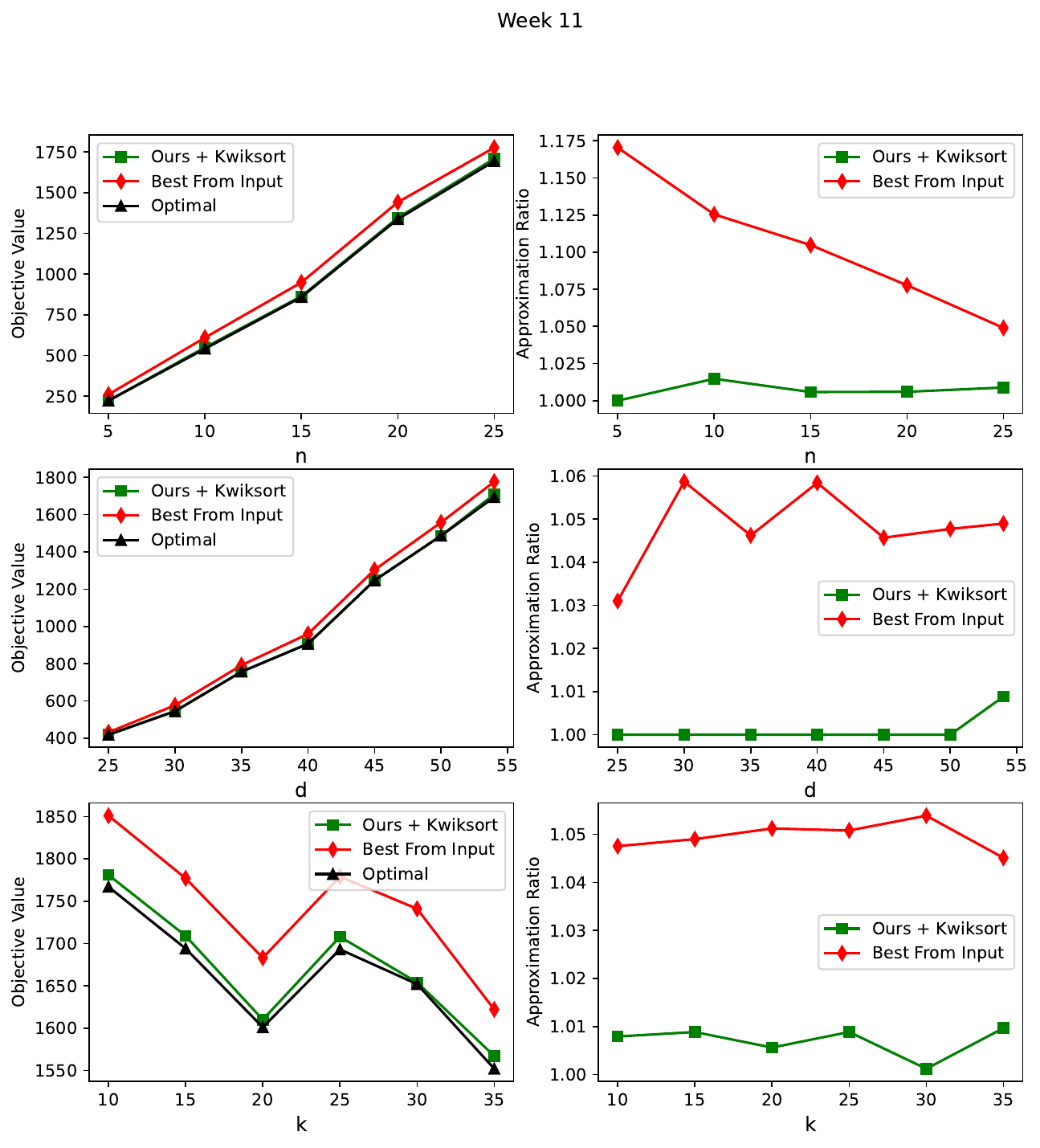}
\end{figure}
\begin{figure}[h]
    \centering
    \includegraphics[width = 0.5\textwidth]{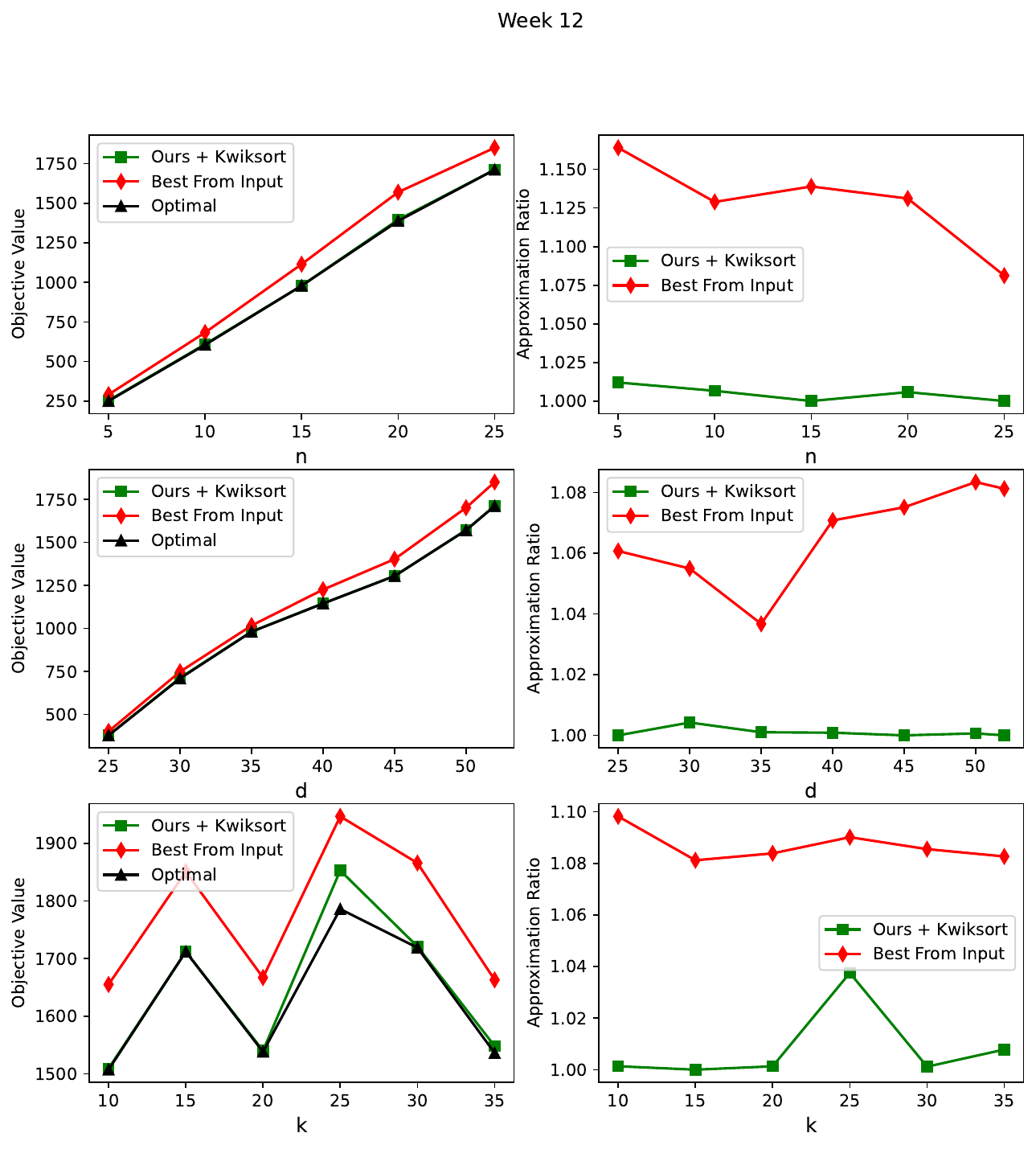}\\
    \includegraphics[width = 0.5\textwidth]{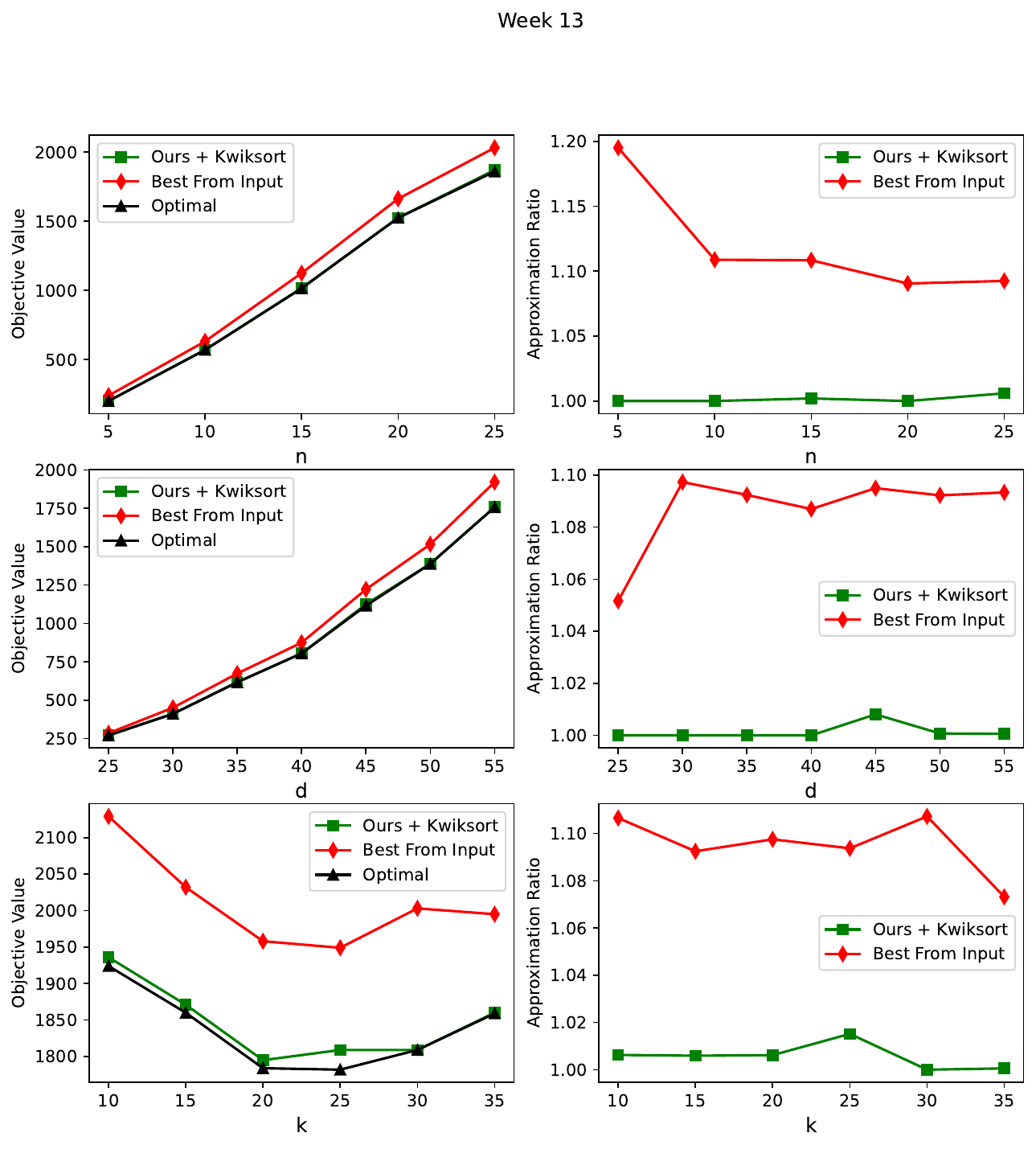}
\end{figure}
\begin{figure}[h]
    \centering
    \includegraphics[width = 0.5\textwidth]{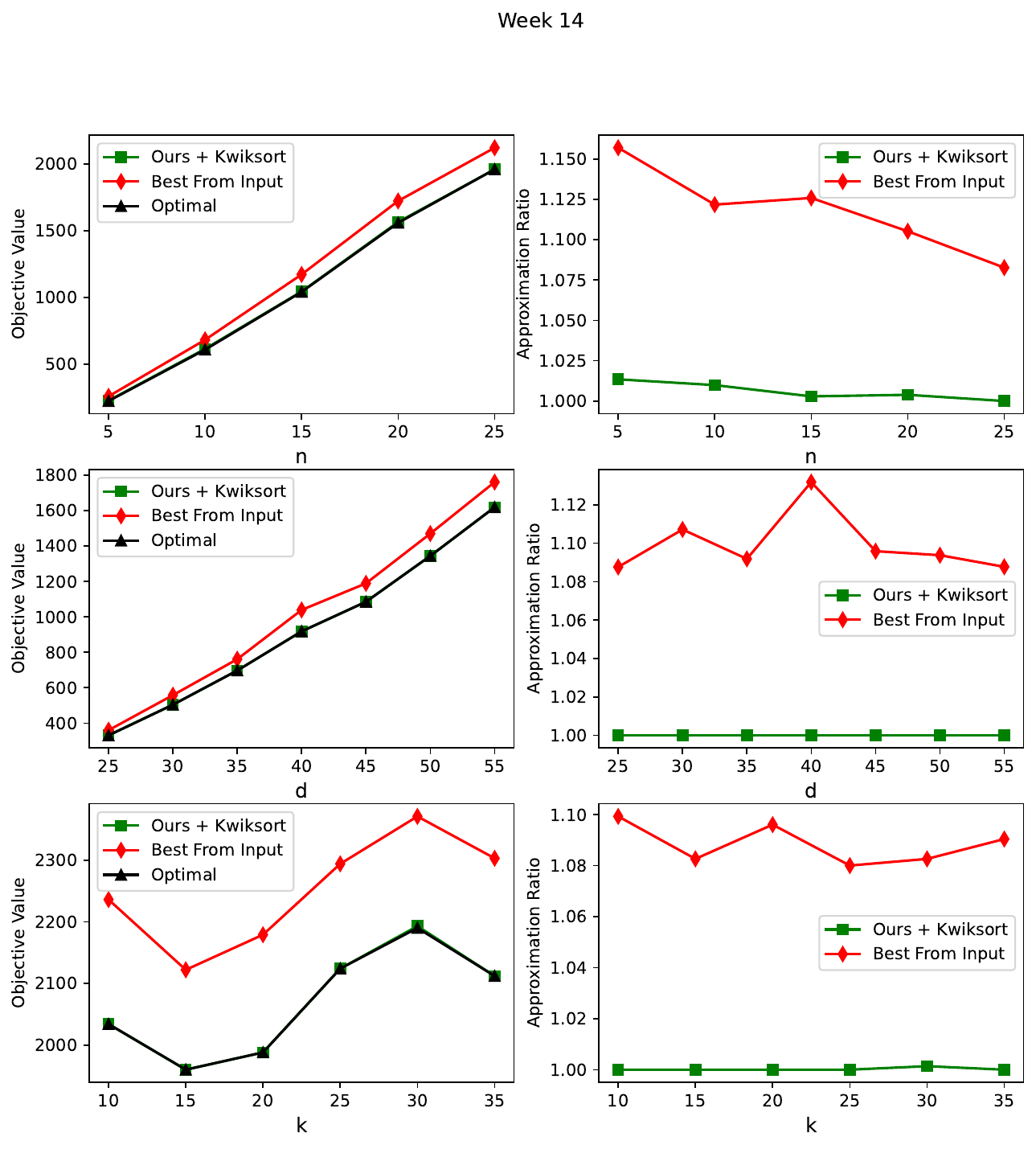}\\
    \includegraphics[width = 0.5\textwidth]{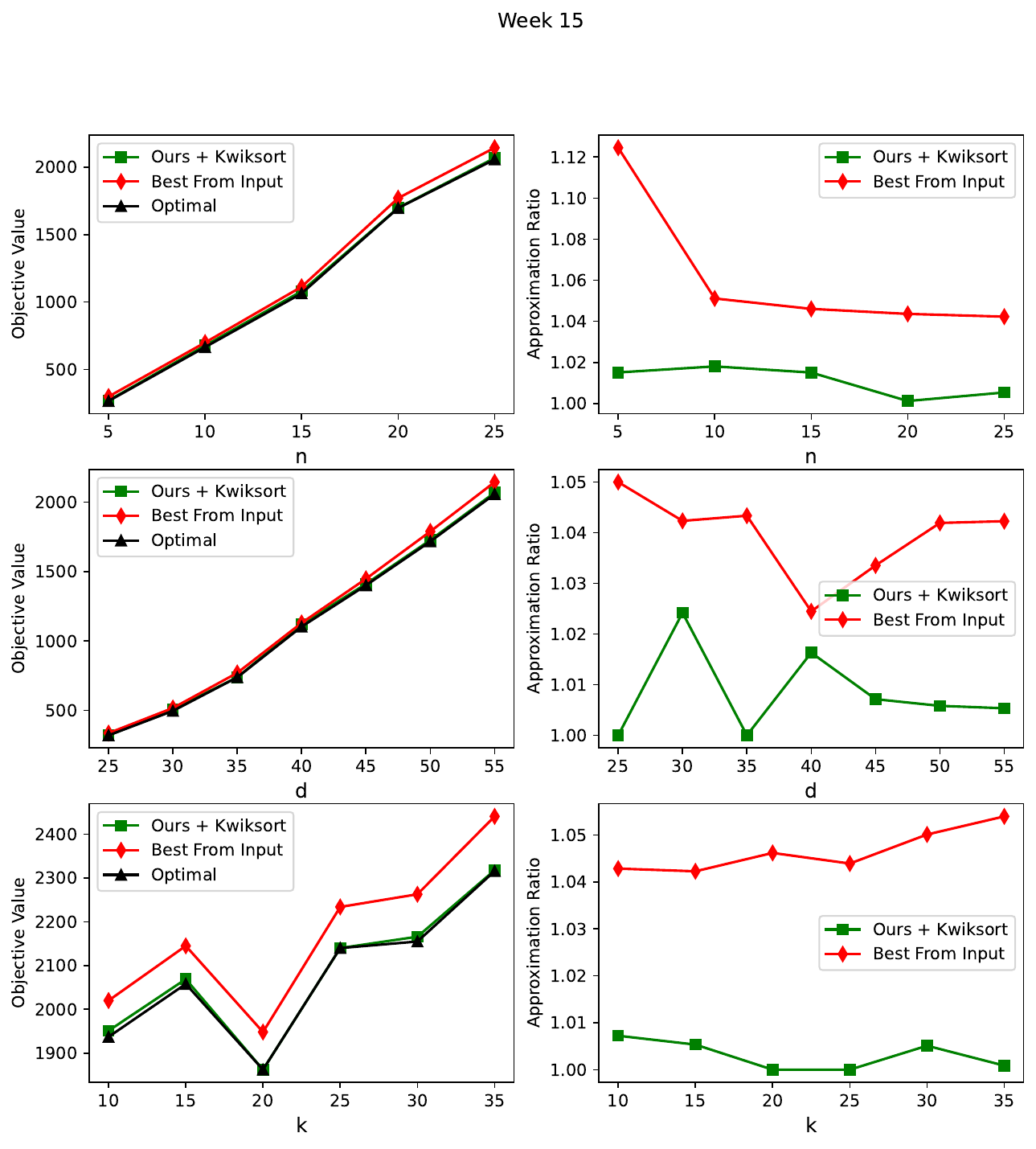}
\end{figure}
    
\begin{figure}[h]
    \centering
    \includegraphics[width = 0.5\textwidth]{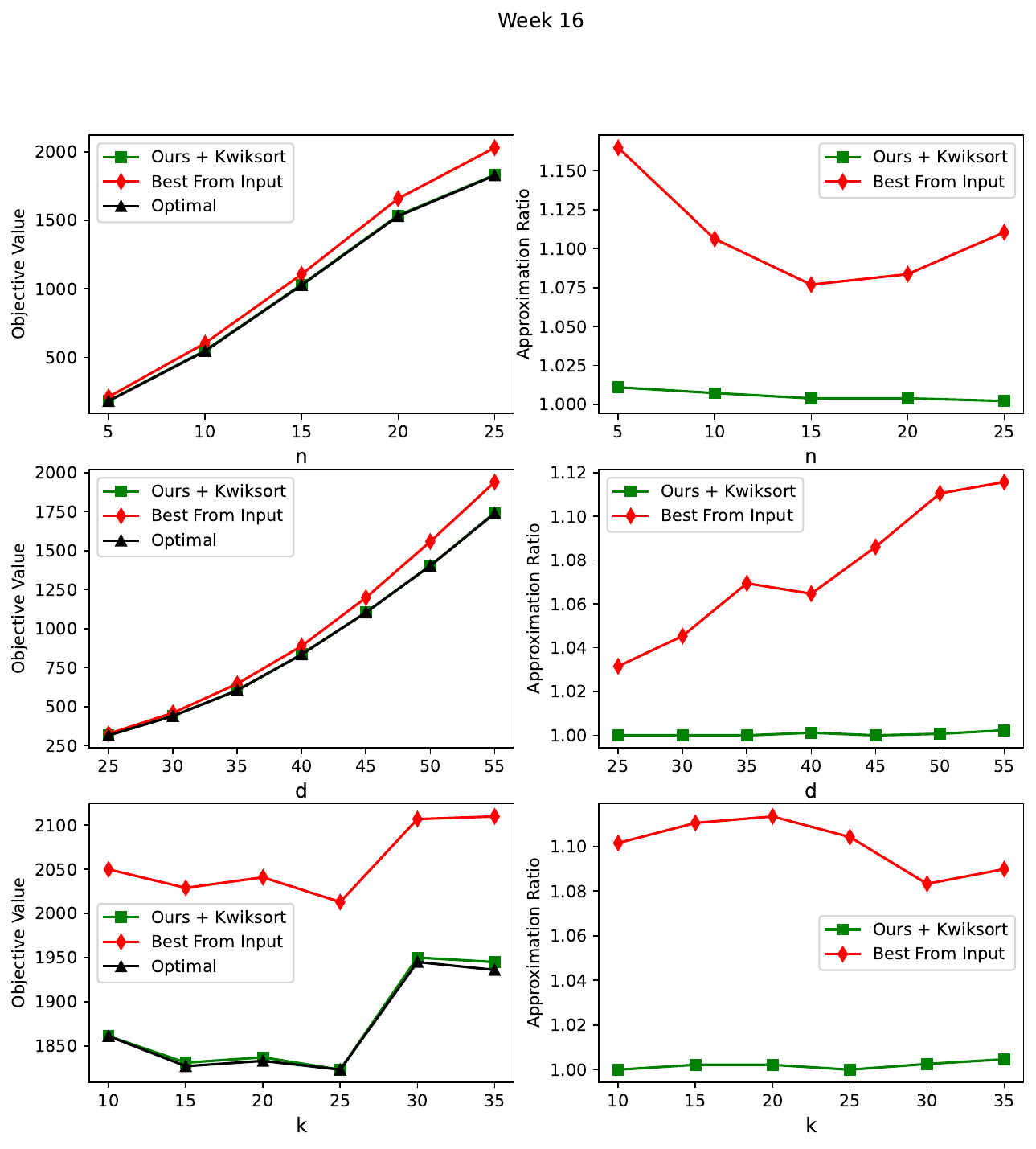}
    \caption{\textbf{Football dataset}. The input instance is the title of the set of plots. The x-axis indicates the value of the parameter ($n$, $d$ or $k$). The y-axis indicates the objective value of each output ranking on the left figure, with the corresponding approximation ratio on the right figure.}
    \label{fig:football-all}
\end{figure}

Further, we plot the experiments where we vary $n$ and $d$ for different values of the parameter $k$. In our paper, we plot only for $k = 15$ for experiments on the football dataset. Here, we plot for $k = 10$ and $k = 20$ for the same dataset when varying $n$ in Figure~\ref{fig:football-4-n} and when varying $d$ in Figure~\ref{fig:football-4-d}. Overall, we see that the results for different values of $k$ are similar. Note that for some instances, the number of elements is less than 55. In such cases, the final value of $d$ used in the experiment is equal to the total number of players. We also do the same for the Movielens dataset. We perform experiments for $k = 40$ and $k = 50$ when varying the values of $n$ and $d$ for the Movielens dataset as well, with the results placed in Figure~\ref{fig:movielens-othernd}. For all such experiments, we see that the performance of the algorithms remain consistent for different values of $k$ for these datasets.

\begin{figure}[h]
    \centering
    \includegraphics[width = 0.5\textwidth]{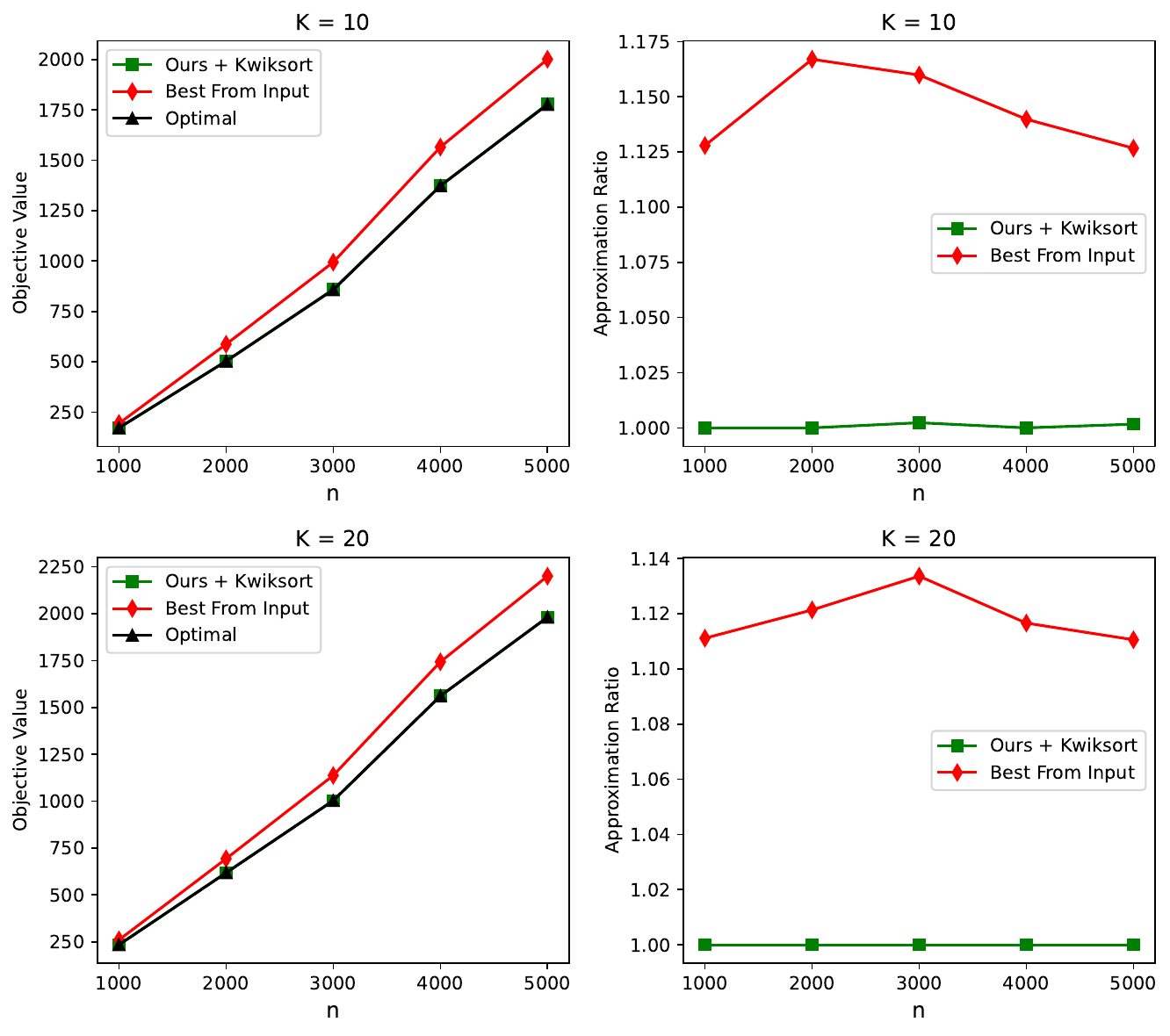}
    \caption{\textbf{Football dataset week 4}. Additional experiments varying $n$. The x-axis indicates the value of the parameter $n$. The y-axis indicates the objective value of each output ranking on the left figure, with the corresponding approximation ratio on the right figure.}
    \label{fig:football-4-n}
\end{figure}

\begin{figure}[h]
    \centering
    \includegraphics[width = 0.5\textwidth]{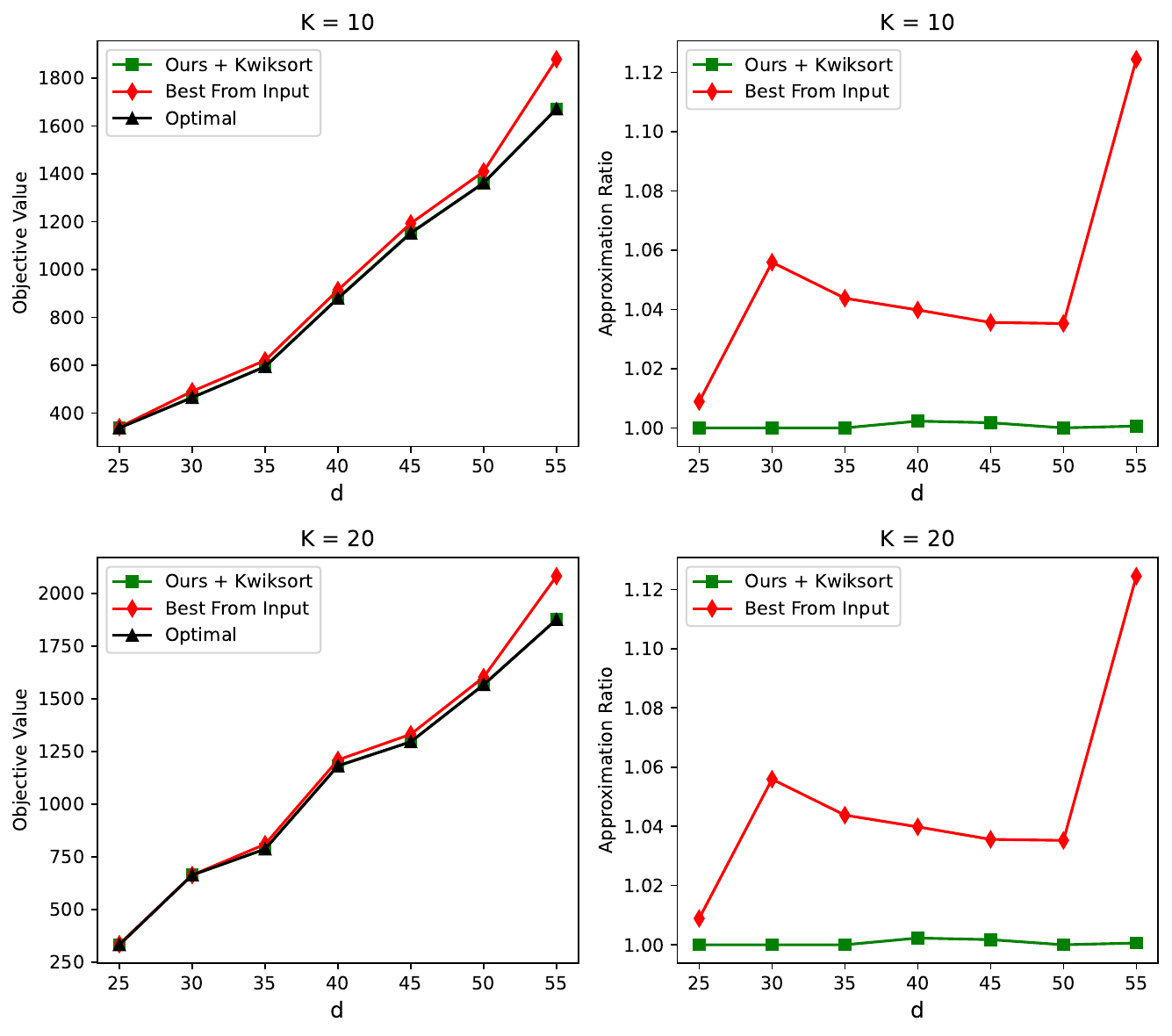}
    \caption{\textbf{Football dataset week 4}. Additional experiments varying $d$. The x-axis indicates the value of the parameter $d$. The y-axis indicates the objective value of each output ranking on the left figure, with the corresponding approximation ratio on the right figure.}
    \label{fig:football-4-d}
\end{figure}

\begin{figure}[h]
    \centering
    \includegraphics[width = 0.5\textwidth]{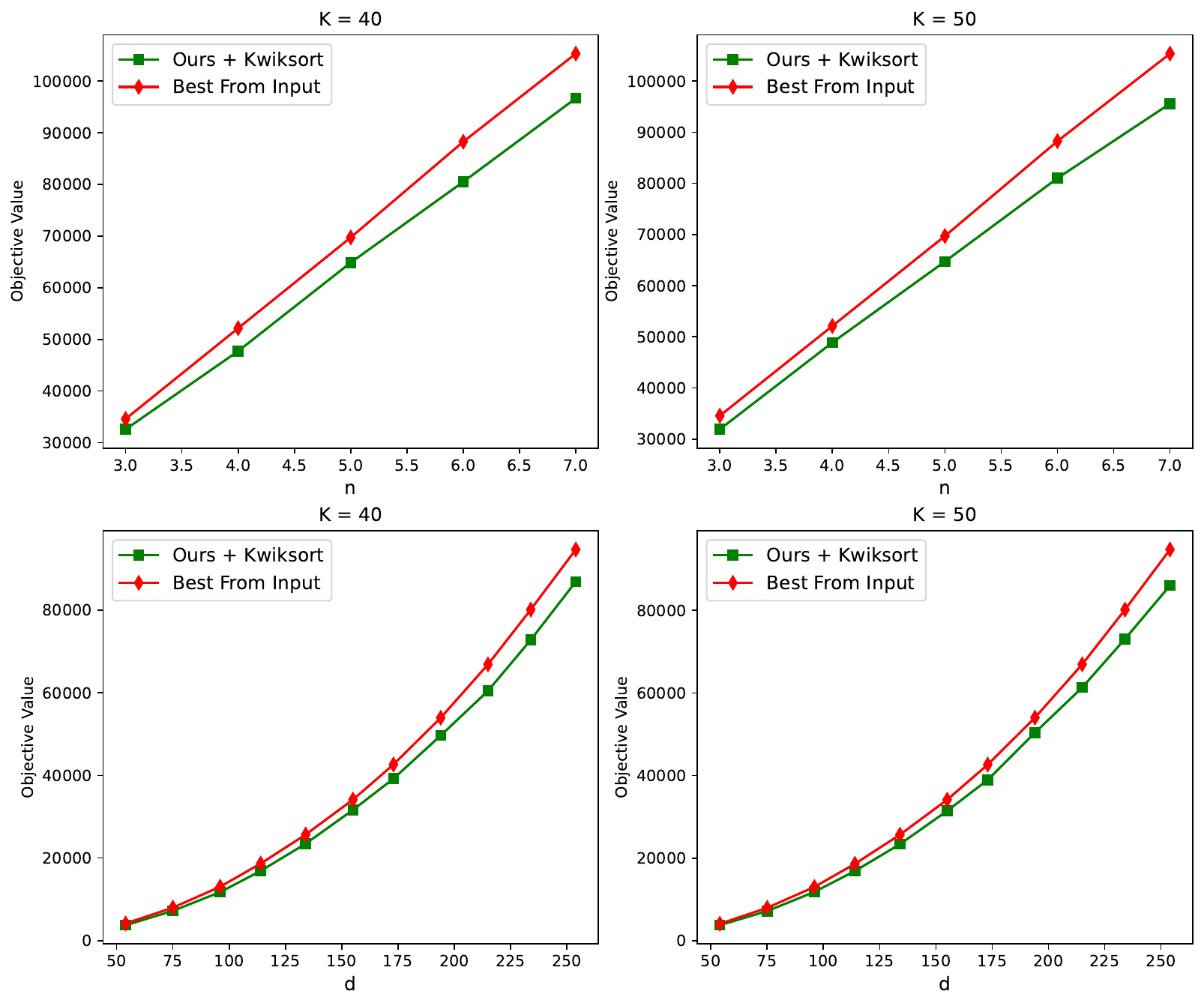}
    \caption{\textbf{Movielens dataset}. Additional experiments. The x-axis indicates the value of the parameter ($n$ or $d$). The y-axis indicates the objective value of each output ranking. Note that the two plots on the left are for $k=40$, and the two on the right for $k=50$.}
    \label{fig:movielens-othernd}
\end{figure}

We also perform experiments where we vary the values of $\bar{\alpha}$ to study the performance of our algorithm when the fairness parameter is varied. We do this for week 4 of the football dataset as well as the reduced Movielens dataset. The results for the football dataset are shown in figure~\ref{fig:football-alpha}. Here, we keep $\alpha_2 = 0.4$ and vary the value of $\alpha_1$, as this showed more interesting results with the objective value of the optimal fair ranking increasing as the fairness lower bound increased. We tested with both $k = 20$ and $k = 30$. The results for the experiments on the reduced Movielens dataset are in Figure~\ref{fig:movielens-alpha-10} and Figure~\ref{fig:movielens-alpha-20}. For these experiments, we first set $\alpha_i = 0.1$ for all $i \in [4]$, then pick one group $j$ to progressively increase the value of $\alpha_j$. As the value of $\alpha_j$ is increased, for each experiment, the objective value of the optimal fair ranking increases as expected. We tested with both $k = 10$ and $k = 20$. Even so, the performance of all the algorithms is quite consistent. Our algorithm still performs significantly better in all cases, showcasing that, in practice, it is able to choose top-$k$ elements that can lead to close to optimal results. 

\begin{figure}[h]
    \centering
    \includegraphics[width = 0.5\textwidth]{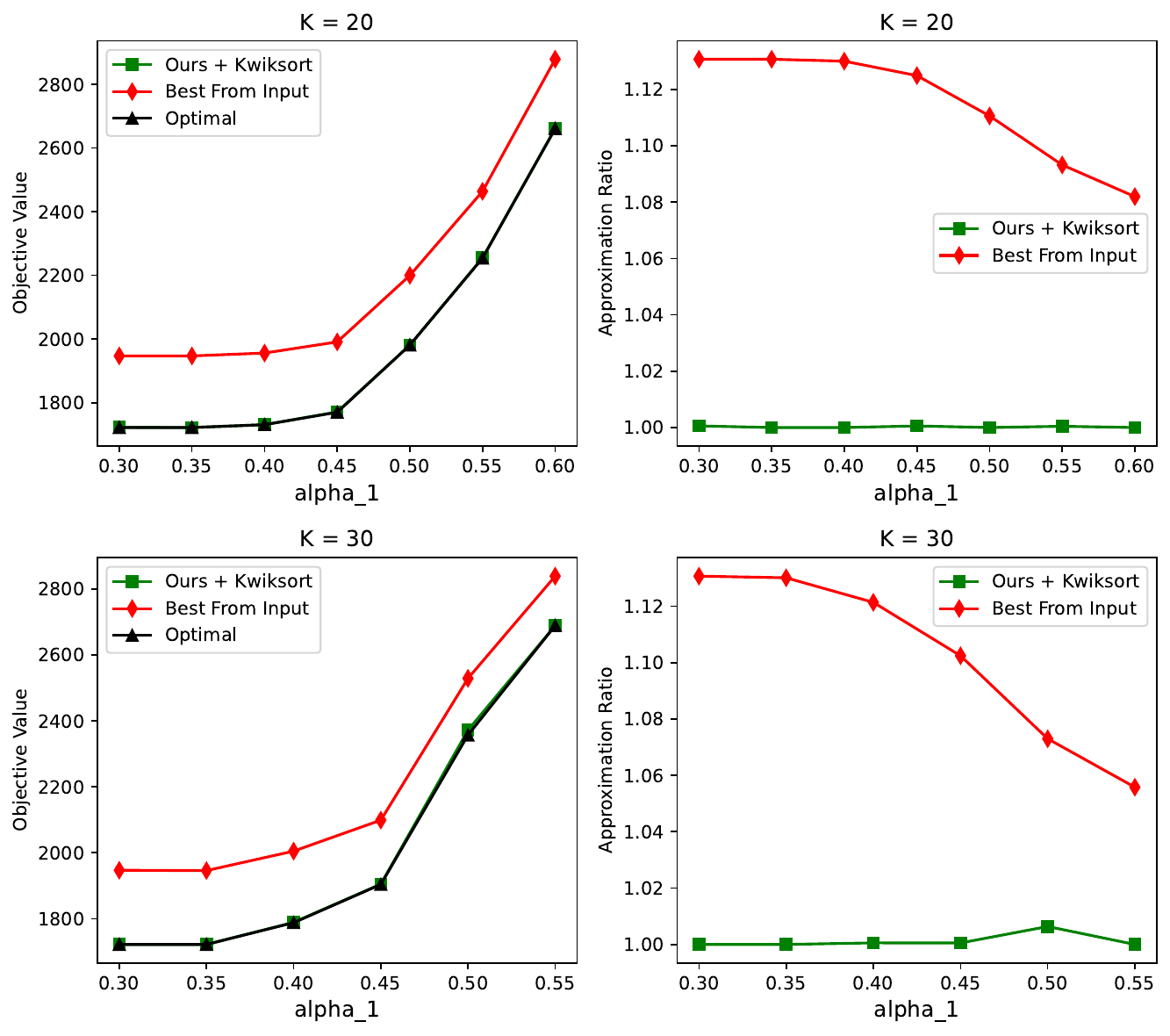}
    \caption{\textbf{Football dataset week 4}. Additional experiments varying $\alpha_1$. The x-axis indicates the value of the parameter $\alpha_1$. The y-axis indicates the objective value of each output ranking on the left figure, with the corresponding approximation ratio on the right figure.}
    \label{fig:football-alpha}
\end{figure}

\begin{figure}[h]
    \centering
    \includegraphics[width = 0.5\textwidth]{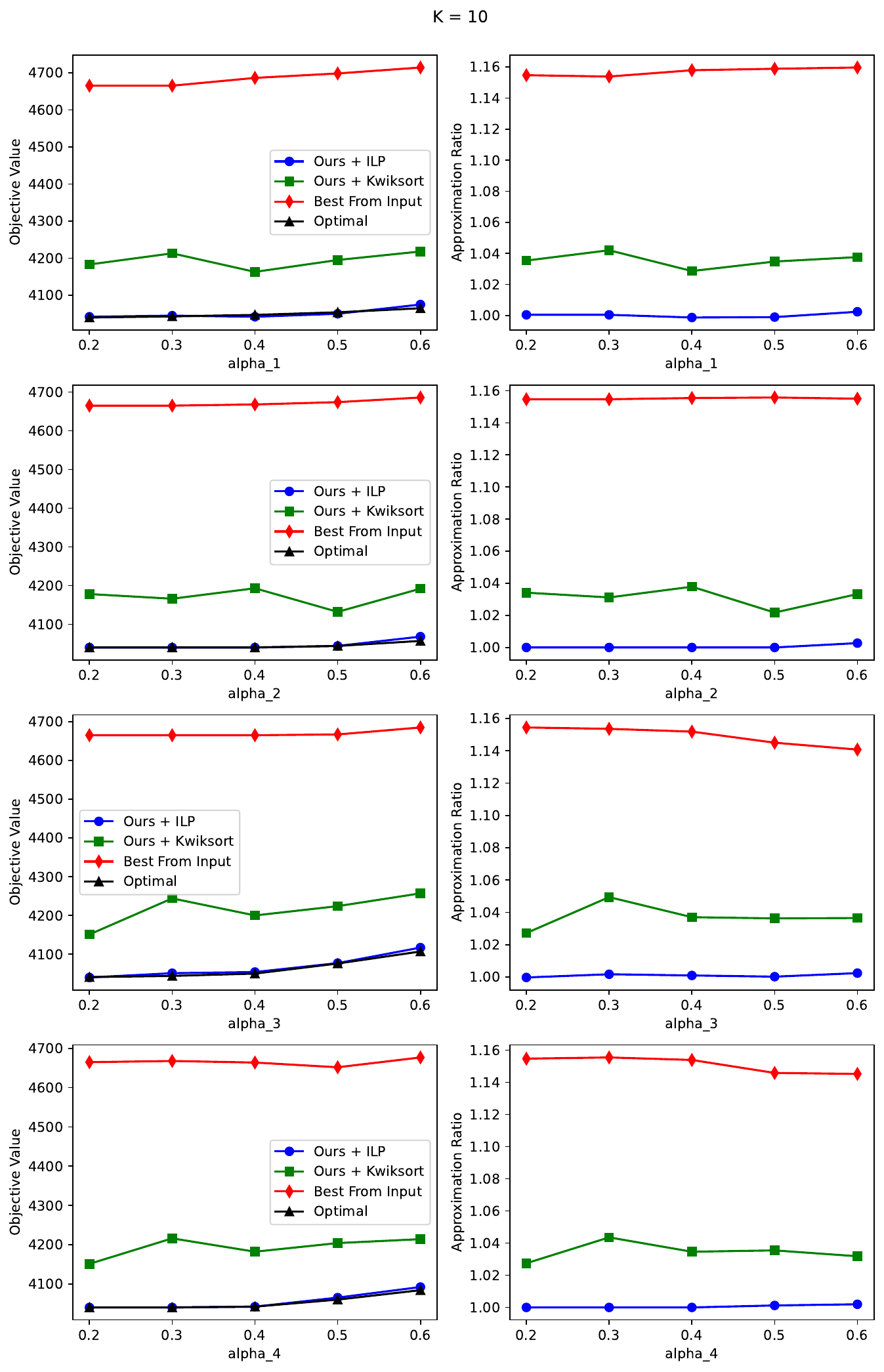}
    \caption{\textbf{Reduced Movielens dataset}. Additional experiments varying values of $\alpha_i$, $k = 10$. The x-axis indicates the value of the parameter $\alpha_i$. The y-axis indicates the objective value of each output ranking on the left figure, with the corresponding approximation ratio on the right figure.}
    \label{fig:movielens-alpha-10}
\end{figure}

\begin{figure}[h]
    \centering
    \includegraphics[width = 0.5\textwidth]{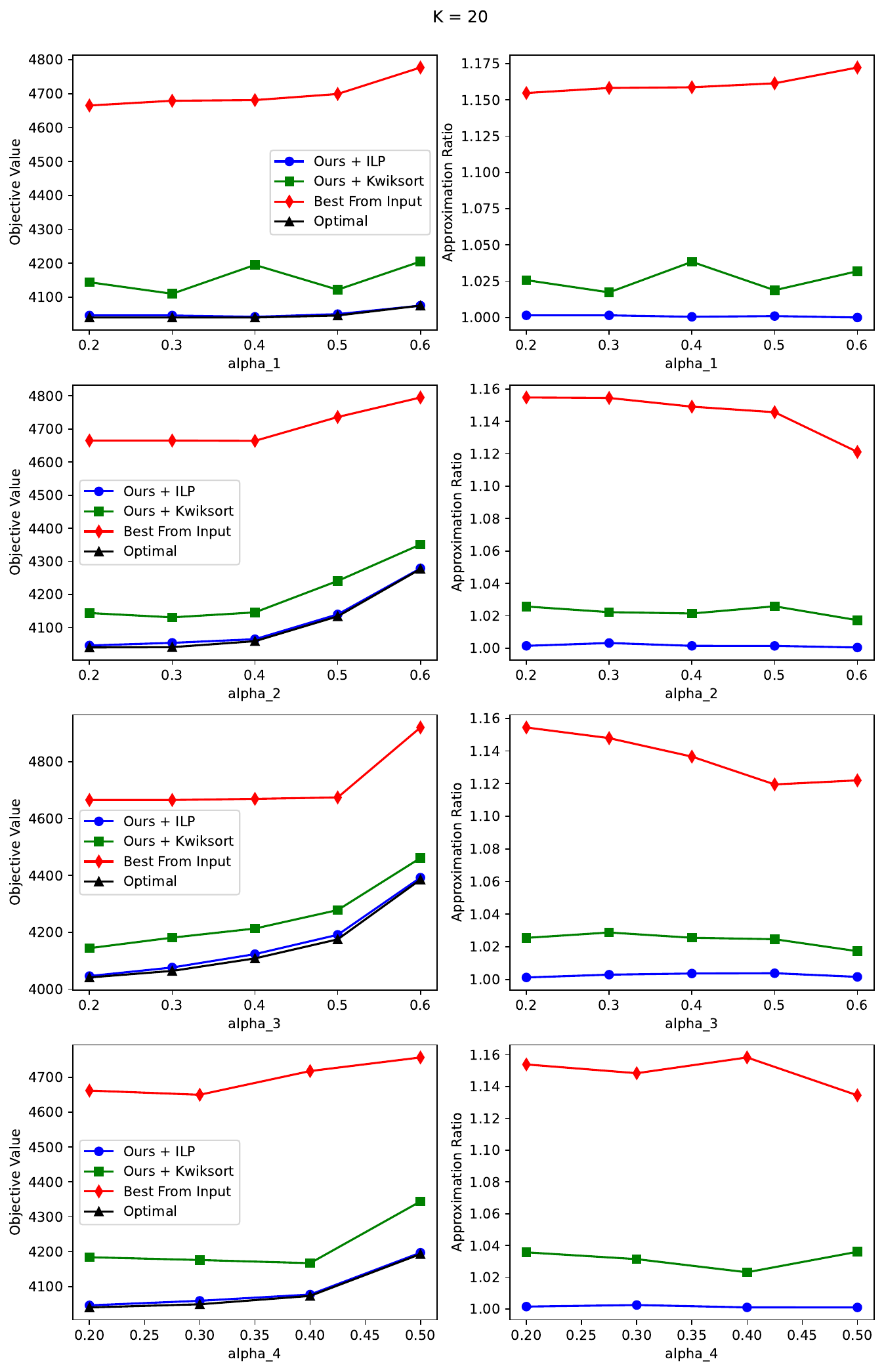}
    \caption{\textbf{Reduced Movielens dataset}. Additional experiments varying values of $\alpha_i$, $k = 20$. The x-axis indicates the value of the parameter $\alpha_i$. The y-axis indicates the objective value of each output ranking on the left figure, with the corresponding approximation ratio on the right figure.}
    \label{fig:movielens-alpha-20}
\end{figure}

We also take note of the running time of two implementations of our algorithm to solve each instance for various values of $k$ and note them in the following table. The values of $\bar{\alpha}, \bar{\beta}$ are set to be equal to the proportion of its size with respect to the group. As expected, the implementation, which uses Integer Linear Programming to solve rank aggregation optimally on each partition, is relatively slow and scales poorly as the input size increases. Implementing the Kwiksort algorithm to solve it approximately solves the instances much faster while still finding a solution to the optimal. This makes it more suitable in practice for fair rank aggregation. The time taken is the average of 5 runs due to variance in the frequency behavior of the processor. The list of running times is listed in Table~\ref{table-runtime}.

\begin{table*}[t]
\centering
\begin{tabular}{c|c|c}

    Input Instance and parameters & Ours + ILP & Ours + Kwiksort \\
    \hline
    Football Week 1, $k = 10$ & 90.3 & 0.0549 \\
    Football Week 1, $k = 15$ & 65.2 & 0.0534 \\
    Football Week 1, $k = 20$ & 41.1 & 0.0511 \\
    Football Week 2, $k = 10$ & 85.1 & 0.0554 \\
    Football Week 2, $k = 15$ & 53.1 & 0.0524 \\
    Football Week 2, $k = 20$ & 36.4 & 0.0516 \\
    Football Week 3, $k = 10$ & 113 & 0.0677 \\
    Football Week 3, $k = 15$ & 92.5 & 0.0594 \\
    Football Week 3, $k = 20$ & 69.4 & 0.0551 \\
    Football Week 4, $k = 10$ & 157 & 0.0585 \\
    Football Week 4, $k = 15$ & 88.7 & 0.0558 \\
    Football Week 4, $k = 20$ & 51.4 & 0.0527 \\
    Football Week 5, $k = 10$ & 99.5 & 0.0527 \\
    Football Week 5, $k = 15$ & 59.2 & 0.0515 \\
    Football Week 5, $k = 20$ & 33.0 & 0.0493 \\
    Football Week 6, $k = 10$ & 161 & 0.0549 \\
    Football Week 6, $k = 15$ & 82.5 & 0.0520 \\
    Football Week 6, $k = 20$ & 49.8 & 0.0508 \\
    Football Week 7, $k = 10$ & 53.6 & 0.0462 \\
    Football Week 7, $k = 15$ & 30.4 & 0.0436 \\
    Football Week 7, $k = 20$ & 20.0 & 0.0452 \\
    Football Week 8, $k = 10$ & 74.0 & 0.0506 \\
    Football Week 8, $k = 15$ & 43.3 & 0.0473 \\
    Football Week 8, $k = 20$ & 26.6 & 0.0460 \\
    Football Week 9, $k = 10$ & 33.1 & 0.0418 \\
    Football Week 9, $k = 15$ & 12.7 & 0.0338 \\
    Football Week 9, $k = 20$ & 11.1 & 0.0329 \\
    Football Week 10, $k = 10$ & 148 & 0.0534 \\
    Football Week 10, $k = 15$ & 82.0 & 0.0500 \\
    Football Week 10, $k = 20$ & 41.3 & 0.0495 \\
    Football Week 11, $k = 10$ & 151 & 0.0534 \\
    Football Week 11, $k = 15$ & 91.2 & 0.0501 \\
    Football Week 11, $k = 20$ & 56.3 & 0.0488 \\
    Football Week 12, $k = 10$ & 129 & 0.0533 \\
    Football Week 12, $k = 15$ & 71.2 & 0.0472 \\
    Football Week 12, $k = 20$ & 33.4 & 0.0471 \\
    Football Week 13, $k = 10$ & 168 & 0.0592 \\
    Football Week 13, $k = 15$ & 102 & 0.0564 \\
    Football Week 13, $k = 20$ & 59.2 & 0.0557 \\
    Football Week 14, $k = 10$ & 267 & 0.0673 \\
    Football Week 14, $k = 15$ & 166 & 0.0629 \\
    Football Week 14, $k = 20$ & 87.1 & 0.0606 \\
    Football Week 15, $k = 10$ & 179 & 0.0628 \\
    Football Week 15, $k = 15$ & 95.6 & 0.0538 \\
    Football Week 15, $k = 20$ & 57.2 & 0.0540 \\
    Football Week 16, $k = 10$ & 166 & 0.0617 \\
    Football Week 16, $k = 15$ & 103 & 0.0550 \\
    Football Week 16, $k = 20$ & 62.6 & 0.0549 \\
    Movielens, $k = 30$ & - & 0.242 \\
    Movielens, $k = 40$ & - & 0.233 \\
    Movielens, $k = 50$ & - & 0.236 \\
    Reduced Movielens, $k = 10$ & 148 & .0125 \\
    Reduced Movielens, $k = 15$ & 95.1 & .0128 \\
    Reduced Movielens, $k = 20$ & 55.7 & .0121
\end{tabular}
\caption{Runtime of implemented algorithms across datasets. All runtimes are in seconds.}
\label{table-runtime}
\end{table*}

\end{appendix}
\end{document}